\documentclass[10pt, aps, english, prx, superscriptaddress, twocolumn,
notitlepage, nofootinbib, longbibliography, 
floatfix]{revtex4-2}

\usepackage{graphicx}
\usepackage{dcolumn}
\usepackage{bm}
\usepackage{bbm}
\usepackage{amsthm} 
\usepackage{diagbox}
\usepackage{braket}
\usepackage{amsmath}
\usepackage{enumerate}
\usepackage{subfigure}
\usepackage{epstopdf}

\usepackage[colorlinks, linkcolor=NavyBlue, anchorcolor=NavyBlue, citecolor=blue]{hyperref}

\usepackage{amssymb}
\usepackage{framed}
\usepackage{tabularx}
\usepackage[normalem]{ulem}
\usepackage{booktabs}
\usepackage[dvipsnames]{xcolor}
\usepackage{array}
\usepackage{soul}
\usepackage{tikz}
\usepackage{pgfplots}
\pgfplotsset{compat=1.15}
\hypersetup{colorlinks=true}
\usetikzlibrary{positioning}
\usetikzlibrary{decorations,arrows}
\usetikzlibrary{decorations.pathreplacing}
\usetikzlibrary{calc, shapes.geometric}
\usetikzlibrary{chains, scopes, positioning, backgrounds, shapes, fit, shadows, calc, arrows.meta}
\tikzset{>=stealth}
\usepackage{color}
\usepackage{wasysym}
\usepackage{mathtools}
\usepackage{comment}
\usepackage{bbold}

\usetikzlibrary{through,hobby}
\usetikzlibrary{decorations.markings}
\usetikzlibrary{fadings,shadings}
\usetikzlibrary{plotmarks}
\definecolor{darkblue}{rgb}{0.,0.,0.4}
\definecolor{darkred}{rgb}{0.5,0.,0.}
\definecolor{BlueViolet}{RGB}{138,43,226}
\definecolor{SkyBlue}{RGB}{30,144,255}
\definecolor{DarkGreen}{RGB}{0,150,0}
\definecolor{DarkYellow}{RGB}{0,1,1}
\usetikzlibrary{intersections}
\definecolor{iro100}{cmyk}{1,0,0,0}
\definecolor{iro90}{cmyk}{.9,0,0,0}
\definecolor{iro80}{cmyk}{.8,0,0,0}
\definecolor{iro60}{cmyk}{0,.6,0,0}
\definecolor{iro10}{cmyk}{0,.1,0,0}

\newtheorem{theorem}{Theorem}[section]

\newtheorem{lemma}[theorem]{Lemma}
\newtheorem{corollary}[theorem]{Corollary}

\theoremstyle{definition}
\newtheorem{definition}[theorem]{Definition}

\newcommand{\E}{\mathcal{E}}
\newcommand{\1}{\text{\uppercase\expandafter{\romannumeral1}}}
\newcommand{\2}{\text{\uppercase\expandafter{\romannumeral2}}}
\newcommand{\3}{\text{\uppercase\expandafter{\romannumeral3}}}
\newcommand{\4}{\text{\uppercase\expandafter{\romannumeral4}}}
\newcommand{\5}{\text{\uppercase\expandafter{\romannumeral5}}}
\newcommand{\6}{\text{\uppercase\expandafter{\romannumeral6}}}

\newcommand{\norm}[1]{\left\lVert#1\right\rVert}
\newcommand{\defeq}{\vcentcolon=}
\newcommand{\eqdef}{=\vcentcolon}

\newcommand{\op}[1]{\ket{#1}\!\bra{#1}}
\newcommand{\ketbra}[2]{\mathinner{\ket{#1}\!\bra{#2}}}
\newcommand{\kett}[1]{\mathinner{\ket{#1}\!\rangle}}

\newcommand{\one}{\mathbb{1}}
\newcommand{\R}{\mathbb{R}}
\newcommand{\Z}{\mathbb{Z}}
\newcommand{\C}{\mathbb{C}}

\newcommand{\OO}{\mathcal{O}}
\newcommand{\Hilb}{\mathcal{H}}

\DeclareMathOperator{\Tr}{Tr}

\DeclareMathOperator{\supp}{supp}

\DeclareMathOperator{\dist}{dist}

\begin{document}
\title{Local Strong-to-Weak Spontaneous Symmetry Breaking}

\author{Francisco Divi}\email{fdivi@pitp.ca}
\affiliation{Perimeter Institute for Theoretical Physics, Waterloo, Ontario N2L 2Y5, Canada}
\affiliation{Department of Physics and Astronomy, University of Waterloo, Waterloo, Ontario N2L 3G1, Canada}

\author{Leonardo A. Lessa}\email{llessa.physics@pm.me}
\affiliation{Perimeter Institute for Theoretical Physics, Waterloo, Ontario N2L 2Y5, Canada}
\affiliation{Department of Physics and Astronomy, University of Waterloo, Waterloo, Ontario N2L 3G1, Canada}

\author{Chong Wang}\email{cwang4@pitp.ca}
\affiliation{Perimeter Institute for Theoretical Physics, Waterloo, Ontario N2L 2Y5, Canada}

\begin{abstract}

We propose a local notion of strong-to-weak spontaneous symmetry breaking (SW-SSB), through a local one-point fidelity correlator. Compared with the previous definition in terms of a two-point fidelity correlator, our local formulation offers two key advantages: (1) it is easier to detect in large systems: for a system of size $N$ and with ${\rm poly}(N)$ amount of resources, one can detect the local fidelity order up to volume scale $O(\log(N))$; and (2) the local SW-SSB order remains well defined in the thermodynamic limit, where the density matrix itself is not well defined. We show that key features of SW-SSB, including stability under finite-depth symmetric channels and long-range conditional mutual information, persist within this local framework. Our definition is conceptually analogous to local thermalization, as exemplified by pure states obeying the eigenstate thermalization hypothesis (ETH). For critical states, the local one-point fidelity correlator defines an interesting class of defect problems. We demonstrate the applicability of the local formulation through several concrete examples, and derive the universal scaling behavior of the local fidelity correlator in a range of critical systems, including ground states of conformal field theories as well as ballistic and diffusive free-fermion metals.
\end{abstract}

\maketitle

\tableofcontents

\section{Introduction}
Open quantum systems display novel phenomena with no direct analogue in closed systems, and have therefore emerged as a rich yet conceptually challenging setting. Part of this richness stems from a fundamental distinction in how symmetries act on mixed states. While the notion of a symmetric state is unambiguous for pure states, ensembles of states can be symmetric in two distinct ways~\cite{BucaProsen2012,AlbertJiang2014,deGroot2022, ma_average_2023}. If $U$ represents a symmetry, a mixed-state $\rho$ is \emph{strongly} symmetric if $U \rho \propto  \rho$, or it is \emph{weakly} symmetric if it only satisfies $U \rho U^\dagger = \rho$. Physically, this distinction reflects that an ensemble can be symmetric even if its constituents carry different charges. For pure states, these two notions coincide, but for mixed states, they differ fundamentally, providing a basic organizing principle for open quantum phases of matter. 

A direct consequence is the emergence of a novel pattern of spontaneous symmetry breaking (SSB), where a strong symmetry is reduced to a weak symmetry without fully breaking the symmetry. The notion of \emph{strong-to-weak spontaneous symmetry breaking} (SW-SSB) has emerged as a unifying framework for mixed-state quantum phases, linking diverse directions ranging from information-theoretic characterizations of phases of matter~\cite{LeeJianXu,Ma2024SPTdoubled,ma_topological_2025,lessa2025strong, sala_spontaneous_2024,ChenGrover,sang2024stability} to topological quantum memory~\cite{zhang_strongweak_2024, RamanjitAbhinav, wang_intrinsic_2025, lessa_higherform_2025, sang_mixedstate_2025} and emergent hydrodynamics~\cite{OgunnaikeFeldmeierLee2023,Moudgalya_2024,GuWangWang2025,huang_hydrodynamics_2025, hauser_strongtoweak_2026}. 

Diagnosing SW-SSB requires probing the internal ensemble structure, and traditional observables linear in the density matrix are sensitive only to the average behavior. Instead, one must consider quantum-information theoretic order parameters, which are nonlinear in the density matrix. A canonical example is the two-point fidelity correlator~\cite{lessa2025strong}
\begin{equation}\label{eq:conventional_fidelity}
    F(\rho; O_x O_y^\dagger) \defeq F(\rho, O_x O_y^{\dagger}\rho O_y O^{\dagger}_x), 
\end{equation}where $O_x$ is a charged operator, and the fidelity between two density matrices $\rho$, $\sigma$ is defined as 
\begin{equation}
    F(\rho, \sigma) \defeq \Tr \sqrt{ \sqrt{\rho} \sigma \sqrt{\rho}}. 
\end{equation}
Physically, the fidelity correlator measures the similarity between the state $\rho$ and the state $\sigma\propto O_x O_y^{\dagger}\rho O_y O^{\dagger}_x$, obtained from $\rho$ by moving a charge from $x$ to $y$. In the pure state limit, this simply becomes the ordinary two-point correlator $|\langle\psi|O_xO_y^\dagger|\psi\rangle|$. In the absence of conventional SSB (diagnosed via the ordinary two-point correlator), a long-range ordered two-point fidelity, $ F(\rho; O_x O_y^\dagger) > 0$ as $|x-y|\to\infty$, defines SW-SSB and characterizes a robust mixed-state phase. Intuitively, a state with SW-SSB is insensitive to moving a charge from one place to another, so the information about the total charge becomes global, i.e. cannot be locally retrieved. For this reason, SW-SSB can also be thought of as a type of ``charge thermalization''. Indeed, the simplest example of a state exhibiting SW-SSB is an Ising spin system, with strong $\Z_2$ symmetry generated by $\prod_iX_i$, in an ``infinite-temperature'' state within a fixed global charge sector
\begin{equation}
\label{eq:infT}
    \rho_\infty=\frac{\one+\prod_i X_i}{2^N},
\end{equation}
where $N$ is the total number of qubits. The fidelity corrrelator Eq.~\eqref{eq:conventional_fidelity} for $O_x=Z_x$ is $1$ since $\rho_\infty=Z_xZ_y\rho_\infty Z_xZ_y$.

Recall that to evaluate an ordinary correlation function $\langle O_x O_y^{\dagger} \rangle = \Tr[\rho O_x O_y^{\dagger}] = \Tr[\rho_{x,y} O_x O_y^{\dagger}]$, only the reduced density matrix $\rho_{x,y}$ in the local regions around $x$ and $y$ is needed. Physically, this means that we only need local measurements near $x$ and $y$ to determine the correlation function. However, for quantities nonlinear in the density matrix, such as the fidelity correlator, it appears that the full density matrix $\rho$, not just the reduced density matrix in some local region, is required. This poses two challenges:
\begin{enumerate}
    \item \textit{Scalability}: For a system of size $V = L^d$, determining the full unknown density matrix is a tomographic task that requires $e^{\Omega(V)}$ resources, so it is clearly not scalable for large system sizes. Only in certain situations can one use special properties of the state to circumvent the exponential complexity; for example, if the state is close to a free-fermion Gaussian state, which was exploited in recent experimental detections of SW-SSB in cold Fermi gases~\cite{SWSSBExp}. In general, however, the exponential cost is unavoidable: in the absence of prior knowledge about the state, Ref.~\cite{Feng2025} shows that SW-SSB cannot be faithfully diagnosed using only ${\rm poly}(L)$ resources. 

    \item \textit{Thermodynamic limit}: It is unclear whether SW-SSB admits a sharp definition in an infinite system. In an infinite system, the very notion of a Hilbert space, and consequently the density matrix $\rho$, becomes ill-defined. Existing definitions therefore proceed by considering a sequence of finite systems ${L_i}$ with states $\rho_{L_i}$ and taking the limit $L_i \to \infty$ at the end of calculations. However, it is conceptually desirable to formulate SW-SSB intrinsically in the thermodynamic limit. How can this be achieved without reference to a global density matrix?  
\end{enumerate}

Both challenges can be resolved if SW-SSB can be formulated and detected locally; namely, based not on the global state, but only on the reduced density matrices on finite local regions. In this way, SW-SSB can be computed and measured more efficiently and remains well-defined even when the whole system is infinitely large. 

\begin{figure}[t]
        \begin{tikzpicture}[scale = 0.75]
    
    
    \def\Nx{11} 
    \def\Ny{7}  
    
    \def\Axmin{3} 
    \def\Axmax{9} 
    \def\Aymin{2} 
    \def\Aymax{6} 
    
    \def\ix{6} 
    \def\iy{4} 
    
    \tikzset{
        site/.style={
            circle,
            minimum size=8pt,
            inner sep=0pt,
            shading=ball,
            ball color=blue!60!cyan 
        },
        highlighted site/.style={
            circle,
            minimum size=8pt,
            inner sep=0pt,
            shading=ball,
            ball color=blue!40!black 
        },
        grid line/.style={
            thick,
            black!80
        },
        region A border/.style={
            dashed,
            very thick,
            black,
            rounded corners=15pt
        },
        overlay/.style={
            fill=white, 
            opacity=0.75, 
            even odd rule
        }
    }
    
    
    
    \foreach \y in {1,...,\Ny} {
        \draw[grid line] (0.5, \y) -- (\Nx+0.5, \y);
    }
    \foreach \x in {1,...,\Nx} {
        \draw[grid line] (\x, 0.5) -- (\x, \Ny+0.5);
    }
    
    \foreach \r in {1, 2, ..., 40} {
        \fill[red!70, opacity=0.05] (\ix, \iy) circle ({1.3 * (41-\r)/40});
    }
    
    \foreach \x in {1,...,\Nx} {
        \foreach \y in {1,...,\Ny} {
            \ifnum\x=\ix
                \ifnum\y=\iy
                    \node[highlighted site] at (\x, \y) {};
                \else
                    \node[site] at (\x, \y) {};
                \fi
            \else
                \node[site] at (\x, \y) {};
            \fi
        }
    }
    
    \node[blue!80!black, font=\normalsize, anchor=south east, inner sep=4pt] at (\ix, \iy) {$x$};
    \node[red!80!black, font=\large] at (\ix + 0.6, \iy + 0.6) {$O_x$};
    
    \fill[overlay] 
        (0.2, 0.2) rectangle (\Nx+0.8, \Ny+0.8) 
        {[rounded corners=15pt] (\Axmin-0.5, \Aymin-0.5) rectangle (\Axmax+0.5, \Aymax+0.5)};
    
    \draw[region A border, fill = black, fill opacity = 0.05] (\Axmin-0.5, \Aymin-0.5) rectangle (\Axmax+0.5, \Aymax+0.5);
    
    \node[font=\LARGE, anchor=south west] at (\Axmax+0.3, \Aymax+0.3) {$A$};
    
    \end{tikzpicture}
    \vspace{-0.5em}
    \caption{Schematic illustration of the local fidelity correlator $F(\rho_A;O_x) \defeq F(\rho_A, O_x \rho_A O_x^\dagger)$, Eq. \eqref{eq:LFC}. }
    \label{fig:mainfig}
\end{figure}

In this work, we propose such a local formulation. We consider the one-point fidelity correlator of a reduced density matrix on a finite local region $A$ containing the point $x$, or simply one-point \textit{local fidelity correlator} (LFC), defined as (See Fig. \ref{fig:mainfig})
\begin{equation}
\label{eq:LFC}
    F(\rho_A;O_x) \defeq F(\rho_A,O_x\rho_AO^{\dagger}_x).
\end{equation}
This quantity is manifestly local, with computational and measurement complexity only depending on the size of $A$ rather than the global length scale $L$. We are interested in the behavior as the distance from $x$ to the boundary of $A$, $\ell=\dist(x , \partial A)$, grows. The dependence only on the local state $\rho_A$ instead of the global state $\rho$ also ensures that the LFC is well-defined even for infinite systems. We then say that a state exhibits \textit{local SW-SSB} if
\begin{equation}
\label{eq:FOP}
    \lim_{|A| \to \infty} F(\rho_A, O_x \rho_A O_x^\dagger)>0,
\end{equation}
where the limit is taken over an increasing sequence of regions that covers the entire infinite system.   

Practically, on a large but finite system, local SW-SSB means that the LFC saturates to a nonzero constant beyond certain length scale $\ell>\ell_0$, where $\ell=\dist(x,\partial A)$. The absence of local SW-SSB means that the LFC decays to zero as some function of $\ell$, typically as $e^{-\ell/\xi}$ for some decay length $\xi$, or as some power-law $\ell^{-\alpha}$ for critical states. 

If the reduced density matrix $\rho_A$ has strong symmetry --- for example, in a symmetric pure product state $|++\cdots\rangle$ --- then the one-point LFC is trivially zero. In the opposite limit, if we have the infinite-temperature symmetric state of Eq.~\eqref{eq:infT}, the local reduced density matrix is simply the maximally mixed state $\rho_A \propto \one_A$, and the one-point LFC is always nonzero. So the LFC can be viewed as a measure of global charge spreading. Intuitively, we are measuring the ``fidelity correlation'' between the local operator $O_x$ and the boundary of the region $\partial A$, which can be viewed as a defect that can break the strong symmetry.   

In this work, we show that local SW-SSB, as the name suggests, can indeed be viewed as a local version of SW-SSB, sharing analogous properties with the latter. Specifically, we show that \
\begin{enumerate}
    \item Standard SW-SSB implies local SW-SSB (Theorem~\ref{thm:inequality_of_measures}).
    \item If the global state is strongly symmetric, then local SW-SSB implies the presence of long-range conditional mutual information (Theorem~\ref{thm:LRCMI}). This is a hallmark feature, though not a sufficient condition, of standard SW-SSB~\cite{lessa2025strong}. 
    
    \item There are states with local SW-SSB, but not SW-SSB in the standard sense. Examples include pure states satisfying the eigenstate thermalization hypothesis (ETH, see Sec.~\ref{sec:eth}), as well as the pseudo-SWSSB states discussed in Ref.~\cite{Feng2025} (Sec.~\ref{sec:pseudo-swssb}). In this sense, the standard SW-SSB is a stronger statement, and we will call it \textit{global SW-SSB}.
    
    \item Similar to global SW-SSB, local SW-SSB is robust against finite-depth, strongly symmetric quantum channels (Theorem~\ref{thm:stability}). This \textit{stability theorem} makes local SW-SSB a universal property of symmetric mixed-state phases.
    
    \item In an infinite system, the limit that defines the fidelity order parameter Eq.~\eqref{eq:FOP} exists and is independent of the choice of sequence of $A$ (Theorem~\ref{thm:existence_and_uniqueness}). This makes the notion of local SW-SSB well defined in infinite systems. Interestingly, in an infinite system, the strong symmetry itself does not appear to be well defined. Nevertheless, the local version of strong-symmetry breaking remains a well-defined and universal property. Local SW-SSB on infinite systems also implies a version of long-range CMI after averaging over symmetry sectors (Theorem.~\ref{thm:long-range_symmetry-averaged_CMI}).
    
    \item We also discuss a local version of the two-point fidelity correlator
    \begin{equation}
        \lim_{|x-y|\to\infty}\lim_{|A|\to\infty}F(\rho_A;O_x O_y^\dagger),
    \end{equation}
    which is closer in appearance to the original fidelity correlator Eq.~\eqref{eq:conventional_fidelity}. We show that, as long as $A$ is much smaller than the entire system --- more formally, the limit $L\to\infty$ is taken before $|A|\to\infty$ --- the local two-point fidelity correlator is equivalent to the local one-point fidelity correlator in terms of characterizing SW-SSB (Theorem~\ref{thm:two-point_one-point_equiv}). 
\end{enumerate}

Our local notion of SW-SSB significantly improves the efficiency of detection: even without using any structure of $\rho_A$, one can evaluate or measure the LFC with $e^{O(\ell^d)}$ amount of resources ($d$ being the space dimension). With ${\rm poly}(L)$ amount of resources, one can then measure LFC up to a length scale $\ell_{\rm max}\sim \log^{1/d}(L)$. This can be used to detect the absence of local SW-SSB if the decay length $\xi<O(\log^{1/d}(L))$, or the presence of local SW-SSB if the saturation length $\ell_0<O(\log^{1/d}(L))$. Moreover, we show in Sec. \ref{sec:markov_length_symmetry-averaged} that the LFC approximately saturates to its asymptotic value at length scales $\xi$ large compared to the \textit{local} Markov length $\xi_M$, assuming the CMI decays exponentially as $I(A:C|B) \sim \exp(-\dist(A,C)/\xi_M)$ over finite but large regions $ABC$.

The trade-off of making SW-SSB locally detectable is that local SW-SSB does not, in general, imply its global counterpart. Nevertheless, in most physically relevant settings studied recently, such as finite-time or finite-decoherence transitions, the local and global notions yield identical phase diagrams. This agreement arises because symmetric phases in these settings typically have decaying conditional mutual information.

We study several examples of long-range and critical local SW-SSB, including thermal states, the decohered Ising model, pure ground states of conformal field theories (CFT) and free-fermion metals. For critical states, the local one-point fidelity or R\'enyi correlator defines an interesting defect problem. For CFT ground states, the R\'enyi-$1$ correlator scales the same way as the ordinary two-point correlator
\begin{equation}
    R^{(1)}(\rho_A;\OO_x)\sim \frac{1}{\ell^{2\Delta_{\OO}}}.
\end{equation}

The scaling behavior is quite different for metals. For a free-fermion metal in $d$ space dimensions with a Fermi surface, the R\'enyi-$1$ correlator for the fermion operator scales as
\begin{equation}
    R^{(1)}(\rho_A;c_x)\sim \frac{1}{\ell},
\end{equation}
independent of the dimension $d$. If we introduce quenched disorder and consider a diffusive metal, the scaling behavior becomes
\begin{equation}
    R^{(1)}(\rho_A;c_x)\sim \frac{1}{\ell^2},
\end{equation}
again independent of $d$. In both examples the LFC scales very differently from the two-point fermion correlator in general dimension, which is very different from the behavior in CFTs.

The rest of the paper is organized as follows. In Sec.~\ref{sec:properties} we discuss properties of local SW-SSB order on large but finite systems. In Sec.~\ref{sec:infinite} we discuss local SW-SSB defined directly on infinite systems. We then discuss alternative formulations of SW-SSB using R\'enyi correlators (Sec.~\ref{sec:alternative_definitions}) as well as two-point generalizations of the fidelity and R\'enyi correlators (Sec.~\ref{sec:equivalence_1P_2P}). In Sec.~\ref{sec:examples} we discuss several physically relevant examples, including the decohered Ising paramagnet (Sec.~\ref{sec:ZZ_decoherence}), ground states of CFTs (Sec.~\ref{sec:CFT}), free-fermion metals (Sec.~\ref{sec:metals} and random free-fermion Gaussian states that realize a weak form of ETH (Sec.~\ref{sec:Gaussian}). In Sec.~\ref{sec:non_abelian_definition} we generalize some of our discussions to non-abelian symmetry groups. We end with some discussions on future directions in Sec.~\ref{sec:discussion}. Several Appendices contain peripheral details.

\section{Universal properties}
\label{sec:properties}
We now explore universal properties of local SW-SSB defined through Eq.~\eqref{eq:FOP}. In this Section we shall focus on large but finite systems, and we shall discuss infinite systems in the next Section. On large but finite systems, the symbol $|A|\to\infty$ should be understood as having $|A|$ much larger than any microscopic characteristic length scale, but much smaller than the system size $L$~\footnote{One can be mathematically more precise, by defining a sequence of systems $\{\rho_n\}$ and regions $A_n$, with system size $L_n\to\infty$ and $|A_n|\to\infty$, but keeping $\ell_{A_n}/L_n$ small throughout.}. We shall also work on lattice systems where the local Hilbert space is finite-dimensional, so that the operator norm of any local operator $\norm{O_x}_{\infty}$ is finite.

\subsection{Global SW-SSB implies local SW-SSB}

Our first result is that global SW-SSB, defined as $F(\rho;O_x O^{\dagger}_y)>0$ for large $|x-y|$, implies local SW-SSB. This is an immediate consequence of data processing inequality.

\begin{theorem}\label{thm:inequality_of_measures}
For any mixed state $\rho$, region $A$, and local operators $O_x$ supported in $A$ and $O_y$ supported outside $A$,  we have
\begin{equation}
\norm{O_y}_\infty F(\rho_A;O_x) \ge F(\rho;O_x O_y^\dagger).
\end{equation}
\end{theorem}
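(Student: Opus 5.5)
The plan is to combine the data processing inequality for the fidelity under the partial trace $\Tr_{\bar A}$ with an operator inequality that removes $O_y$ at the price of a factor $\norm{O_y}_\infty^2$ inside the square root. That factor becomes the $\norm{O_y}_\infty$ in front of $F(\rho_A;O_x)$.

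\emph{Data processing.} Set $\sigma \defeq O_x O_y^\dagger \rho O_y O_x^\dagger$, which is a positive (unnormalized) operator. Monotonicity of $F(\rho,\sigma)=\Tr\sqrt{\sqrt\rho\,\sigma\sqrt\rho}$ under CPTP maps holds for arbitrary positive operators, not only normalized states, since both sides scale homogeneously. Applying it to the partial trace over the complement $\bar A$ of $A$ gives
\begin{equation}
F(\rho;O_xO_y^\dagger)=F(\rho,\sigma)\le F(\rho_A,\sigma_A),\qquad \sigma_A\defeq\Tr_{\bar A}\sigma .
\end{equation}
Because $O_x$ is supported in $A$, it can be pulled out of the partial trace. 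This gives $\sigma_A = O_x\,\tau_A\,O_x^\dagger$ with $\tau_A\defeq \Tr_{\bar A}(O_y^\dagger\rho O_y)$.

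\emph{Operator inequality.} This is the only step needing care, because $\rho^{1/2}$ is not supported on $\bar A$, so partial-trace cyclicity cannot be used naively. Instead I test against an arbitrary $X\ge 0$ on $A$. Since $O_y$ acts on $\bar A$, it commutes with $X\otimes\one$, so
\begin{equation}
\Tr[X\tau_A]=\Tr\!\left[\rho^{1/2}\, O_y (X\otimes\one) O_y^\dagger\,\rho^{1/2}\right]=\Tr\!\left[\rho^{1/2}\,(X\otimes O_yO_y^\dagger)\,\rho^{1/2}\right].
\end{equation}
Using $X\otimes O_yO_y^\dagger\le \norm{O_y}_\infty^2\, X\otimes\one$, the right-hand side is at most $\norm{O_y}_\infty^2\Tr[X\rho_A]$. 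Since this holds for every $X\ge0$, we obtain $\tau_A\le\norm{O_y}_\infty^2\rho_A$. Conjugating by $O_x$ preserves the operator ordering, so
\begin{equation}
\sigma_A\le \norm{O_y}_\infty^2\, O_x\rho_AO_x^\dagger .
\end{equation}

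\emph{Conclusion.} The fidelity is monotone in its second argument: $\sigma\le\sigma'$ implies $\sqrt\rho\,\sigma\sqrt\rho\le\sqrt\rho\,\sigma'\sqrt\rho$, and the square root is operator monotone. Together with the homogeneity $F(\rho,c\,\sigma)=\sqrt c\,F(\rho,\sigma)$, this gives
\begin{equation}
F(\rho_A,\sigma_A)\le\norm{O_y}_\infty F(\rho_A,O_x\rho_AO_x^\dagger)=\norm{O_y}_\infty F(\rho_A;O_x).
\end{equation}
Chaining this with the data processing inequality completes the proof.
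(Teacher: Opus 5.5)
Your proof is correct and follows essentially the same route as the paper: data processing under $\Tr_{\bar A}$, followed by the bound $\Tr_{\bar A}[O_y^\dagger\rho O_y]\le\norm{O_y}_\infty^2\rho_A$ and the monotonicity and homogeneity of the fidelity in its second argument. Your duality argument with test operators $X\ge 0$ spells out that operator inequality, which the paper only indicates through $O_yO_y^\dagger\le\norm{O_y}_\infty^2\one$.
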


\begin{proof}
The inequality follows from the monotonicity of fidelity under completely positive trace-preserving maps (the data processing inequality). Applying the partial trace $\Tr_{\overline{A}}(\cdot)$ yields
\begin{equation}
\begin{aligned}
 F(\rho, O_x O_y^\dagger \rho O_y  O_x^\dagger) & \leq F(\rho_{A}, O_x \Tr_{\overline{A}}[O_{y}^\dagger \rho O_{y}] O_x^\dagger ) \\
 & \leq \norm{O_y}_\infty F(\rho_A, O_x \rho_{A} O_x^\dagger ),
\end{aligned}
\end{equation}
where in the last line we used the fact that $O_y O_y^\dagger \leq \norm{O_y}_\infty^2 \one$, and that $\sigma \leq \sigma' \Rightarrow F(\rho, \sigma) \leq F(\rho, \sigma')$.

\end{proof}

One can similarly use the data processing inequality to prove that $F(\rho_A;O_x)\geq F(\rho_{A^+};O_x)$ if $A\subseteq A^+$. Namely, the one-point LFC on a finite region upper bounds the same LFC on larger regions.

\subsection{Long-range conditional mutual information}
\label{sec:long-range_CMI}

One of the most important physical consequences of SW-SSB is that it obstructs recovering the state from local data. Intuitively, a state with SW-SSB has a well-defined global charge, yet this information is not accessible in any finite region. As a result, the full mixed state cannot be reconstructed from the reduced density matrix alone.

A natural way to quantify this obstruction is through the conditional mutual information (CMI), $I(A\!:\!C|B)$. Given a tripartition $A|B|C$ of the system (Fig.~\ref{fig:cmi_geometry}), the CMI measures whether the full state $\rho$ can be reconstructed from $\rho_{AB}$ by a recovery channel acting only on $B$.  
Vanishing CMI means that the correlations between $A$ and $C$ are completely encoded in $AB$, and optimal recovery is possible (i.e., the state is quantum Markov). In contrast, finite CMI means that $C$ retains genuinely nonlocal information that cannot be reconstructed from local data near $A$, as quantified by the  approximate recovery relation \cite{fawzi_quantum_2015, junge_universal_2018}
\begin{equation}\label{eq:fawzi_relation}
    I({A:C}|B)_\rho \geq - 2 \log F(\rho, \mathcal{R}_{B \to BC}[\rho_{AB}]),
\end{equation}
where $\mathcal{R}_{B \to BC}$ is related to the Petz recovery map.

\begin{figure}[t]
\centering
\begin{tikzpicture}[ scale=0.7, boundary/.style={draw=black!70, line width=0.8pt}, labelstyle/.style={font=\large}]
\def\rA{1.05}
\def\rAB{2*\rA}
\def\wC{(\rAB+0.4)}
\fill[green!12] ({-\wC}, {-\wC}) rectangle ({\wC}, {\wC});
\fill[black!10] (0,0) circle (\rAB);
\fill[blue!12] (0,0) circle (\rA);
\draw[boundary] (0,0) circle (\rAB);
\draw[boundary] (0,0) circle (\rA);
\node[labelstyle] at (0,0) {$A$};
\node[labelstyle] at (-1.15,1.15) {$B$};
\node[labelstyle] at (-1.9,1.9) {$C$};
\end{tikzpicture}
\caption{Tripartition $A|B|C$ of the system used to compute the conditional mutual information $I(A\!:\!C|B)$.}
\label{fig:cmi_geometry}
\end{figure}

For strongly symmetric states with SW-SSB, this missing information is precisely the global charge sector. A local charged operator changes the global state to an orthogonal one, but by SW-SSB, this changes the reduced density matrix only slightly in local fidelity. Therefore, $\rho_{AB}$ does not contain enough information to determine the correct global continuation into $ABC$. The following theorem quantifies how a finite local SW-SSB measure obstructs recoverability through a finite CMI.

\begin{theorem}
\label{thm:LRCMI}
Let $A|B|C$ be a tripartition of a finite system, and let $\rho$ be a strongly symmetric state. Then, for any charged operator $O_A$ supported on $A$
\begin{equation}
    I({A:C}|B)_\rho \geq \left( \frac{F(\rho_{AB}; O_A)}{c \norm{O_A}_\infty } \right)^2,
\end{equation}
where $c = O(1)$ is a numerical constant.
\end{theorem}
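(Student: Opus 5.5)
Our plan is to combine the recovery bound Eq.~\eqref{eq:fawzi_relation} with the observation that strong symmetry makes $\rho$ and $O_A\rho O_A^\dagger$ orthogonal, while their reductions to $AB$ are close in fidelity. We may assume $O_A$ carries a definite charge, $U O_A U^\dagger = e^{i\theta} O_A$ with $e^{i\theta}\neq 1$ (otherwise decompose into charge components). Let $\mathcal{R}=\mathcal{R}_{B\to BC}$ be the recovery channel of Eq.~\eqref{eq:fawzi_relation}, and set $\sigma = O_A \rho O_A^\dagger/\Tr[O_A\rho O_A^\dagger]$. Because $O_A$ acts on $A$ while $\mathcal{R}$ acts on $B$, they commute: $\mathcal{R}[\sigma_{AB}] = O_A \mathcal{R}[\rho_{AB}] O_A^\dagger/\Tr[\cdots]$. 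This is the key structural step.

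Write $\tilde\rho = \mathcal{R}[\rho_{AB}]$ and $\tilde\sigma=\mathcal{R}[\sigma_{AB}]$. First, by Eq.~\eqref{eq:fawzi_relation}, $F(\rho,\tilde\rho)\ge e^{-I/2}$, so the purified (Bures) distance $D(\rho,\tilde\rho)=\sqrt{1-F^2}$ is at most about $\sqrt{I}$. Second, $D(\sigma,\tilde\sigma)$ is also small: since the map $X\mapsto O_A X O_A^\dagger$ is not trace preserving, we control it by writing $\sigma,\tilde\sigma$ as normalized conjugations and using that $O_AO_A^\dagger\le \norm{O_A}_\infty^2\one$. Here we expect a factor $\norm{O_A}_\infty/\sqrt{\Tr[O_A\rho O_A^\dagger]}$; to avoid the denominator it is cleaner to work with unnormalized fidelities, as in the proof of Theorem~\ref{thm:inequality_of_measures}, keeping $F(\rho_{AB};O_A)$ unnormalized. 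Third, by data processing under $\mathcal{R}$, $F(\tilde\rho,\tilde\sigma)\ge F(\rho_{AB}, O_A\rho_{AB}O_A^\dagger)$ up to normalization, i.e.\ $\tilde\rho$ and $\tilde\sigma$ retain the overlap of the local states. Fourth, strong symmetry gives $U\rho=e^{i\phi}\rho$ and $U\sigma = e^{i(\phi+\theta)}\sigma$, so $\rho$ and $\sigma$ have supports in different eigenspaces of $U$ and $F(\rho,\sigma)=0$.

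Combining via the triangle inequality for a fidelity-based metric (the Bures angle $\arccos F$ or the purified distance), we get $\tfrac{\pi}{2} = \angle(\rho,\sigma) \le \angle(\rho,\tilde\rho) + \angle(\tilde\rho,\tilde\sigma)+\angle(\tilde\sigma,\sigma)$. The middle term is at most $\arccos F(\rho_{AB};O_A)$ (appropriately normalized), so $\tfrac{\pi}{2}-\arccos F \approx F$ is bounded by the two outer terms. Each of those is $O(\sqrt{I})$, with the $\sigma$ term carrying the factor $\norm{O_A}_\infty$. Hence $F(\rho_{AB};O_A)\le c\norm{O_A}_\infty\sqrt{I(A:C|B)}$, and squaring gives the claim.

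The main obstacle is the second step, bounding $D(\sigma,\tilde\sigma)$ by $\norm{O_A}_\infty\sqrt{I}$ rather than by something involving $\Tr[O_A\rho O_A^\dagger]^{-1}$. We would first try the root-fidelity bound $\norm{\sqrt{\rho}-\sqrt{\tilde\rho}}_2$: Uhlmann gives purifications whose overlap is $F$, and applying $O_A\otimes\one$ to both purifications changes the norm of their difference by at most a factor $\norm{O_A}_\infty$. That yields $\norm{O_A\psi-O_A\tilde\psi}\le\norm{O_A}_\infty\sqrt{2(1-F)}$, and the whole triangle argument can then be run at the level of unnormalized vectors.
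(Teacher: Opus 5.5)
Your proof is correct. The paper gives only a one-line reduction: it invokes Theorem~\ref{thm:local_computability}, $F(\rho_{AB};O_A)-F(\rho;O_A)\le c\norm{O_A}_\infty I(A:C|B)^{1/2}$, and notes $F(\rho;O_A)=0$ by strong symmetry. That continuity bound is quoted from an external reference, with only the recovery-map intuition given. You derive the needed inequality directly from Eq.~\eqref{eq:fawzi_relation}, using four ingredients: the commutation of $\mathrm{id}_A\otimes\mathcal{R}_{B\to BC}$ with conjugation by $O_A$, data processing, the orthogonality $F(\rho,\sigma)=0$, and a metric triangle inequality. This is the mechanism the paper only alludes to. Your route buys self-containedness and explicit constants. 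The paper's factorization buys generality: Theorem~\ref{thm:local_computability} holds without strong symmetry and is reused for the Markov-length convergence of the LFC. Your unnormalized-vector version actually proves that general bound if you keep the $F(\rho;O_A)$ term instead of setting it to zero.

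A few points of precision.

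(i) The metric matters. With the Bures angle, $\tfrac{\pi}{2}-\arccos f=\arcsin f\ge f$ is linear in $f$. With the purified distance $\sqrt{1-F^2}$, the deficit $1-\sqrt{1-f^2}\approx f^2/2$ is quadratic, and you would only get $I\gtrsim F^4$. So drop ``or the purified distance''; the Bures distance $\sqrt{2-2F}$ also works.

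(ii) The normalization is not an obstacle. Let $N=\Tr[O_A\rho O_A^\dagger]$, which equals $\Tr[O_A\tilde\rho O_A^\dagger]$ since $\tilde\rho_A=\rho_A$. Uhlmann-optimal purifications give $1-F(\sigma,\tilde\sigma)\le(\norm{O_A}_\infty^2/N)(1-F(\rho,\tilde\rho))$. Combining this with $\arccos x\le\pi\sqrt{(1-x)/2}$ and $1-F(\rho,\tilde\rho)\le I/2$ gives $F(\rho_{AB};O_A)/\sqrt N\le\tfrac{\pi}{2}\sqrt I\,(1+\norm{O_A}_\infty/\sqrt N)$. Multiplying by $\sqrt N\le\norm{O_A}_\infty$ yields $c=\pi$.

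The unnormalized route is shorter. Choose $W$ optimal in $F(\tilde\rho;O_A)=|\langle\tilde\psi|O_A\otimes W|\tilde\psi\rangle|$ and subtract $\langle\psi|O_A\otimes W|\psi\rangle=0$. The difference is at most $2\norm{O_A}_\infty\norm{\ket{\psi}-\ket{\tilde\psi}}\le 2\norm{O_A}_\infty\sqrt I$, so $c=2$.

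The quantity $\norm{\sqrt\rho-\sqrt{\tilde\rho}}_2$ refers to canonical, not Uhlmann, purifications. That route also works, since $\Tr\sqrt\rho\sqrt{\tilde\rho}\ge F(\rho,\tilde\rho)^2$ by H\"older, but the two should not be conflated.

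(iii) Reducing to definite charge by decomposing $O_A$ is unnecessary and would cost a group-dependent constant. All you use is that $O_A$ has no neutral component, which already forces $\rho\perp O_A\rho O_A^\dagger$. With a neutral component the statement is false, e.g.\ $O_A=\one$ on a strongly symmetric product state.
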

\begin{proof}
    This follows from the more general and upcoming Theorem~\ref{thm:local_computability} by noting that $F(\rho; O_A) = 0$ due to the strong symmetry of $\rho$.
\end{proof}

As a consequence, a state with local SW-SSB has long-range CMI and cannot be an approximate quantum Markov chain. For that, it is crucial to assume the strong symmetry for the global state $\rho$. Otherwise, the statement can be trivially violated by the maximally mixed state $\rho\propto \one$, for which any reduced density matrix is also maximally mixed and hence has nonzero LFC.

Long-range CMI is an important feature of global SW-SSB but there are other ways to have long-range CMI without having SW-SSB. Therefore, Thm.~\ref{thm:LRCMI} does not imply that local SW-SSB requires global SW-SSB. Indeed, there are states that exhibit SW-SSB locally but not globally.

\subsection{Examples of local, but not global, SW-SSB}
\label{sec:localbutnotglobal}

We now discuss some conceptually simple examples with local, but not global, SW-SSB. 

\subsubsection{Eigenstate thermalization}\label{sec:eth}
Our first example is simply a pure state satisfying the eigenstate thermalization hypothesis (ETH)~\cite{ETHReview}: $\rho=|\Psi\rangle\langle\Psi|$. Here, $|\Psi\rangle$ is a highly excited eigenstate of a thermalizing Hamiltonian, with nonzero excitation energy density. We also assume that the energy density (effective temperature) is high enough that the symmetry is not broken in the ordinary sense.

A key property we anticipate from ETH eigenstates is that local regions are thermal~\cite{SubETH1,SubETH2}: for a region $A$ small compared to the system size $\ell_A\ll L$, the reduced density matrix $\rho_A\approx e^{-\beta H}/Z$ for some effective inverse temperature $\beta$, with error vanishing as $L\to \infty$. Note that the local thermal state is in canonical ensemble, with only the weak symmetry.

As argued in Refs.~\cite{lessa2025strong,liu2025diagnosing}, the fidelity correlator (one-point or two-point) is non-vanishing for a thermal Gibbs state. Therefore the one-point local fidelity correlator Eq.~\eqref{eq:LFC} is non-vanishing and the ETH eigenstate has local SW-SSB order. 

One can also see that the ETH eigenstate cannot have global SW-SSB order: for a pure state, the (global) two-point fidelity correlator is simply the ordinary two-point correlation function $F(|\Psi\rangle\langle\Psi|;O_xO_y^\dagger)=|\langle\Psi|O_xO_y^\dagger|\Psi\rangle|$. For an ETH eigenstate at high energy density, the effective temperature is high and the two-point function will decay exponentially.

\begin{figure}[t]
\centering
\begin{tikzpicture}
\begin{axis}[width=1\linewidth, height=0.6\linewidth, xmin=0, xmax=13, ymin=0, ymax=1, axis lines=left, xlabel={$\ell$}, xtick=\empty, ytick=\empty, clip=false, thick]

\draw[<->, black] (axis cs:0,0.52) -- (axis cs:2.9,0.52) node[midway, below, font=\footnotesize] {$\xi_M$};

\addplot[thick, darkred, domain=0:12, samples=100] {0.9*exp(-x)};

\addplot[thick, BlueViolet, domain=0:12, samples=200] {(0.5 + 0.18*exp(-x/1.6))*(1/(1+exp((x-10.4)/0.28)))};

\draw[<->, black] (axis cs:0,0.16) -- (axis cs:1.7,0.16) node[midway, below, font=\footnotesize] {$\xi$};

\draw[densely dashed, black!70] (axis cs:10.4,0) -- (axis cs:10.4,0.72);

\node[font=\footnotesize, anchor=north] at (axis cs:10.4,0) {$\sim L$};

\node[darkred, font=\footnotesize, anchor=west] at (axis cs:1.3,0.28)
{$F(\rho;O_{\ell} O_0^\dagger)$};

\node[BlueViolet!80!black, font=\footnotesize, anchor=west] at (axis cs:4.8,0.57)
{$F(\rho_A;O_x)$};

\end{axis}
\end{tikzpicture}
\caption{Schematic behavior for a pure state satisfying ETH. The one-point local fidelity $F(\rho_A;O_x)$ saturates to a finite value at intermediate length scales set by the Markov length $\xi_M$ (see Sec.~\ref{sec:markov_length_symmetry-averaged}), and decreases to zero when $\ell$, the linear size of the region $A$, becomes comparable to $L$, the linear size of the full system. In contrast, the (global) two-point fidelity $F(\rho;O_{\ell}O_0^\dagger)$ equals the ordinary two-point function and decays exponentially with a finite correlation length, $\xi$.}
\label{fig:ETH_schematic}
\end{figure}

\subsubsection{Global correlation in pure states}

We can gain more insights on the ETH example through the following theorem.

\begin{theorem}\label{thm:pure_state_non_local_charge}
Consider a bipartite system $\Hilb_A \otimes \Hilb_B$, with $\dim(\Hilb_B) \geq \dim(\Hilb_A)$, and let $\rho_{AB} = |\psi_{AB}\rangle \langle \psi_{AB}|$ be a pure state strongly symmetric under a bipartite Abelian symmetry $U(g) = U_A(g) \otimes U_B(g)$. Then, for any charged operator $O_A$ supported on $A$, there exists a charged operator $O_B$ supported on $B$ of unit norm, $\norm{O_B}_\infty  = 1$, such that
\begin{equation}\label{eq:pure_state_non_local_charge}
    F(\rho_A; O_A ) =|\langle\psi_{AB}|O_AO_B^\dagger |\psi_{AB}\rangle|.
\end{equation}
\end{theorem}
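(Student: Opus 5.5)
The plan is to use Uhlmann's theorem on the purification $\ket{\psi_{AB}}$ of $\rho_A$, and then use the strong symmetry to show that the optimal unitary on $B$ can be chosen with a definite charge.

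\textbf{Step 1 (Uhlmann).} The state $\ket{\psi_{AB}}$ purifies $\rho_A$. The vector $O_A\ket{\psi_{AB}}$ is an unnormalized purification of $O_A\rho_A O_A^\dagger$. Because $\dim\Hilb_B\ge\dim\Hilb_A$, every purification of $O_A\rho_AO_A^\dagger$ on $AB$ can be written as $(\one_A\otimes V_B)O_A\ket{\psi_{AB}}$ for some unitary $V_B$. Uhlmann's theorem then gives
\begin{equation}
F(\rho_A;O_A)=\max_{V_B\ \mathrm{unitary}}|\langle\psi_{AB}|O_A V_B|\psi_{AB}\rangle|.
\end{equation}
The maximum is attained at a unitary $V_B$, which has $\norm{V_B}_\infty=1$. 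It remains to show that this maximizer can be taken to be charged.

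\textbf{Step 2 (charge constraint).} Let $O_A$ carry charge $q$, so that $U_A(g)O_AU_A(g)^\dagger=\chi_q(g)O_A$. Strong symmetry gives $U(g)\ket{\psi}=\lambda(g)\ket{\psi}$. Write the Schmidt decomposition $\ket{\psi}=\sum_i\sqrt{p_i}\ket{a_i}\ket{b_i}$.

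Because the symmetry is a tensor product and $\ket\psi$ is an eigenvector, $\rho_A$ commutes with $U_A(g)$ and $\rho_B$ commutes with $U_B(g)$. Hence the Schmidt vectors can be chosen as joint eigenvectors of the abelian groups, with charges $\alpha_i$ on $A$ and $\beta_i$ on $B$ satisfying $\alpha_i\beta_i=\lambda$. Degenerate Schmidt values need care here, but within each degenerate block one can diagonalize the symmetry, since it commutes with the block.

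\textbf{Step 3 (construct $O_B$).} Write the overlap in the Schmidt basis:
\begin{equation}
\langle\psi|O_AV_B|\psi\rangle=\sum_{ij}\sqrt{p_ip_j}\,\langle a_i|O_A|a_j\rangle\langle b_i|V_B|b_j\rangle .
\end{equation}
The factor $\langle a_i|O_A|a_j\rangle$ is nonzero only if $\alpha_i=\chi_q\alpha_j$. In that case $\beta_i=\chi_q^{-1}\beta_j$, so only the matrix elements of $V_B$ that shift the $B$-charge by $q^{-1}$ contribute, and only those within the support of $\rho_B$.

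Define $W_B$ as the projection of $V_B$ onto this charge component, $W_B=\int dg\,\chi_q(g)\,U_B(g)V_BU_B(g)^\dagger$ (a sum over the group if it is finite). This is an average of unitaries, so $\norm{W_B}_\infty\le1$. It is charged, and it reproduces the same overlap.

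Since $|\langle\psi|O_AW_B|\psi\rangle|=F$, the fidelity is also at most the maximum over all contractions. By Uhlmann's bound the maximum over contractions equals the maximum over unitaries, because the unit ball of operators is the convex hull of the unitaries. So $W_B$ attains the maximum.

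\textbf{Step 4 (unit norm).} Set $O_B^\dagger=W_B/\norm{W_B}_\infty$. This gives $|\langle O_AO_B^\dagger\rangle|=F/\norm{W_B}_\infty\ge F$. Combined with $|\langle O_AO_B^\dagger\rangle|\le F$, which holds for any charged contraction, equality is forced, and $\norm{W_B}_\infty=1$ unless $F=0$. If $F=0$, pick any charged unit-norm operator on $B$ that annihilates the relevant matrix elements; such an operator exists as long as the charge sector is nonempty in the dimension count.

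\textbf{Main obstacle.} The delicate point is Step 4: normalizing to exact unit norm while preserving equality. The inequality $|\langle O_A X_B\rangle|\le F(\rho_A;O_A)$ for contractions $X_B$, which follows from Uhlmann and convexity, is what makes the rescaling argument work. So I would first establish that bound cleanly.
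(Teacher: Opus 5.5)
Your proposal is correct and follows essentially the same route as the paper. You use Uhlmann's theorem to get a unitary on $B$ achieving $F(\rho_A;O_A)=|\langle\psi_{AB}|O_A\otimes W|\psi_{AB}\rangle|$, project it onto the inverse charge sector (a group average of unitaries, so norm at most one) without changing the overlap, and force unit norm via $|\langle\psi_{AB}|O_A X_B|\psi_{AB}\rangle|\le\norm{X_B}_\infty F(\rho_A;O_A)$. The differences from the paper are minor: you obtain the selection rule through a symmetry-adapted Schmidt basis, you justify that last inequality by writing contractions as convex combinations of unitaries where the paper cites Theorem~\ref{thm:inequality_of_measures}, and you treat the $F=0$ case explicitly where the paper leaves it implicit.
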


Note that $O_B$ need not be a local operator. The theorem implies that any pure state exhibiting local SW-SSB --- such as ETH eigenstates --- must display long-range correlations. At the same time, the absence of global SW-SSB (which, for pure states, is equivalent to the absence of ordinary SSB) requires that the corresponding charged operator $O_B$ acting on the larger region be intrinsically nonlocal.

\begin{proof}
Since $\dim(\Hilb_B)\geq \dim(\Hilb_A)$, by Uhlmann's theorem there exist purifications $\ket{\phi_{AB}}$ and $\ket{\phi'_{AB}}$ of $\rho_A$ such that
\begin{equation}
    F(\rho_A;O_A)=|\langle\phi_{AB}| O_A |\phi'_{AB}\rangle| .
\end{equation}
Furthermore, there exist unitaries $V,V'\in \mathcal U(\Hilb_B)$ such that
\begin{equation}
    |\phi_{AB}\rangle=(\one_A\otimes V) |\psi_{AB}\rangle,\quad
    |\phi'_{AB} \rangle=(\one_A\otimes V') |\psi_{AB}\rangle .
\end{equation}
Thus, introducing $W \defeq V^\dagger V'$, we have
\begin{equation}\label{eq:W_def}
    F(\rho_A;O_A)=
    \bigl|\langle \psi_{AB}| O_A\otimes W |\psi_{AB}\rangle\bigr| .
\end{equation}

Let $\lambda$ be the charge of $O_A$. Since $\ket{\psi_{AB}}$ is symmetric, only the inverse charge sector of $O_A$, $\bar \lambda$, of $W$ can contribute to Eq.~\eqref{eq:W_def}. Indeed, if $X$ has charge $\mu$ on $B$, then $ \langle \psi_{AB}|O_A\otimes X|\psi_{AB}\rangle = \lambda(g)\mu(g)  \langle \psi_{AB}|O_A\otimes X|\psi_{AB}\rangle$ for all $g\in G$, and therefore the matrix element can be nonzero only when $\mu=\bar\lambda$.

For an Abelian group, the projector onto a charge sector is
\begin{equation}
    \Pi_\mu(X)=  \int_G \!dg~\overline{\mu(g)} ~U_B(g)^\dagger X U_B(g),
\end{equation}
where $dg$ is the Haar measure on $G$ (for finite $G$, replace
$\int_G dg$ by $\frac{1}{|G|}\sum_{g\in G}$). Therefore, Eq.~\eqref{eq:pure_state_non_local_charge} holds with
\begin{equation}
    O_B^\dagger \defeq \Pi_{\bar\lambda}(W) .
\end{equation}

It remains to show that $\norm{O_B}_\infty =1$. First,
\begin{equation}
\begin{aligned}
    \norm{\Pi_{\bar\lambda}(W)}_\infty \leq
    \int_G \! dg~\norm{U_B(g)^\dagger W U_B(g)}_\infty = \|W\|_\infty = 1 ,
\end{aligned}
\end{equation}
since $W$ is unitary. 

For the converse bound, we apply Theorem~\ref{thm:inequality_of_measures}
\begin{equation}
\begin{aligned}
    F(\rho_A;O_A) &\geq \norm{O_B}_\infty ^{-1}
    F(\rho_{AB};O_A O_B^\dagger)\\ 
    &= \norm{O_B}_\infty^{-1}F(\rho_A;O_A).
\end{aligned}
\end{equation}
Thus, in the nontrivial case $F(\rho_A;O_A)\neq 0$ we have $\norm{O_B}_\infty \geq 1$. Combining both bounds yields $\norm{O_B}_\infty =1$.
\end{proof}

\subsubsection{Pseudo-SWSSB}\label{sec:pseudo-swssb}

Another example with local, but not global, SW-SSB order is the ``pseudo-SWSSB'' ensemble of states constructed in Ref.~\cite{Feng2025} through the application of pseudo-random unitary circuits. For our purposes, the pseudo-SWSSB ensemble $\{p_i, \rho_i\}$ has two important properties:
\begin{enumerate}
    \item Each $\rho_i$ is strongly symmetric and does not have global SW-SSB order; namely, the standard two-point fidelity correlator Eq.~\eqref{eq:conventional_fidelity} vanishes at long distances\footnote{More precisely, \cite{Feng2025} requires the vanishing of the Rényi-1 correlator on average: $\lim_{|A| \to \infty} \frac{1}{|A|^2}\sum_{x,y \in A} R^{(1)}(\rho; O_x O_y^\dagger) = 0$. From the discussion in Sec.~\ref{sec:equivalence_1P_2P}, it suffices to consider the long-distance behavior.}.
    \item There is no efficient measurement protocol that distinguishes $\rho_{i}$ from $\rho_\infty=\frac{1}{2^N}(\one+\prod_i X_i)$ with high probability. Here being ``efficient'' means only requiring ${\rm poly}(N)$ amount of resources, including the number of copies of the state and the number of unitary gauges and measurements.
\end{enumerate}

These two properties ensure that, even though the ensemble has no global SW-SSB, confirming this in a state-agnostic way is exponentially hard, as any ${\rm poly}(N)$ protocol will not be able to distinguish the ensemble states from $\rho_\infty\sim \one+\prod_i X_i$, which does have SW-SSB order.

From our local point of view, property $2$ above implies that, on any subsystem with volume $|A|\lesssim O(\log(L))$, where any reduced density matrix $\rho_{i,A}$ has ${\rm poly}(L)$ dimensions and can therefore be efficiently determined (say, by tomography), the state $\rho_{i,A}$ will not be distinguishable from $\Tr_{A^c}[\rho_\infty] \propto \one_A$. Indeed, they will be $o(1/\text{poly}(N))$ close in trace norm, which immediately implies local SW-SSB order for the pseudo-SWSSB ensemble.

\subsection{Stability theorem}
\label{sec:stability}

A key property establishing SW-SSB as a robust phase of matter is its stability under deformations by strongly symmetric finite-depth quantum channels.  Here, a channel is strongly symmetric if it maps strongly symmetric states to strongly symmetric states or, equivalently, if it admits a Kraus representation 
\begin{equation}
    \E:\rho\to\sum_a K_a\rho K_a^{\dagger}
\end{equation}
in which every Kraus operator $K_a$ commutes with the symmetry. The channel is ``finite-depth'' if it can be purified to a finite-depth local unitary circuit:
\begin{equation}\label{eq:proof_stability_theorem_definition_rho}
    \E_{\rm SFD}[\rho] = \Tr_{a}\left(U^\dagger \rho \otimes \ket{0}\bra{0}_a  U\right),
\end{equation}
where $a$ denotes the ancilla Hilbert space, which is taken to be a copy of the original Hilbert space, and $U$ is a finite-depth local unitary acting on both the ancilla and physical Hilbert spaces. For strongly symmetric channels, $U$ commutes with the strong symmetry $g \otimes \mathbbm{1}_a$.

The stability theorem for global SW-SSB~\cite{lessa2025strong} states that a state exhibiting global long-range order retains the order after applying any finite-depth strongly symmetric channel. This robustness is essential: quantities that are not stable under such deformations, such as the two-point R\'enyi-$2$, cannot serve as intrinsic order parameters of a phase of matter.

Analogously, the following theorem establishes local SW-SSB as a robust mixed-state phase.
\begin{theorem}\label{thm:stability} 
If a mixed state $\rho$ has SW-SSB in the one-point fidelity sense and $\E_{\rm SFD}$ is a strongly symmetric finite-depth local quantum channel, then $\E_{\rm SFD}[\rho]$ has SW-SSB in the one-point fidelity sense. 
\end{theorem}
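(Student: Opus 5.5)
The plan is to reduce everything to a statement about reduced density matrices on a slightly enlarged region, using the light-cone structure of $\E_{\rm SFD}$. Let $\sigma=\E_{\rm SFD}[\rho]$ and let $r$ be the range of the circuit $U$ in Eq.~\eqref{eq:proof_stability_theorem_definition_rho}. The first step is to note that $\sigma_A$ depends only on $\rho_{A^+}$, where $A^+$ is $A$ thickened by $r$. Concretely, I would write $\sigma_A=\E_A[\rho_{A^+}]$ for a CPTP map $\E_A$. It is obtained by keeping only the gates of $U$ in the backward light cone of $A$, acting on $\rho_{A^+}\otimes\op{0}_{a}$, and tracing out both the ancillas and the system sites outside $A$. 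All gates still commute with the symmetry. Because $\dist(x,\partial A^+)=\ell+r$, taking $|A|\to\infty$ is the same as taking $|A^+|\to\infty$, so it suffices to bound $F(\sigma_A;O'_x)$ below by a constant times $F(\rho_{A^+};O_x)$ for a suitable charged $O'_x$ near $x$. The hypothesis says the latter is bounded away from zero for some charged $O_x$.

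The second step handles moving the charged operator through the channel. Set $\tilde\rho=\rho_{A^+}\otimes\op{0}_a$ and $V=U_{A}$, the light-cone restricted circuit. Then $\E_A[\cdot]=\Tr_{\rm rest}(V^\dagger\,\cdot\,V)$. Consider the dressed operator
\begin{equation}
\tilde O\defeq V^\dagger (O_x\otimes\one_a) V .
\end{equation}
It is supported within distance $r$ of $x$ on system and ancilla, and it carries the same charge as $O_x$ under $g\otimes\one_a$. By the data processing inequality (the partial trace step of Theorem~\ref{thm:inequality_of_measures}),
\begin{equation}
F\bigl(\Tr_{\rm rest}V^\dagger\tilde\rho V,\ \Tr_{\rm rest}V^\dagger \tilde O'\tilde\rho\tilde O'^\dagger V\bigr)\ \ge\ F(\tilde\rho;\tilde O') .
\end{equation}
This holds for any operator $\tilde O'$ on system plus ancilla. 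The natural choice is $\tilde O'=O_x\otimes\one_a$ conjugated appropriately, so that the right-hand side becomes $F(\rho_{A^+};O_x)$. The catch is that the operator then appearing on the output side is $V(O_x\otimes\one_a)V^\dagger$. This acts on ancillas as well, so it is not a system operator $O'_x$ acting on $\sigma_A$.

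The third step, which I expect to be the main obstacle, is to remove this ancilla dependence. My plan is to expand the dressed operator in a basis of ancilla operators on the light cone of $x$:
\begin{equation}
V(O_x\otimes\one_a)V^\dagger=\sum_{j=1}^{D}P_j\otimes Q_j .
\end{equation}
Here $D=O(1)$, $\norm{Q_j}_\infty\le1$, and each $P_j$ is a system operator near $x$ carrying the charge of $O_x$ (projecting onto that charge sector as in the proof of Theorem~\ref{thm:pure_state_non_local_charge}), with $\norm{P_j}_\infty\le C\norm{O_x}_\infty$. The key tool is the Uhlmann form of the fidelity: $F(\sigma;P)=\sup|\langle\Phi|P\otimes W|\Phi\rangle|$, with the supremum over unitaries $W$ on a purifying space. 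This is a supremum of absolute values of functionals linear in $P$, so it is subadditive, and operators acting on the ancilla (which is traced out in $\sigma_A$, hence part of the purifying system) can be absorbed into $W$ up to their norm. This should give
\begin{equation}
F(\rho_{A^+};O_x)\le\sum_j F(\sigma_A;P_j)\le D\max_j F(\sigma_A;P_j).
\end{equation}
Hence some $O'_x=P_j$ satisfies $F(\sigma_A;O'_x)\ge F(\rho_{A^+};O_x)/D$, which stays positive as $|A|\to\infty$. A subsequence argument, or the monotonicity in $A$ noted after Theorem~\ref{thm:inequality_of_measures}, handles the fact that $j$ may depend on $A$. The delicate point is justifying the absorption of $Q_j$ into the purifying unitary while the ancilla in $\sigma_A$ sits in its reference state $\op{0}_a$ rather than being fully traced. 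If that fails, the fallback is to dilate $Q_j$ into a unitary on a doubled reference space at the cost of a constant factor.
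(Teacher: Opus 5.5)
Your proposal is correct and is essentially the paper's own argument. Both restrict the dilation to the light cone of $A$, conjugate $O_x$ through the circuit, expand the dressed operator as $\sum_j P_j\otimes Q_j$ over ancilla operators, and keep a term that retains an $O(1)$ share of the fidelity. Your explicit bound $F(\sigma_A;P_j)\ge F(\rho_{A^+};O_x)/D$ and your handling of the $A$-dependence of $j$ just make precise what the paper states as ``$O(1)$''.

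The step you flag as delicate is harmless. In $\sigma_A$ the ancillas are traced out, so they belong to the purifying system. For any purification $\ket{\Phi}$ of $\sigma_A$ and any operator $Y$ on the purifying system, $|\langle\Phi|P_j\otimes Y|\Phi\rangle|\le\norm{Y}_\infty\norm{\sqrt{\sigma_A}P_j\sqrt{\sigma_A}}_1=\norm{Y}_\infty F(\sigma_A;P_j)$, so no dilation is needed.
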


\begin{proof}

Consider the purified (Stinespring dilation) form of the symmetric finite depth channel Eq.~\eqref{eq:proof_stability_theorem_definition_rho}. Let $r$ be the light cone size of $U$, which is proportional to the circuit depth $D$, and define the enlarged region
\begin{equation}
    A^{+r} = \{x \in \Lambda |~{\rm dist}(x, A) \leq r\}.
\end{equation}

\begin{figure}[b]
\centering
\begin{tikzpicture}[
    scale=0.6,
    site/.style={circle, fill=black, inner sep=1.pt},
    gate/.style={draw, rounded corners=2pt, fill=cyan!17, minimum width=0.9cm, minimum height=0.1cm},
    cone/.style={draw=blue!6, fill=blue!6, opacity=0.55},
]

\path[cone] (4.6,0) -- (8.4,0) -- (11.4,2.6) -- (1.6,2.6) -- cycle;

\draw[gray!50, -] (4.6,0) -- (1.6,2.6);
\draw[gray!50, -] (8.4,0) -- (11.4,2.6);

\foreach \i in {0,...,13}{
    \draw[gray!35] (\i,0) -- (\i,2.6);
}

\foreach \i in {0,...,13}{
    \node[site] at (\i,0) {};
    \node[site] at (\i,2.6) {};
}

\foreach \y in {0.6,2}{
    \foreach \i in {0,2,...,12}{
        \node[gate] at ({\i+0.5},\y) {};
    }
}
\foreach \y in {1.3}{
    \foreach \i in {1,3,...,11}{
        \node[gate] at ({\i+0.5},\y) {};
    }
}

\draw[decorate, decoration={brace, mirror, amplitude=5pt}, blue!30!black] 
    (4.7,-0.38) -- (8.3,-0.38) node[midway, below=6pt] {$A$};

\draw[decorate, decoration={brace, amplitude=4pt}, blue!70!black] 
    (1.7,3) -- (11.2,3) node[midway, above=5pt] {$A^{+r}$};

\draw[<->, black] (13.5,0.3) -- (13.5,2.4) node[midway, right] {$r$};
\end{tikzpicture}
\caption{After applying a finite-depth local channel $\mathcal{E}$, the reduced density matrix on region $A$ can only be influenced by the state in the enlarged region $A^{+r}$.}
\label{fig:stability_theorem_circuit}
\end{figure}

This region contains all sites that can influence $\rho_A$ after a single application of $\E_{\rm SFD}$, as depicted in Fig.~\ref{fig:stability_theorem_circuit}. Consequently, we may define a new channel supported on $A^{+r}$ whose action on $A$ coincides with that of $\E_{\rm SFD}$. To do so, we introduce a reference pure state $\rho_{\overline{A^{+r}}}^{(0)} = \ket{0} \bra{0}_{\overline{A^{+r}}}$ (the overline denotes the complement of a region), and define the channel   
\begin{equation}
    \E_{A^{+r}}[\rho_{A^{+r}}] = \Tr_a \Tr_{\overline{A^{+r}}}\left(U \left(\rho_{A^{+r}} \otimes \rho_{\overline{A^{+r}}}^{(0)} \otimes \ket{0}\bra{0}_a \right) U^{\dagger} \right).
\end{equation}
By construction, this channel satisfies
\begin{equation}\label{eq:proof_stability_theorem_property_channel}
    \Tr_{A^{+r}/ A}\E_{A^{+r}}[\rho_{A^{+r}}] = \Tr_{\bar A}\E_{\rm SFD}[\rho] 
\end{equation}
Now, since $\rho$ has local SW-SSB, we have 
\begin{equation}
    F(\rho_{A^{+r}}, O_x\rho_{A^{+r}}O_x^\dagger ) = O(1)
\end{equation}
This implies that there exist purifications $\ket{\phi_{\rho_{A^{+r}}}}$ and $\ket{\phi_{\rho_{A^{+r}}}'}$ of $\rho_{A^{+r}}$, such that 
\begin{equation}
    \big|\bra{\phi_{\rho_{A^{+r}}}} O_x \ket{\phi_{\rho_{A^{+r}}}'}\big| = O(1)
\end{equation}
Then, we find
\begin{equation}
\begin{aligned}
    O(1) &= \big|\bra{\phi_{\rho_{A^{+r}}}} \bra{0}_{\bar A^{+r}} \bra{0}_a U^\dagger U O_x  U^\dagger U \ket{0}_a \ket{0}_{\bar A^{+r}} \ket{\phi_{\rho_{A^{+r}}}'}\big| \\
    &= \big|\bra{\phi_{\E_{A^{+r}}[\rho_{A^{+r}}]}} \tilde O_x \ket{\phi_{\E_{A^{+r}}[\rho_{A^{+r}}]}'}\big|,
\end{aligned}
\end{equation}
where $\ket{\phi_{\E_{A^{+r}}[\rho_{A^{+r}}]}}$, $\ket{\phi_{\E_{A^{+r}}[\rho_{A^{+r}}]}'}$ denote purifications of $\E_{A^{+r}}[\rho_{A^{+r}}]$, and $\tilde O_x =  U O_x U^\dagger$. Note that, since $U$ is strongly symmetric and has finite depth, we may expand $\tilde O_x = \sum_{x'}  O_{x'}\otimes O_{x'}^a$, where the sum is over a (finite) set of charged operators supported near $x$. It follows that there must exist at least one term in this sum such that
\begin{equation}
      \big|\bra{\phi_{\E_{A^{+r}}[\rho_{A^{+r}}]}} O_{x'}\otimes  O_{x'}^a \ket{\phi_{\E_{A^{+r}}[\rho_{A^{+r}}]}'}\big| = O(1).
\end{equation}
This implies that 
\begin{equation}
    F(\E_{A^{+r}}[\rho_{A^{+r}}], O_{x'}\E_{A^{+r}}[\rho_{A^{+r}}]O_{x'}^\dagger) = O(1)
\end{equation}
Lastly, we apply the data processing inequality to trace over $A^{+r}/A$ and use Eq.~\eqref{eq:proof_stability_theorem_property_channel}, to conclude 
\begin{equation}
    F(\Tr_{\bar A}\E_{\rm SFD}[\rho], O_{x'}\Tr_{\bar A}\E_{\rm SFD}[\rho]O_{x'}^\dagger) = O(1).
\end{equation}
\end{proof}

In Sec.~\ref{sec:localbutnotglobal}, we examined states that exhibit local, but not global, SW-SSB, including ETH eigenstates and pseudo-SWSSB states. In light of the stability theorem, these states nonetheless belong to nontrivial symmetric mixed-state phases.

\section{Infinite systems}
\label{sec:infinite}

We now discuss the notion of SW-SSB directly on an infinite lattice $\Lambda$, without taking limits on sequences of finite-size systems. The main subtlety for an infinite system is that the total Hilbert space is not well-defined, so the global density matrix $\rho$ is also not well-defined. Instead, in the operator-algebraic approach a state is defined through a mapping $\omega$ that takes quasi-local operators (i.e. limits of finitely supported operators) to their expectation values~\cite{bratteli_operator_2012}:
\begin{equation}
    \omega: O_{ql}\to \langle O_{ql}\rangle.
\end{equation}
As a consequence, the standard definition of strong symmetry $U\rho=e^{i\alpha}\rho$ does not generalize easily --- physically, it is not obvious how to define the ``total charge'' of an infinite system, since it is not measured by a quasi-local operator. Informally, the total charge can be changed by acting with a charged operator ``at infinity''. Weak symmetry, on the other hand, remains well defined by requiring $\langle U^{\dagger}O_{ql}U\rangle=\langle O_{ql}\rangle$ for all quasi-local operators $O_{ql}$. Curiously, as we will see in this Section, the notion of SW-SSB, at least in the local sense, remains well defined. 

Notation-wise, we will use $\omega$ to refer to the quantum state of an infinite-volume system, defined above as a positive functional over the quasi-local algebra of operators. When expressing the reduced density matrix of this state over a finite region $A$, we will use the standard notation $\rho_A$, while leaving $\rho$ to occasions where the total system is finite. Namely, $\rho_A$ is defined as the unique matrix satisfying $\omega(O_A) = \Tr[\rho_A O_A]$ for all operators $O_A$ supported on $A$.

Our first goal is to sharply define the order parameter $\lim_{|A|\to\infty}F(\rho_A;O_x)$ through a limiting procedure. We formalize this procedure through the notion of a covering, which is illustrated in Fig.~\ref{fig:covering}.

\begin{definition}[Covering centered at $x$] \label{def:Convering_centered_around_i} Given a site $x$, we define a \textbf{covering centered at $x$} as a sequence of finite regions $\{A^{(n)}_x\}_{n=1}^\infty$, such that
\begin{enumerate}
    \item Each region contains the site $x$:  $x \in A_x^{{(n)}} $ for all $n$.
    \item Subsequent regions contain the earlier ones:  $A^{(n)}_x \subseteq A^{(m)}_x$ for $n < m$.  
    \item The distance from $x$ to the complement of $A_x^{(n)}$ diverges:  $\dist(x, \overline{A_x^{(n)}}) \rightarrow \infty$ for $n\rightarrow \infty$. \label{prop:diverging_distance}
\end{enumerate}
\end{definition}
Condition \ref{prop:diverging_distance} ensures that  $\bigcup_n A^{(n)}_x = \lim_{n \to \infty }A^{(n)}_x = \Lambda$, i.e., the sequence exhausts the entire system.

\begin{figure}[h!]
\label{fig:covering}
\centering
\begin{tikzpicture}[scale=1]
    \fill[red!6!]  (-3.4,-3.4) rectangle (3.4,3.4);
  \draw[thick, blue!70!black, fill = blue!6!] (0,0) circle (3.2);
    \draw[thick, red!70!black,  fill = red!6!]  (-2.1,-2.1) rectangle (2.1,2.1);
  \draw[thick, blue!70!black, fill = blue!6!] (0,0) circle (1.9);
   \draw[thick, red!70!black,  fill = red!6!] (-1.15,-1.15) rectangle (1.15,1.15);
  \draw[thick, blue!70!black, fill = blue!6!] (0,0) circle (0.8);

  \fill (0,0) circle (1.2pt);
  \node[font=\small, anchor=south ] at (0.1,0) {$x$};

  \node[blue!70!black, inner sep=1pt, font=\footnotesize] at (-0.4, 0) {$A^{(1)}$};
  \node[red!70!black,  inner sep=1pt, font=\footnotesize] at (-0.75, 0.9)   {$B^{(1)}$};
  \node[blue!70!black, inner sep=1pt, font=\footnotesize] at (-1.5, 0) {$A^{(2)}$};
  \node[red!70!black,  inner sep=1pt, font=\footnotesize] at (-1.65, 1.8)   {$B^{(2)}$};
  \node[blue!70!black, inner sep=1pt, font=\footnotesize] at (-2.7, 0) {$A^{(3)}$};
  \node[red!70!black,  inner sep=1pt, font=\footnotesize] at (-2.65, 2.65)   {$B^{(3)}$};
  \node[font=\Large, rotate = 45] at (2.8, 2.8) {$\cdots$};
  \node[font=\Large, rotate = -45] at (2.8, -2.8) {$\cdots$};
  \node[font=\Large, rotate = -135] at (-2.8, -2.8) {$\cdots$};
\end{tikzpicture}
\caption{Two coverings $\{A^{(n)}\}$ and $\{B^{(n)}\}$ centered at $x$.}
\end{figure}

Next, we introduce the local order parameter to diagnose SW-SSB.
\begin{definition}[Local SW-SSB measure]\label{def:local_SW-SSB_measure}
    Given a local charged operator $O_x$ and a covering $\{A^{(n)}_x\}_{n=1}^\infty$ centered around $x$, the (one-point fidelity) \textbf{local measure of SW-SSB} of a state $\rho$ is defined as
    \begin{align}
        F(\omega; O_x) = \lim_{n\to \infty } F(\rho_{A_x^{(n)}};  O_x) \equiv  \lim_{|A| \to \infty } F(\rho_A;  O_x).
    \end{align}
\end{definition}

Our central result in this Section is the existence of the above limit:
\begin{theorem}\label{thm:existence_and_uniqueness}
Let $\rho$ be a mixed state and $O_x$ a local operator. Then the limit
\begin{equation}
    \lim_{n \to \infty} F(\rho_{A_x^{(n)}}, O_x \rho_{A_x^{(n)}} O_x^\dagger) = F(\rho; O_x)
\end{equation}
exists for any covering $\{A_x^{(n)}\}_n$ centered at $x$, and is independent of the choice of covering.
\end{theorem}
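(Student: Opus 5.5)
The plan rests on the monotonicity remark following Theorem~\ref{thm:inequality_of_measures}. If $A\subseteq A^+$, then $\rho_A=\Tr_{A^+\setminus A}\rho_{A^+}$, and since $O_x$ is supported in $A$ we also have $O_x\rho_A O_x^\dagger=\Tr_{A^+\setminus A}[O_x\rho_{A^+}O_x^\dagger]$. The partial trace is a CPTP map, so the data processing inequality for fidelity gives
\begin{equation}
F(\rho_{A^+};O_x)\le F(\rho_A;O_x).
\end{equation}
Fix a covering $\{A_x^{(n)}\}$. By the nesting condition of Definition~\ref{def:Convering_centered_around_i}, the sequence $F(\rho_{A_x^{(n)}};O_x)$ is non-increasing in $n$. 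It is bounded below by $0$ and above by $\norm{O_x}_\infty$, using $O_x\rho_AO_x^\dagger\le\norm{O_x}_\infty^2\rho_A$ together with $\sigma\le\sigma'\Rightarrow F(\rho,\sigma)\le F(\rho,\sigma')$ as in the proof of Theorem~\ref{thm:inequality_of_measures}. A bounded monotone sequence converges, so the limit exists for every covering.

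For independence of the covering, I would use an interlacing argument. Take two coverings $\{A^{(n)}\}$ and $\{B^{(m)}\}$ centered at $x$. Each $A^{(n)}$ is a \emph{finite} region. Condition~\ref{prop:diverging_distance} for the $B$-covering says $\dist(x,\overline{B^{(m)}})\to\infty$, so for $m$ large enough $B^{(m)}$ contains the ball of radius $\max_{y\in A^{(n)}}\dist(x,y)$ around $x$, and hence contains $A^{(n)}$. So for every $n$ there is an $m(n)$ with $A^{(n)}\subseteq B^{(m)}$ for all $m\ge m(n)$. Monotonicity then gives $F(\rho_{B^{(m)}};O_x)\le F(\rho_{A^{(n)}};O_x)$. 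Letting $m\to\infty$ and then $n\to\infty$ yields $\lim_m F(\rho_{B^{(m)}};O_x)\le\lim_n F(\rho_{A^{(n)}};O_x)$. The same argument with the roles swapped gives the reverse inequality, so the two limits coincide. The common value is what we denote $F(\omega;O_x)$ in Definition~\ref{def:local_SW-SSB_measure}.

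The main point needing care is the consistency of reduced density matrices in the infinite-volume setting. We need that $\rho_A$, defined by $\omega(O_A)=\Tr[\rho_AO_A]$, satisfies $\rho_A=\Tr_{A^+\setminus A}\rho_{A^+}$. This follows because every operator on $A$ is also an operator on $A^+$ (tensored with the identity), and the defining relation then forces the partial-trace relation. One should also check that the containment step only uses finiteness of each $A^{(n)}$ and local finiteness of the lattice, so that bounded regions are finite. I expect no further obstacle: the statement is essentially monotone convergence plus cofinality of coverings.
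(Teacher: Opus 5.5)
Your proposal is correct and is essentially the paper's proof: the data processing inequality under the partial trace makes $F(\rho_{A_x^{(n)}};O_x)$ non-increasing, and it is bounded below by zero, while the diverging-distance condition makes each covering cofinal in the other, which gives covering independence. The one slip is in a step you do not need: $O_x\rho_AO_x^\dagger\le\norm{O_x}_\infty^2\rho_A$ is false in general (take $\rho_A=\op{+}$ and $O_x=Z_x$), but a non-increasing sequence needs only a lower bound to converge, and $F(\rho_A;O_x)\le\norm{O_x}_\infty$ holds anyway by Uhlmann's theorem.
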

\begin{proof}
We first show existence. By the data processing inequality applied to the partial trace $\Tr_{A_x^{(n+1)} \setminus A_x^{(n)}}$, the sequence
\begin{equation}\label{eq:monotonicity_1PF}
    F(\rho_{A_x^{(n)}}, O_x \rho_{A_x^{(n)}} O_x^\dagger) \geq F(\rho_{A_x^{(n+1)}}, O_x \rho_{A_x^{(n+1)}} O_x^\dagger)
\end{equation}
is monotonically decreasing. Since it is also bounded below by zero, $F(\rho, \sigma) \geq 0$, the limit exists.

We now prove the independence of the covering. Let $\{A_x^{(n)}\}_n$ and $\{B_x^{(m)}\}_m$ be two coverings centered at $x$. For each $n$, there exists $m_n$ such that $A_x^{(n)} \subseteq B_x^{(m_n)}$ (see Fig.~\ref{fig:covering}), by property~\ref{def:Convering_centered_around_i}. Moreover, since both coverings exhaust the system, we have $m_n \to \infty$ as $n \to \infty$.

Applying the data processing inequality, we get
\begin{equation}
    F(\rho_{A_x^{(n)}}, O_x \rho_{A_x^{(n)}} O_x^\dagger)
    \geq
    F(\rho_{B_x^{(m_n)}}, O_x \rho_{B_x^{(m_n)}} O_x^\dagger).
\end{equation}
Taking the limit $n \to \infty$ yields
\begin{equation}
    \lim_{n \to \infty} F(\rho_{A_x^{(n)}}; O_x) \geq
    \lim_{m \to \infty} F(\rho_{B_x^{(m)}}; O_x).
\end{equation}
By switching the coverings in the argument above, we get the equality. Therefore, the limit exists, and its value does not depend on the covering. 
\end{proof}

Note that the key property underlying the existence and uniqueness of the covering limit, $|A| \to \infty$, is the data processing inequality (DPI) for the fidelity. As a result, information-theoretic quantities that do not satisfy the DPI, such as the R\'enyi-2, cannot define a robust local SW-SSB measure (see Sec.~\ref{sec:alternative_definitions} for examples where the limit does not exist). Moreover, the DPI guarantees monotonicity under enlarging regions, so that fidelities computed on finite regions provide upper bounds on the local SW-SSB measure and can be used to approximate the limiting value (see also Sec.~\ref{sec:markov_length_symmetry-averaged}).

We also remark that the fidelity between states can be defined already for infinite-volume states~\cite{UHLMANN1976273}. In particular, the one-point fidelity $F(\omega, \omega_O)$, where $\omega_O(\cdot) = \omega_O(O_x^\dagger \cdot O_x)$, has a well-defined and finite real value\footnote{We thank Ruizhi Liu for pointing this out.}, and it can be shown to coincide with the local measure of SW-SSB $F(\omega; O_x)$ defined above through the local-region limiting procedure~\cite{alberti_note_1983}.

We can now sharply define the local notion of SW-SSB for infinite systems:

\begin{definition}[SW-SSB]\label{def:local_SW-SSB}
A state $\omega$ has \textbf{strong-to-weak symmetry breaking} (in the local sense) if the following two conditions are met:
\begin{enumerate}
    \item The (one-point fidelity) measure of SW-SSB is non-vanishing: 
    $F(\omega; O_x) > 0$ 
    \item The state does not exhibit ordinary SSB: $ \lim_{|x-y| \to \infty} \langle O_x O_y\rangle_\omega = 0$ 
\end{enumerate}
\end{definition}

While the covering independence ensures that $F(\omega; O_x)$ is well-defined, it will, in general, depend on the choice of operator. However, in the presence of weak translation symmetry, the LFC is independent of the position at which the operator is inserted, as we show next.

\begin{lemma}
If $\omega$ has weak translation symmetry, then
\begin{equation}
    F(\omega; O_x) = F(\omega; O_y), \qquad \forall y \in \Lambda .
\end{equation}
\end{lemma}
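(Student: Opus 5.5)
The plan is to transport a covering centered at $x$ to a covering centered at $y$ using the translation, and then invoke the covering-independence of Theorem~\ref{thm:existence_and_uniqueness}. Here $O_y$ means the translate of $O_x$: if $T$ is the translation with $T(x)=y$, implemented on operators by the automorphism $\tau$, then $O_y \defeq \tau(O_x)$. Weak translation symmetry means $\omega(\tau(Q)) = \omega(Q)$ for every quasi-local $Q$.

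First, take any covering $\{A_x^{(n)}\}_n$ centered at $x$ and set $B_y^{(n)} \defeq T(A_x^{(n)})$. Translations preserve inclusion and distances, so $\{B_y^{(n)}\}_n$ satisfies the three conditions of Definition~\ref{def:Convering_centered_around_i}. In particular, $\dist(y,\overline{B_y^{(n)}}) = \dist(x,\overline{A_x^{(n)}})\to\infty$, so it is a covering centered at $y$.

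Second, relate the reduced density matrices. For each $n$, translation gives a unitary identification $V_n:\Hilb_{A_x^{(n)}}\to\Hilb_{B_y^{(n)}}$ mapping site to site, with $\tau(Q_A)=V_n Q_A V_n^\dagger$ for $Q_A$ supported on $A_x^{(n)}$. Weak translation symmetry gives, for all such $Q_A$,
\[
\Tr[\rho_{B_y^{(n)}} V_n Q_A V_n^\dagger] = \omega(\tau(Q_A)) = \omega(Q_A) = \Tr[\rho_{A_x^{(n)}} Q_A].
\]
By uniqueness of the reduced density matrix, this yields $\rho_{B_y^{(n)}} = V_n \rho_{A_x^{(n)}} V_n^\dagger$. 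Also $O_y = V_n O_x V_n^\dagger$. Since the fidelity is invariant under joint unitary conjugation, $F(V\rho V^\dagger, V\sigma V^\dagger)=F(\rho,\sigma)$, we get $F(\rho_{B_y^{(n)}};O_y) = F(\rho_{A_x^{(n)}};O_x)$ for every $n$.

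Finally, take $n\to\infty$. The left-hand side converges to $F(\omega;O_y)$ and the right-hand side to $F(\omega;O_x)$. Both limits exist and are independent of the chosen coverings by Theorem~\ref{thm:existence_and_uniqueness}, which is what lets us use the specific translated covering. The only delicate step is the identification of $\rho_{B_y^{(n)}}$ with the conjugated $\rho_{A_x^{(n)}}$. This is a routine bookkeeping point about how the translation automorphism acts on finite-region algebras, so I do not expect a genuine obstacle.
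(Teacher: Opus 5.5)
Your proposal is correct and follows essentially the same route as the paper's proof. Both translate a covering centered at $x$ to one centered at $y$, use weak translation symmetry to identify $\rho_{TA}$ with the unitarily transported $\rho_A$ (you make the uniqueness-of-reduced-density-matrix step explicit, while the paper only says ``one verifies''), invoke unitary invariance of the fidelity, and pass to the covering limit using Theorem~\ref{thm:existence_and_uniqueness}.
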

\begin{proof}
Let $T$ be a translation mapping $x$ to $y$, with unitary representation $U_T$, so that $O_y = U_T O_x U_T^\dagger$. Let $A \subset \Lambda$ be a finite region containing $x$, and denote by $TA$ the corresponding translated region.

The translation induces isomorphism between the local operators on $A$ and local operators on $TA$, defined by
\begin{equation}
    \mathcal{U}_{T,A}(O_A) = U_T O_A U_T^\dagger,
\end{equation}
under which the fidelity is invariant:
\begin{equation}
F(\rho_A, O_x \rho_A O_x^\dagger)
= F(\mathcal{U}_{T,A}(\rho_A), O_y \mathcal{U}_{T,A}(\rho_A) O_y^\dagger).
\end{equation}

Using weak translation symmetry, one verifies that reduced density matrices transform as,
\begin{equation}
    \rho_{TA} = \mathcal{U}_{T,A}(\rho_A),
\end{equation}
which implies
\begin{equation}
    F(\rho_A, O_x \rho_A O_x^\dagger) = F(\rho_{TA}, O_y \rho_{TA} O_y^\dagger).
\end{equation}
Finally, for any covering centered at $x$, $\{A_x^{(n)}\}$, the translated covering, $\{TA_x^{(n)}\}$, defines a covering centered at $y$, and taking the covering limit yields the result.
\end{proof}

Despite the difficulties in defining notions like strong symmetry and global CMI (both require knowing the full density matrix $\rho$ on finite systems) on infinite systems, the local notion of SW-SSB remains well-defined. Moreover, the stability theorem discussed in Sec.~\ref{sec:stability} remains valid on infinite systems: strong symmetry on local channels is perfectly well-defined even on infinite systems, and the proof in Sec.~\ref{sec:stability} only requires the region size $\ell_A$ to be larger than the light-cone size of the channel $\mathcal{E}$. Therefore the local SW-SSB, even on infinite systems, is a universal property robust against strongly symmetric finite-depth channels. 

\subsection{Markov length and symmetry-averaged CMI}\label{sec:markov_length_symmetry-averaged}

In the last section, we showed that the LFC $F(\rho_A; O_x)$ is monotonically non-increasing as the region $A$ gets larger, eventually stabilizing at the limiting value $\lim_{|A| \to \infty} F(\rho_A; O_x) = F(\omega; O_x)$. One might naturally ask how big the region $A$ has to be so that $F(\rho_A; O_x) \approx F(\omega; O_x)$. This is particularly important to the numerical task of estimating local SW-SSB through the local fidelity correlator $F(\rho_A; O_x)$, which can be exponentially hard in the size of $|A|$. Here, we show that the Markov length $\xi_M$ --- the length scale associated with the decay of CMI --- sets an upper bound to the convergence scale of local SW-SSB. The main result is as follows, and we quote from \cite{li_unified_2026}:

\begin{theorem}[Theorem 9' of \cite{li_unified_2026}]\label{thm:local_computability}
    For $\rho \equiv \rho_{ABC}$ a state over a finite tripartioned system $ABC$ and $O_A$ a local operator supported on $A$, we have
    \begin{equation}
        F(\rho_{AB}; O_A) - F(\rho; O_A) \leq c \norm{O_A}_\infty I({A:C}|B)_\rho^{1/2},
    \end{equation}
    where $c$ is a numerical constant.
\end{theorem}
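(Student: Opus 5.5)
The plan is to reduce the statement to the Fawzi--Renner recovery bound, Eq.~\eqref{eq:fawzi_relation}, and then to control how the one-point fidelity changes under a small perturbation of the state, using Uhlmann's theorem.

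\emph{Step 1 (recovery).} Let $\mathcal{R}=\mathcal{R}_{B\to BC}$ be the recovery channel of Eq.~\eqref{eq:fawzi_relation}, and set $\sigma\defeq \mathcal{R}[\rho_{AB}]$. Then $F(\rho,\sigma)\ge e^{-I/2}$ with $I\equiv I(A\!:\!C|B)_\rho$, and hence
\begin{equation}
1-F(\rho,\sigma)\le 1-e^{-I/2}\le I/2 .
\end{equation}

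\emph{Step 2 (data processing).} Since $\mathcal{R}$ acts only on $B$ and $O_A$ is supported on $A$, the channel commutes with conjugation by $O_A$:
\begin{equation}
\mathcal{R}[O_A\rho_{AB}O_A^\dagger]=O_A\sigma O_A^\dagger .
\end{equation}
Monotonicity of fidelity under CPTP maps holds equally when one argument is only positive, not normalized. It therefore gives
\begin{equation}
F(\rho_{AB};O_A)\le F(\sigma;O_A).
\end{equation}
It remains to show that $F(\sigma;O_A)-F(\rho;O_A)\le c\norm{O_A}_\infty I^{1/2}$.

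\emph{Step 3 (continuity via Uhlmann).} I use the variational form
\begin{equation}
F(\tau;O)=\max_{\psi,\psi'}|\langle\psi|(O\otimes\one)|\psi'\rangle|,
\end{equation}
where $\psi,\psi'$ range over purifications of $\tau$ on a sufficiently large reference system. This holds because $(O\otimes\one)\ket{\psi'}$ purifies $O\tau O^\dagger$, and every purification of $O\tau O^\dagger$ arises this way up to a unitary on the reference. In fact only the inequality $F(\rho;O)\ge|\langle\psi_\rho|O|\psi'_\rho\rangle|$, valid for any purifications $\psi_\rho,\psi'_\rho$ of $\rho$, is needed for the lower side.

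Take $\psi_\sigma,\psi'_\sigma$ attaining $F(\sigma;O_A)$. By Uhlmann's theorem, choose purifications $\psi_\rho,\psi'_\rho$ of $\rho$ with $\langle\psi_\sigma|\psi_\rho\rangle=F(\sigma,\rho)$ (the phase can be fixed real and positive), and likewise for the primed pair. Then
\begin{equation}
\norm{\psi_\sigma-\psi_\rho}=\sqrt{2(1-F(\rho,\sigma))}\le\sqrt{I},
\end{equation}
and the same bound holds for the primed pair. A two-term triangle inequality gives
\begin{equation}
|\langle\psi_\sigma|O|\psi'_\sigma\rangle|-|\langle\psi_\rho|O|\psi'_\rho\rangle|\le 2\norm{O_A}_\infty\sqrt{I}.
\end{equation}
Combining this with Step~2 and the lower-side inequality for $F(\rho;O_A)$ yields the claim with $c=2$.

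The main subtlety I expect is Step~3. One must justify the Uhlmann-type variational formula for $F(\tau;O)$ when the second state is unnormalized. One must also ensure that the purifications of $\rho$ and $\sigma$ live on a common reference space large enough for both Uhlmann optimizations. Taking the reference to be a copy of $ABC$ handles both points, since $\rho$ and $\sigma$ are both states on $ABC$. Everything else is a direct application of data processing and Eq.~\eqref{eq:fawzi_relation}.
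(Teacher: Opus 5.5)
Your argument is correct and complete, and gives the bound with $c=2$: Fawzi--Renner recovery, then data processing (the $B\to BC$ recovery channel commutes with conjugation by $O_A$, so $F(\rho_{AB};O_A)\le F(\mathcal{R}[\rho_{AB}];O_A)$), then an Uhlmann-purification continuity estimate. This is the route the paper indicates: it defers the proof to Ref.~\cite{li_unified_2026} and only sketches that approximately recovering $C$ from $B$ leaves the local fidelity correlator essentially unchanged, so your proof fills in that sketch.
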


The intuition behind this result is that if the CMI $I({A:C}|B)$ is small, then the information of the state on $C$ can be approximately recovered from $B$~\cite{fawzi_quantum_2015}, which does not alter the local fidelity correlator $F(\rho_{AB}; O_A)$.

In the notation of the theorem above, $A$ can be as small as the support of $O_A$, which in many cases is a single site. Keeping $B$ fixed and taking the limit as $C$ covers the whole system, we have

\begin{corollary}
    For $\omega$ a state in an infinite-volume tripartitioned system $ABC$ with $AB$ finite, and $O_A$ a local operator supported on $A$, we have
    \begin{equation}
        F(\rho_{AB};O_A) - F(\omega;O_A) \leq c \norm{O_A}_\infty I({A:C}|B)_\omega^{1/2}, 
    \end{equation}
    where the \textbf{local CMI} $I({A:C}|B)_\omega$ is defined as
    \begin{equation}
        I({A:C}|B)_\omega \defeq \lim_{n\to \infty} I({A:C}^{(n)}|B)_{\rho_{ABC^{(n)}}},    
    \end{equation}
    with $ABC^{(n)}$ a covering containing $AB$ (See Fig.~\ref{fig:local_computability}).
\end{corollary}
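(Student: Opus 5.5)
The corollary follows from Theorem~\ref{thm:local_computability} by a limiting argument. We fix the finite regions $A$ and $B$ and take a covering $\{ABC^{(n)}\}_n$ containing $AB$, where $C^{(n)}$ is the complement of $AB$ inside the $n$-th region.

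\textbf{Step 1: apply the finite-volume theorem at each stage.} For each $n$, the reduced density matrix $\rho_{ABC^{(n)}}$ of $\omega$ is a finite-volume state on the tripartition $A|B|C^{(n)}$. Its marginal on $AB$ is exactly $\rho_{AB}$, by consistency of reduced density matrices of $\omega$. Theorem~\ref{thm:local_computability} therefore gives
\begin{equation}
F(\rho_{AB};O_A) - F(\rho_{ABC^{(n)}};O_A) \le c\,\norm{O_A}_\infty\, I(A:C^{(n)}|B)_{\rho_{ABC^{(n)}}}^{1/2}
\end{equation}
for every $n$.

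\textbf{Step 2: the left-hand side converges.} Since $\{ABC^{(n)}\}$ is a covering centered at any site of $A$, Theorem~\ref{thm:existence_and_uniqueness} gives
\begin{equation}
F(\rho_{ABC^{(n)}};O_A)\to F(\omega;O_A).
\end{equation}
The sequence is moreover monotonically non-increasing. Strictly, that theorem is stated for an operator $O_x$ at a site $x$. The same argument applies verbatim to $O_A$ with finite support, using condition~3 of Definition~\ref{def:Convering_centered_around_i} applied to the support of $O_A$.

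\textbf{Step 3: the local CMI converges.} This is the only point needing care: we must show that the limit defining $I(A:C|B)_\omega$ exists, and is independent of the covering. Write
\begin{equation}
I(A:C^{(n)}|B) = I(A:BC^{(n)}) - I(A:B).
\end{equation}
By monotonicity of mutual information under partial trace (data processing), $I(A:BC^{(n)})$ is non-decreasing in $n$. It is also bounded above by $2\log\dim\Hilb_A$. Hence the limit exists. Covering independence follows by the same sandwiching argument as in Theorem~\ref{thm:existence_and_uniqueness}: each region of one covering is contained in some region of the other, and monotonicity then gives inequalities in both directions.

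\textbf{Step 4: pass to the limit.} Take $n\to\infty$ in the inequality of Step 1. The left-hand side converges by Step 2. On the right-hand side, $t\mapsto t^{1/2}$ is continuous, so it converges by Step 3. This yields the claimed bound.

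The main obstacle is modest: checking that monotone-limit arguments for both the fidelity and the CMI hold for a finitely supported $O_A$ rather than a single site. Both reduce to data processing under partial traces.
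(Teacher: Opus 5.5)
Your proof is correct and follows the paper's route. Like the paper, you apply Theorem~\ref{thm:local_computability} to each truncation $\rho_{ABC^{(n)}}$ and pass to the limit, using two facts: the monotone convergence of the LFC from Theorem~\ref{thm:existence_and_uniqueness}, and that $I(A:C^{(n)}|B)$ is non-decreasing in $n$, bounded by $2\log\dim\Hilb_A$, and covering-independent by the same sandwiching argument. Your chain-rule justification of the CMI monotonicity and your remark on finitely supported $O_A$ simply spell out what the paper states in passing.
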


\begin{figure}[t]
    \centering
    \begin{tikzpicture}[ scale=0.7, boundary/.style={draw=black!70, line width=0.8pt}, labelstyle/.style={font=\large}]
        
        \def\rA{0.4}
        \def\rAB{\rA+1}
        \def\rCi{(\rAB+0.3)}
        \def\rCf{(\rCi+1)}

        \foreach \step in {0,...,3} {
            \pgfmathsetmacro{\r}{\rCf - \step * (\rCf-\rCi)/3}
            \fill[green!20, opacity=0.5, rounded corners=3em] (-\r,-\r) rectangle (\r, \r);
        }
        
        \draw[boundary, fill=black!10] (0,0) circle (\rAB);
        \draw[boundary, fill=blue!12] (0,0) circle (\rA);
        
        \node[labelstyle] at (0,0) {$A$};
        \node[labelstyle] at (-{(\rAB + \rA)/2 + 0.3},{(\rAB + \rA)/2 - 0.3}) {$B$};
        \node[labelstyle] at (-{(\rCi + \rCf)/2 + 0.7},{(\rCi + \rCf)/2 - 0.7}) {$C^{(n)}$};
        
        \node[rotate=45, font=\Large] at ({0.75*\rCf}, {0.75*\rCf}) {$\cdots$};
        \node[rotate=-45, font=\Large] at ({0.75*\rCf}, {-0.75*\rCf}) {$\cdots$};
        \node[rotate=-135, font=\Large] at ({-0.75*\rCf}, {-0.75*\rCf}) {$\cdots$};

        \draw[<->, black, thick] (\rA,0) -- (\rAB,0) node[midway, below, font=\normalsize] {$\ell$};

        \begin{scope}[xshift=4cm, yshift=-1.75cm] 
            \begin{axis}[
                width=6cm, 
                height=5cm, 
                xmin=0, xmax=10, 
                ymin=0, ymax=1, 
                axis lines=left, 
                xlabel={\Large $\ell$}, 
                xtick=\empty, 
                ytick=\empty, 
                clip=false, 
                very thick,
            ]
            
            \addplot[thick, BlueViolet, domain=0:10, samples=100] {0.35 + 0.55*exp(-x/2)};

            \draw[<->, black, thick] (axis cs:0,0.52) -- (axis cs:2.2,0.52) node[midway, below, font=\Large] {$\xi_M$};

            \node[BlueViolet!80!black, font=\Large, anchor=south west] at (axis cs:3.2, 0.45)
            {$F(\rho_{AB};O_A)$};

            \end{axis}
        \end{scope}
    \end{tikzpicture}
\caption{The tripartitions used in defining the local CMI $I({A:C}|B)_\omega = \lim_{n \to \infty} I({A:C}^{(n)}|B)$ (left), whose characteristic length scale $\xi_M$, the Markov length, dictates the convergence behavior of the LFC $F(\rho_{AB};O_A)$ in $\ell = \dist(A,C)$ (right).}
    \label{fig:local_computability}
\end{figure}

The $|C| \to \infty$ limit in the local CMI exists because $I(A:C|B)$ is monotonically non-decreasing as $C$ grows, and upper bounded by $2 \log \dim(\Hilb_A)$, assumed small. Moreover, the limit is independent of the covering $ABC^{(n)}$ for the same reason as $F(\omega; O_A)$ is (see proof of Theorem~\ref{thm:existence_and_uniqueness}). 

In the particular case where the local CMI decays exponentially as $I({A:C}|B)_\omega = O(\exp({-\ell/ \xi_M}))$, where $\ell \defeq \dist(A,C)$ is the width of $B$ and $\xi_M$ is the Markov length~\cite{sang2024stability}, the LFC $F(\rho_{AB}; O_A)$ will be exponentially close to its limiting value $F(\omega; O_A)$ for $\ell \gg \xi_M$ (See Fig.~\ref{fig:local_computability}). 

Even though the local CMI $I({{A:C}}|B)_\omega$ is a probe of conditional correlations of the infinite-volume state $\omega$, it is defined through the local density matrices by construction. This means that the long-range CMI coming from the strong symmetry constraint of finite-volume SW-SSB states, as discussed in Sec.~\ref{sec:long-range_CMI}, is not shared by $I({A:C}|B)_\omega$. Despite this, we now show a local counterpart to the long-range CMI result that is well-defined in the thermodynamic limit by reintroducing the strong symmetry condition on the local state. To do so, we restrict ourselves to on-site symmetry representations $U = \prod_i u_i$ of Abelian groups $G$ and states $\omega$ that are weakly symmetric under $U$. For those symmetries, the reduced density matrix $\rho_{R}$ on a finite region $R$ is also weakly symmetric --- now under $U_{R} = \prod_{i \in R} u_i$ --- and thus block diagonal in the symmetry basis. In other terms, it admits the isotypic decomposition $\rho_{R} = \sum_\lambda p_{R}^{(\lambda)} \rho_{R}^{(\lambda)}$, where $p_{R}^{(\lambda)} = \Tr[\rho_R \Pi_R^{(\lambda)}]$ and $\rho_{R}^{(\lambda)} = \Pi_R^{(\lambda)} \rho_{R} / p_R^{(\lambda)}$, with $\Pi_R^{(\lambda)}$ being the projector onto the irrep subspace of $\lambda : G \to U(1)$. Each isotypic component $\rho_{R}^{(\lambda)}$ is strongly symmetric under $U_{R}$, as $U_{R}(g) \rho_{R}^{(\lambda)} = \lambda(g) \rho_R^{(\lambda)}$, and is then amenable to the previous analysis. In particular, under an additional assumption of local indistinguishability, we have
\begin{theorem}[Long-range symmetry-averaged CMI]\label{thm:long-range_symmetry-averaged_CMI}
    Let $\omega$ be weakly symmetric under an on-site Abelian symmetry representation $U = \prod_i u_i$ and $AB$ a finite region. If, for all irreps $\lambda$, its isotypical component $\rho_{ABC}^{(\lambda)} $ is locally indistinguishable to $\rho_{AB}$ for sufficiently large $C$, i.e. $\lim_{n \to \infty} \norm{\Tr_{C^{(n)}}[\rho_{ABC^{(n)}}^{(\lambda)}] - \rho_{AB}}_1 = 0$, then
    \begin{equation}
        I({A:C}|B)^{\text{Sym}}_{\omega} \geq \left( \frac{F(\rho_{AB};O_A)}{c \norm{O_A}_\infty} \right)^2,
    \end{equation}
    where the \textbf{symmetry-averaged local CMI} $I({A:C}|B)^{\text{Sym}}_{\omega}$ is defined as
    \begin{equation}
        I({A:C}|B)^{\text{Sym}}_{\omega} \defeq \lim_{n\to \infty}\sum_{\lambda} p_{ABC^{(n)}}^{(\lambda)} I({A:C}^{(n)}|B)_{\rho_{ABC^{(n)}}^{(\lambda)}}.
    \end{equation}
\end{theorem}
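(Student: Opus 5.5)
The plan is to apply Theorem~\ref{thm:LRCMI} (equivalently Theorem~\ref{thm:local_computability}) to each isotypic component separately, and then average over charge sectors. Fix $n$ and write $R = ABC^{(n)}$. Each component $\rho_R^{(\lambda)}$ is strongly symmetric under $U_R$. Hence $F(\rho_R^{(\lambda)}; O_A) = 0$ for a charged $O_A$, since $O_A\rho_R^{(\lambda)}O_A^\dagger$ lies in a different charge sector and is orthogonal to $\rho_R^{(\lambda)}$. Theorem~\ref{thm:local_computability} then gives, for every $\lambda$,
\begin{equation}
    I({A:C^{(n)}}|B)_{\rho_R^{(\lambda)}} \geq \left( \frac{F(\sigma^{(\lambda)}_{n}; O_A)}{c\norm{O_A}_\infty} \right)^2,
    \qquad \sigma^{(\lambda)}_n \defeq \Tr_{C^{(n)}}[\rho_R^{(\lambda)}].
\end{equation}
Multiplying by $p_R^{(\lambda)}$ and summing over $\lambda$ yields a lower bound of the form $\sum_\lambda p_R^{(\lambda)} F(\sigma_n^{(\lambda)};O_A)^2/(c\norm{O_A}_\infty)^2$ on the symmetry-averaged CMI at step $n$.

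The next step replaces $F(\sigma_n^{(\lambda)};O_A)$ by $F(\rho_{AB};O_A)$ using the local indistinguishability hypothesis. Fidelity is continuous in trace distance. The cleanest route is the bound $|F(\rho,\sigma)-F(\rho',\sigma')| \le \sqrt{\norm{\rho-\rho'}_1}+\sqrt{\norm{\sigma-\sigma'}_1}$, which follows from the Fuchs--van de Graaf / Bures-distance triangle inequality. Combine it with $\norm{O_A\sigma O_A^\dagger - O_A\rho O_A^\dagger}_1 \le \norm{O_A}_\infty^2\norm{\sigma-\rho}_1$. One should rescale $O_A$ to unit norm first, which is harmless because both sides of the claimed inequality are homogeneous in $O_A$. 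This gives $F(\sigma_n^{(\lambda)};O_A) \ge F(\rho_{AB};O_A) - \epsilon_n^{(\lambda)}$ with $\epsilon_n^{(\lambda)} \to 0$ for each $\lambda$.

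For finite $G$ there are finitely many irreps, so the errors vanish uniformly. Since $\sum_\lambda p_R^{(\lambda)}=1$, the averaged bound tends to $(F(\rho_{AB};O_A)/c\norm{O_A}_\infty)^2$.

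The main obstacle is taking the limit. First, the limit defining $I^{\text{Sym}}_\omega$ must exist, or else the statement should be read with $\liminf$, which is all the lower bound needs. Monotonicity in $n$ is not obvious for the sector-averaged quantity, because the sectors change with $n$. Second, for infinite Abelian groups such as $U(1)$, there are infinitely many sectors and the convergence need not be uniform in $\lambda$. To handle this, I would use that $AB$ is finite, so only finitely many $\sigma^{(\lambda)}_n$ are relevant in the sense of the hypothesis. Alternatively, I would truncate to a finite set of sectors carrying weight $\geq 1-\delta$ for large $n$, apply the uniform estimate there, and drop the rest, since CMI is nonnegative. Then let $\delta\to 0$. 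If even this fails, I would interpret the hypothesis as uniform in $\lambda$, which is likely what is intended.
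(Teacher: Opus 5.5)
\textbf{Gap: existence of the limit.} Your derivation of the inequality follows the paper. You apply Theorem~\ref{thm:LRCMI} (Theorem~\ref{thm:local_computability} with $F(\rho^{(\lambda)}_{ABC^{(n)}};O_A)=0$) to each isotypic component, average with weights $p^{(\lambda)}_{ABC^{(n)}}$, and use continuity of fidelity plus local indistinguishability to replace $F(\sigma^{(\lambda)}_n;O_A)$ by $F(\rho_{AB};O_A)$.

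What is missing is the first half of the paper's proof. The statement defines $I(A:C|B)^{\text{Sym}}_\omega$ as a limit, and the paper proves that this limit exists and is covering-independent. Passing to $\liminf$ gives a valid bound, but it does not show the quantity is well defined, and the monotonicity obstacle you identified stays open.

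The idea you need is to write the sector average as one CMI with a classical charge register. Set $\sigma_n \defeq \sum_\lambda p^{(\lambda)}_{ABC^{(n)}}\, \rho^{(\lambda)}_{ABC^{(n)}}\otimes\op{\lambda}_{Q_n}$. The averaged quantity is then $I(A:C^{(n)}|BQ_n)_{\sigma_n}$, and the chain rule gives
\begin{equation}
I(A:C^{(n)}|BQ_n) = I(A:C^{(n)}Q_n|B) - I(A:Q_n|B).
\end{equation}
The first term is non-decreasing in $n$ and bounded by $2\log\dim\mathcal{H}_A$. For $m>n$, the system $C^{(n)}Q_n$ is obtained from $C^{(m)}Q_m$ by a channel acting only on that system: measure the charge $\lambda'$ of $C^{(m)}\setminus C^{(n)}$, record $\lambda_n=\lambda_m/\lambda'$, and discard $Q_m$ and $C^{(m)}\setminus C^{(n)}$. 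Data processing then applies, and covering independence follows as in Theorem~\ref{thm:existence_and_uniqueness}. The second term vanishes as $n\to\infty$ by local indistinguishability. The $ABQ_n$ marginal of $\sigma_n$ approaches $\rho_{AB}\otimes\sum_\lambda p^{(\lambda)}_{ABC^{(n)}}\op{\lambda}_{Q_n}$, while its $AB$ marginal is exactly $\rho_{AB}$. This is what handles the fact that the sectors change with $n$.

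\textbf{Infinite groups.} Your truncation fallback does not work as stated. For $U(1)$ the distribution $p^{(\lambda)}_{ABC^{(n)}}$ drifts and broadens with $n$, so no $n$-independent finite set of sectors keeps weight $1-\delta$. Some uniform version of the hypothesis is genuinely needed, and the paper's last step (``since $\sum_\lambda p^{(\lambda)}=1$'') tacitly uses one too.

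The natural choice is the weighted condition $\sum_\lambda p_\lambda\epsilon_\lambda\to0$, where $p_\lambda=p^{(\lambda)}_{ABC^{(n)}}$ and $\epsilon_\lambda=\norm{\sigma^{(\lambda)}_n-\rho_{AB}}_1$. This sum is exactly the trace distance between the $ABQ_n$ marginal of $\sigma_n$ and the product state above. It makes $I(A:Q_n|B)\to0$ rigorous through continuity of conditional entropy, whose constants depend only on $\dim\mathcal{H}_A$. It also suffices for your bound. Write $F_\lambda=F(\sigma^{(\lambda)}_n;O_A)$ with $O_A$ rescaled to unit norm. Then $\sum_\lambda p_\lambda F_\lambda^2\geq(\sum_\lambda p_\lambda F_\lambda)^2$ and $\sum_\lambda p_\lambda\sqrt{\epsilon_\lambda}\leq\sqrt{\sum_\lambda p_\lambda\epsilon_\lambda}$, so the averaged lower bound tends to $F(\rho_{AB};O_A)^2$.
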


\begin{proof}
    First, let us prove that the limit defining $I({A:C}|B)^{\text{Sym}}_\omega$ exists and is independent of the covering. For the existence, note that this quantity equals the CMI $I({A:C}^{(n)}|B,Q_n)_{\sigma_n}$ of the quantum-classical state 
    \begin{equation}
        \sigma_n \defeq \sum_\lambda p_{ABC^{(n)}}^{(\lambda)} \rho_{ABC^{(n)}}^{(\lambda)} \otimes \op{\lambda}_{Q_n}.    
    \end{equation}
    In words, $\sigma_n$ arises from measuring the total charge (i.e., irrep) $\lambda$ in the region $ABC^{(n)}$ and recording the result in the classical register $Q_n$. From the chain rule of the conditional mutual information, we have
    \begin{equation}
        I({A:C}^{(n)}|B Q_n) = I({A:C}^{(n)} Q_n|B) - I(A:Q_n|B).
    \end{equation}
    We evaluate the two terms on the right-hand side. For the first term, consider nested regions $C^{(m)} \supset C^{(n)}$. For Abelian symmetries, the charge $\lambda_n$ of the region $ABC^{(n)}$ is related to the charges $\lambda'$ of the region $\Delta C \defeq C^{(n)} \setminus C^{(m)}$ and $\lambda_m$ as $\lambda_n = \lambda_m / \lambda'$. In that way, the joint system $C^{(n)} Q_n$ can thus be obtained from $C^{(m)} Q_m$ via a local CPTP map: measure $\lambda'$ on $\Delta C$, compute $\lambda_n$ and store the result on a new register $Q_n$, and trace out $Q_m$ and $\Delta C$. Because this map acts strictly on the target system of the mutual information, the data processing inequality yields $I({A:C}^{(m)} Q_m|B) \geq I({A:C}^{(n)} Q_n|B)$. Being monotonically non-decreasing and bounded by $2 \log \dim(\Hilb_A)$, this term converges to a covering-independent limit by the same arguments of Theorem \ref{thm:existence_and_uniqueness}. 

    For the second term, the local indistinguishability condition $\lim_{n \to \infty} \norm{\rho_{AB, n}^{(\lambda)} - \rho_{AB}}_1 = 0$, where $\rho_{AB,n}^{(\lambda)} \defeq \Tr_{C^{(n)}} [\rho_{ABC^{(n)}}^{(\lambda)}]$, dictates that the local state on $AB$ becomes independent of the global charge $Q_n$. Hence, $\lim_{n\to \infty}\Tr_{C^{(n)}} [\sigma_n] \propto \rho_{AB} \otimes \one_{Q_n}$, which implies $I(A:Q_n|B) \to 0$. Thus, the limit of the symmetry-averaged CMI $I({A:C}^{(n)}|B Q_n)$ exists and is uniquely defined.

    With the limit rigorously established, we apply the finite-volume bound. Since each $\rho_{ABC^{(n)}}^{(\lambda)}$ is strongly symmetric, Theorem \ref{thm:LRCMI} yields 
    \begin{equation}
        I({A:C}^{(n)}|B)_{\rho^{(\lambda)}_{ABC^{(n)}}} \geq \left( \frac{F(\rho_{AB,n}^{(\lambda)} ; O_A)}{c \norm{O_A}_\infty} \right)^2.    
    \end{equation}
    Averaging over all irreps $\lambda$, we obtain
    \begin{equation}
        I({A:C}^{(n)}|B Q_n) \geq \sum_{\lambda} p_{ABC^{(n)}}^{(\lambda)} \left( \frac{F(\rho_{AB,n}^{(\lambda)} ; O_A)}{c \norm{O_A}_\infty} \right)^2.
    \end{equation}
    Taking the limit $n \to \infty$, local indistinguishability guarantees $F(\rho_{AB,\lambda}^{(n)}; O_A) \to F(\rho_{AB}; O_A)$ by the continuity of fidelity. Since $\sum_\lambda p_{ABC^{(n)},\lambda} = 1$, the right-hand side converges exactly to the desired lower bound, completing the proof.
\end{proof}

As seen in the proof of the above theorem, the symmetry-averaged local CMI $I({{A:C}}|B)^{\text{Sym}}_\omega$ can be interpreted as the CMI when the symmetry charge of increasingly large regions is known and conditioned upon. Similarly to the finite volume case of Theorem~\ref{thm:LRCMI}, it is also long-range if $\omega$ exhibits local SWSSB, with the crucial difference being that it is well defined in the thermodynamic limit. For that, we needed the extra local indistinguishability assumption, which we expect to hold for the physically motivated scenarios where local observables are oblivious to global strong symmetry constraints. For Gibbs states of local Hamiltonians, for example, this was proven at sufficiently high temperatures~\cite{negari_symmetry_2025}. 

\subsection{Order-disorder inequalities}
\label{sec:order-disorder}

The quantum fidelity is the minimum over all measurements (POVMs) of the classical fidelity between the measured probability distributions~\cite{nielsen_quantum_2010}. This implies
\begin{align}
    F(\rho_A; O_x) & = \min_{\{ E^A_m\}} \sum_m \sqrt{\Tr[E^A_m \rho_A] \Tr[E^A_m O_x \rho_A O_x^\dagger]} \\
    & = \min_{\{ E^A_m\}} \sum_m \sqrt{\langle E_m^A\rangle_\omega \langle O_x^\dagger E^A_m O_x\rangle_\omega},
\end{align}
where the minimum is over all POVMs $E_m^A \geq 0$, $\sum_m E_m^A = \one$, supported on $A$. 

The conceptual advantage of this reexpression is that it describes the one-point fidelity directly in terms of observable quantities: the measurement probabilities 
\begin{align}
    p_{m} & = \Tr[E_m^A \rho_A] = \omega(E_m^A),
    \\ p_{m,O_x} & = \Tr[E_m^A O_x \rho_A O_x^\dagger] = \omega(O_x^\dagger E_m^A O_x),
\end{align}
which in turn makes it an explicit function of the global state $\omega$. This allows us to take the thermodynamic limit $|A| \to \infty$, arriving at
\begin{align}
    F(\omega; O_x) & = \inf_{\{ E_m\}} \sum_m \sqrt{\langle E_m\rangle_\omega \langle O_x^\dagger E_m O_x\rangle_\omega},
\end{align}
where now the infimum is over all (quasi-)local POVMs. 

In the case where $O_x$ is unitary\footnote{As explained in Appendix~\ref{app:unitary_OP}, the assumption of a unitary order parameter is not a strong one.}, $p_{m,O}$ can be viewed as the probabilities of measuring the charge-flipped POVM $E_{m,O_x}^A \defeq O_x^\dagger E_m^A O_x$ on $\omega$. This motivates a natural ansatz for a distinguishing POVM: measuring the local charge of $A$. More precisely, for an on-site symmetry and a finite region $A$, we can take $E_\lambda = \Pi_A^{(\lambda)}$, the projector onto the irrep (i.e., charge) sector $\lambda$ of region $A$. Suppose, for simplicity, an Abelian symmetry group $G$ and let $\mu : G \to U(1)$ be the charge of the \textit{unitary} charged operator $O_x$ (so $U_g O_x U_g^\dagger = \mu_g O_x$). Then, we have
\begin{align}\label{eq:order-disorder_general}
    F(\rho_A; O_x) \leq \sum_\lambda \sqrt{p_A^{(\lambda)} p_{A}^{(\lambda \mu^*)}},
\end{align}
where $p_A^{(\lambda)} = \langle \Pi_A^{(\lambda)}\rangle = \frac{1}{|G|} \sum_{g \in G} \lambda_g^* \langle U_{A,g} \rangle$. Since the right-hand side (RHS) depends only on the expectation value of the disorder operators $U_{A,g} = \prod_{i \in A} u_{i,g}$, we call Eq. \eqref{eq:order-disorder_general} an \textit{order-disorder inequality}. Moreover, the RHS is a faithful measure of strong asymmetry, since it is zero if, and only if, $\rho_A$ is strongly symmetric, and it is invariant under any strongly symmetric CPTP map acting on region $A$, since it only depends on the probability weights $p_A^{(\lambda)}$. These properties qualify the RHS as a proper monotone in the resource theory of strong asymmetry described in Ref.~\cite{kusuki_resourcetheoretic_2026}.

For the simple case of $G = \Z_2$ generated by $U = \prod_i u_i$, this order-disorder inequality simplifies to 
\begin{equation}\label{eq:order-disorder_Z2}
    F(\rho_A; O_x)^2 + |\langle U_{A}\rangle|^2 \leq 1,
\end{equation}
where it is now clear the tradeoff between the local fidelity order parameter and the linear disorder parameter. Due to the monotonicity of the LFC, we also have
\begin{equation}
    F(\omega, O_x)^2 + \sup_{A \supseteq \supp(O)}|\langle U_A \rangle|^2 \leq 1.
\end{equation}
For many states, the disorder operator decreases as $A$ increases, in which case the supremum is saturated at the smallest $A = \supp(O)$. For example, in the case of the $\Z_2$ spin flip symmetry $U = \prod_i X_i$ and $O_x = Z_x$, it reduces to
\begin{equation}
    F(\omega; Z_i) \leq \sqrt{1 - \langle X_{x}\rangle^ 2}.
\end{equation}

These inequalities derived from taking the particular charge-measurement POVM are not necessarily tight. Indeed, at transitions in or out of a SW-SSB phase, the LFC is oftentimes singular at the transition point, even though the disorder operator $\langle U_A \rangle$ --- being a linear observable --- is not. In some simple cases, however, these bounds can be saturated. One such case is the Gibbs state of a quantum paramagnet, $\rho_\beta = \exp(\beta \sum_i X_i)$. The LFC for such state is calculated in Appendix \ref{app:paramagnet} and it satisfies
\begin{equation}
    F(\rho_{\beta, A}; Z_x)^2 + |\langle X_x\rangle|^2 = \frac{1}{\cosh^2(\beta)} + \tanh^2(\beta) = 1.
\end{equation}

It is instructive to compare the order-disorder inequalities above to the analogous constraints for ground states of $1d$ quantum spin chains, in particular the results of Ref.~\cite{levin_constraints_2020a}. On one side, our bounds hold for any mixed state, whereas Ref.~\cite{levin_constraints_2020a} is restricted to gapped ground states. On the other side, this limited scope allows for a much stronger and direct tradeoff between order and disorder parameters: one or the other is present. Such a relation is not implied here.

\section{R\'enyi correlators}
\label{sec:alternative_definitions}

The global SW-SSB can be characterized in several different ways, some of which (such as the R\'enyi-$1$ correlator~\cite{liu2025diagnosing,weinstein2025efficient}) are fully equivalent to the fidelity correlator. This is also true for the local SW-SSB. In this section, we discuss a natural family is provided by one-point R\'enyi correlators:
\begin{definition}
Given a local charged operator $O_x$ and a finite region $A$ containing $x$, the one-point local R\'enyi-$\alpha$ correlator of a state is defined, for $\alpha>0$, by
\begin{equation}
    R^{(\alpha)}(\rho_A;O_x)  =
    \frac{\Tr(\rho_A^{\alpha/2} O_x \rho_A^{\alpha/2} O_x^\dagger)}
         {\Tr(\rho_A^\alpha)} .
\end{equation}
\end{definition}
Again we are interested in its behavior as $|A|$ grows. On an infinite system, similar to the discussions in Sec.~\ref{sec:infinite}, one can define the $|A|\to\infty$ limit by considering a covering centered at $x$:
\begin{equation}
    R^{(\alpha)}(\omega;O_x)=\lim_{n\to\infty} R^{(\alpha)}(\rho_{A_x^{(n)}};O_x)\equiv\lim_{|A|\to\infty}R^{(\alpha)}(\rho_A;O_x).
\end{equation}

The corresponding two-point R\'enyi correlators have been widely used in the literature. When $\alpha$ is an even integer, they are particularly prone to replica calculations since only integer powers of $\rho$ are required, which makes them accessible both analytically and numerically. The most prominent example is the R\'enyi-$2$ correlator~\cite{LeeJianXu,Ma2024SPTdoubled}, which admits a simple interpretation as an ordinary two-point function of the vectorized state $\kett{\rho}$ in the doubled Hilbert-space formalism. Another distinguished case is the two-point R\'enyi-$1$ correlator, which admits a natural representation in terms of ordinary two-point functions in the canonical purification $\kett{\sqrt{\rho}}$, and its long-distance behavior is equivalent to that of the fidelity correlator.

For $\alpha\neq 1$, R\'enyi correlators do not in general provide an intrinsic definition of SW-SSB. The underlying obstruction is that they do not satisfy the data processing inequality. As a consequence, they are not robust under strongly symmetric finite-depth deformations and can therefore appear to distinguish states that belong to the same mixed-state phase. On an infinite system, the same lack of the DPI leads to a more basic problem: the covering limit, $|A| \to \infty$ defining $R^{(\alpha)}(\omega;O_x)$, depends on the choice of covering $\{A^{(n)}\}$ and hence is not well defined.

This can be illustrated by a simple dimer example in a one-dimensional spin-$1/2$ chain. Consider
\begin{equation}
     \rho = \frac{1}{2} |\psi_{\rm e} \rangle \langle \psi_{\rm e}| + \frac{1}{2}|\psi_{\rm o} \rangle \langle \psi_{\rm o}|, 
\end{equation}
where $|\psi_{\rm e}\rangle$ and $|\psi_{\rm o}\rangle$ are dimerized states made of singlets between nearest-neighbors, that is
\newcommand{\dimerchain}[1]{
\begin{tikzpicture}[baseline=-0.5ex]
    \foreach \x in {1,...,8} {
        \fill (\x*0.4,0) circle (1.2pt) node (n\x) {};
    }
    \ifnum#1=0
        \foreach \i in {1,3,5,7} {
            \pgfmathtruncatemacro{\j}{\i+1}
            \draw[rounded corners=2.5pt, line width=0.4pt] 
                (\i*0.4-0.12, -0.12) rectangle (\j*0.4+0.12, 0.12);
        }
    \else
        \draw[rounded corners=2.5pt, line width=0.4pt] 
                (0.4-0.12, 0.12) -- (0.4+0.12, 0.12) -- (0.4+0.12, -0.12) -- (0.4-0.12, -0.12);
        \foreach \i in {2,4,6} {
            \pgfmathtruncatemacro{\j}{\i+1}
            \draw[rounded corners=2.5pt, line width=0.4pt] 
                (\i*0.4-0.12, -0.12) rectangle (\j*0.4+0.12, 0.12);
        }
        \draw[rounded corners=2.5pt, line width=0.4pt] 
                (0.4*8+0.12, 0.12) -- (0.4*8-0.12, 0.12) -- (0.4*8-0.12, -0.12) -- (0.4*8+0.12, -0.12);
    \fi
    \pgfresetboundingbox
    \path (1*0.4-0.15, -0.15) rectangle (8*0.4+0.15, 0.15);
\end{tikzpicture}
}
\begin{equation}
\begin{aligned}
    \ket{\psi_{\rm e}} &= |\cdots \dimerchain0 \cdots \rangle,\\
    \ket{\psi_{\rm o}}  &= |\cdots \dimerchain1 \cdots\rangle,
\end{aligned}
\end{equation}
where each obround denotes the singlet $\frac{1}{\sqrt{2}}\left(\ket{\uparrow\downarrow}-\ket{\downarrow\uparrow}\right)$.
Now, consider the two coverings
\begin{equation}
    A^{(n)}=[-2n,2n+1],
    \qquad
    B^{(n)}=[-2n-1,2n].
\end{equation}
The interval $A^{(n)}$ respects the even dimers but cuts the odd dimers at its endpoints, so that
\begin{equation}
    \rho_{A^{(n)}}
    =
    \frac{1}{2}\ket{\psi_{\rm e}}\bra{\psi_{\rm e}}_{A^{(n)}}
    +
    \frac{1}{2} \cdot 
    \frac{\one}{2}\otimes
    \ket{\psi_{\rm o}}\bra{\psi_{\rm o}}_{B^{(n-1)}}
    \otimes \frac{\one}{2}.
\end{equation}
Similarly, $B_N$ respects $\ket{\psi_{\rm o}}$ but cuts the even dimers at the endpoints.

Then, the one-point local R\'enyi-$\alpha$ for the local operator $X_0X_1$ depends on the covering
\begin{equation}
\begin{aligned}
    \lim_{n\to \infty} R^{(\alpha)}(\rho_{A^{(n)}}; X_0 X_1) &= \frac{4^{\alpha-1}}{4^{\alpha-1} + 1}, \\
     \lim_{n\to \infty} R^{(\alpha)}(\rho_{B^{(n)}}; X_0 X_1) &= \frac{1}{4^{\alpha-1} + 1}
\end{aligned}
\end{equation}
Hence, the covering limit is not well-defined for $\alpha\neq 1$. Depending on the sequence of regions used to approach the thermodynamic limit, the limit may fail to exist or may converge to different values.

This example makes clear why, despite their usefulness as computational probes, higher-R\'enyi correlators cannot be viewed as intrinsic local order parameters for SW-SSB. In contrast, $R^{(1)}$ is distinguished: it is monotone under enlarging regions and therefore admits a well-defined covering limit. The following result establishes this and its equivalence to the one-point fidelity as a local order parameter for SW-SSB.
\begin{lemma}\label{lem:R1_local_general}
Let $\rho$ be a mixed state and $O_x$ a bounded local operator. Then the following hold
\begin{enumerate}
    \item If $A \subset B$ are finite regions containing the support of $O_x$, 
    \begin{equation}\label{eq:R1_is_monotone}
        R^{(1)}(\rho_A;O_x)\ge R^{(1)}(\rho_B;O_x).
    \end{equation}
    \item For any finite region $A$ containing $O_x$,
    \begin{equation}\label{eq:R1_fidelity_sandwich_finite}
        F(\rho_A;O_x)^2 \le R^{(1)}(\rho_A;O_x) \le\norm{O_x}_\infty F(\rho_A;O_x).
    \end{equation}
\end{enumerate}
\end{lemma}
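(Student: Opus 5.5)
We will work throughout with unnormalized-free expressions, using $\Tr\rho_A=1$ so that $R^{(1)}(\rho_A;O_x)=\Tr(\sqrt{\rho_A}\,O_x\sqrt{\rho_A}\,O_x^\dagger)$. This quantity is real and nonnegative, since it equals $\|\rho_A^{1/4}O_x\rho_A^{1/4}\|_2^2$ (Hilbert–Schmidt norm). The plan is to treat part 1 with Lieb's concavity theorem and part 2 with a closed formula for the fidelity together with H\"older's inequality.

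\emph{Part 1 (monotonicity).} Set $f(\rho)\defeq\Tr(\rho^{1/2}O_x^\dagger\rho^{1/2}O_x)$.

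\begin{enumerate}
\item By Lieb's concavity theorem, $(\rho,\sigma)\mapsto\Tr(\rho^{1/2}K^\dagger\sigma^{1/2}K)$ is jointly concave. Restricting to the diagonal $\sigma=\rho$, the map $f$ is concave on density matrices.
\item Write $B=A\cup D$ with $d=\dim\Hilb_D$. The partial trace is realized by twirling over the Haar measure of $\mathcal U(\Hilb_D)$:
\begin{equation}
\rho_A\otimes\frac{\one_D}{d}=\int dU\,(\one_A\otimes U)\rho_B(\one_A\otimes U)^\dagger .
\end{equation}
\item Since $\sqrt{\rho_A\otimes\one_D/d}=\sqrt{\rho_A}\otimes\one_D/\sqrt d$, a direct computation gives $f(\rho_A\otimes\one_D/d)=f(\rho_A)$.
\item Because $O_x$ is supported in $A$, it commutes with $\one_A\otimes U$. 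Hence $f$ takes the value $f(\rho_B)$ on every term of the average.
\item Concavity then yields $f(\rho_A)\ge\int dU\,f(\rho_B)=f(\rho_B)$, which is Eq.~\eqref{eq:R1_is_monotone}.
\end{enumerate}

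The only nontrivial input here is Lieb's theorem, which we quote rather than reprove.

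\emph{Part 2 (sandwich).} First we establish that
\begin{equation}
F(\rho_A;O_x)=\|\sqrt{\rho_A}\,O_x\sqrt{\rho_A}\|_1 .
\end{equation}
We use $F(\rho,\sigma)=\|\sqrt\rho\sqrt\sigma\|_1$ with $\sigma=O_x\rho_AO_x^\dagger=YY^\dagger$, where $Y=O_x\sqrt{\rho_A}$. Then $\sqrt\sigma=|Y^\dagger|$. The polar decomposition $Y=|Y^\dagger|W$, with $W$ a partial isometry that can be extended to a unitary, gives
\begin{equation}
\|\sqrt{\rho_A}\,Y\|_1=\|\sqrt{\rho_A}\,|Y^\dagger|\,W\|_1=\|\sqrt{\rho_A}\,|Y^\dagger|\|_1 .
\end{equation}
This justifies the formula. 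Getting this identity cleanly is the step I expect to need the most care, specifically the unitary-invariance of the trace norm when $W$ is only a partial isometry.

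Given the formula, both bounds are short.

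\begin{enumerate}
\item \emph{Lower bound.} Write $M=\rho_A^{1/4}O_x\rho_A^{1/4}$, so that $\sqrt{\rho_A}\,O_x\sqrt{\rho_A}=\rho_A^{1/4}M\rho_A^{1/4}$. Apply H\"older's inequality with exponents $(4,2,4)$. Since $\|\rho_A^{1/4}\|_4=(\Tr\rho_A)^{1/4}=1$, this gives $F\le\|M\|_2=\sqrt{R^{(1)}}$.
\item \emph{Upper bound.} Use $|\Tr(XZ)|\le\|X\|_1\|Z\|_\infty$ with $X=\sqrt{\rho_A}O_x\sqrt{\rho_A}$ and $Z=O_x^\dagger$. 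This gives $R^{(1)}\le\norm{O_x}_\infty F(\rho_A;O_x)$.
\end{enumerate}

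Together these give Eq.~\eqref{eq:R1_fidelity_sandwich_finite}.
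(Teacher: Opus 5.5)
Your proposal is correct and follows the paper's proof. Part 1 is exactly the paper's Haar-twirl-plus-concavity argument, with Lieb's concavity theorem named as the source of the concavity of $R^{(1)}$, and part 2 is the H\"older-inequality argument the paper defers to Ref.~\cite{liu2025diagnosing}. The step you flagged is simpler than you feared: with $X=\sqrt{\rho_A}\,O_x\sqrt{\rho_A}$ one has $\sqrt{\rho_A}\,O_x\rho_A O_x^\dagger\sqrt{\rho_A}=XX^\dagger$, so $F(\rho_A;O_x)=\Tr\sqrt{XX^\dagger}=\|X\|_1$ directly, with no polar decomposition needed (and in finite dimensions $W$ can be taken unitary anyway).
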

\begin{proof}
To establish monotonicity, let $C = B\setminus A$, and let $D_C=\dim \mathcal H_C$, the dimension of the Hilbert space of $C$. If $O_x$ is unitary, the monotonicity follows from the standard data processing inequality of the Holevo fidelity $\Tr(\sqrt{\rho}\sqrt{\sigma})$. For general $O_x$, we can relate the reduced density matrices via
\begin{equation}
    \int_{\rm Haar}\hspace{-1.3em} dU_C ~(\mathbbm{1}_A\otimes U_C)\rho_B(\mathbbm{1}_A\otimes U_C^\dagger)
    = \rho_A \otimes \frac{\mathbbm{1}_C}{D_C},
\end{equation}where the integral is over the unitary group on $\mathcal H_C$. Then, we have
\begin{equation}
\begin{aligned}
    R^{(1)}(\rho_A;O_x) &=  R^{(1)}\left(\rho_A\otimes \frac{\mathbbm{1}_C}{D_C}; O_x\right) \\
    &\ge \int_{\rm Haar}\hspace{-1.3em}dU_C~ R^{(1)}\left((\mathbbm{1}_A\otimes U_C)\rho_B(\mathbbm{1}_A\otimes U_C^\dagger) ; O_x\right)\\
    &= R^{(1)}(\rho_B;O_x),
\end{aligned}
\end{equation}where in the second line we used the concavity property, 
$R^{(1)}\left(\sum_n p_n\rho_n;O_x\right)\ge \sum_n p_n R^{(1)}(\rho_n;O_x)$, and in the last line we used that $O_x$ acts trivially on $C = B \setminus A$. 

The bounds in Eq.~\eqref{eq:R1_fidelity_sandwich_finite} follow by a direct adaptation of the (generalized) H\"older-inequality argument of Ref.~\cite{liu2025diagnosing} to the one-point correlator.

\end{proof}
\begin{corollary}
Let $O_x$ be a bounded operator. Then, on an infinite system, the limit
\begin{equation}
    R^{(1)}(\omega;O_x) = \lim_{|A|\to\infty} R^{(1)}(\rho_A;O_x)
\end{equation}
exists for any covering centered at $x$, and is independent of the choice of covering. Moreover,
\begin{equation}\label{eq:R1_fidelity_sandwich_infinite}
    F(\omega;O_x)^2 \le R^{(1)}(\omega;O_x) \le \norm{O_x}_\infty F(\omega;O_x).
\end{equation}
In particular,
\begin{equation}
    R^{(1)}(\omega;O_x)>0 \Longleftrightarrow F(\omega;O_x)>0.
\end{equation}
Therefore, the local definition of SW-SSB based on the one-point
R\'enyi-$1$ measure is equivalent to the definition based on the one-point
fidelity.
\end{corollary}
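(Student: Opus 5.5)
The corollary follows from Lemma~\ref{lem:R1_local_general} in the same way that Theorem~\ref{thm:existence_and_uniqueness} followed from the data processing inequality. We treat existence, covering independence, and the sandwich bound in turn.

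\emph{Existence.} Fix a covering $\{A_x^{(n)}\}_n$ centered at $x$. Since $\supp(O_x)$ is finite and $\dist(x,\overline{A_x^{(n)}})\to\infty$, there is an $n_0$ such that $A_x^{(n)}\supseteq\supp(O_x)$ for all $n\ge n_0$. For these $n$ the nesting $A_x^{(n)}\subseteq A_x^{(n+1)}$ and part 1 of Lemma~\ref{lem:R1_local_general}, Eq.~\eqref{eq:R1_is_monotone}, show that $R^{(1)}(\rho_{A_x^{(n)}};O_x)$ is non-increasing. The sequence is bounded below by $0$, because
\begin{equation}
\Tr(\rho^{1/2}O\rho^{1/2}O^\dagger)=\norm{\rho^{1/4}O\rho^{1/4}}_2^2\ge0 .
\end{equation}
Hence the limit exists.

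\emph{Covering independence.} Let $\{B_x^{(m)}\}_m$ be a second covering. Since $A_x^{(n)}$ is finite and $\{B_x^{(m)}\}$ exhausts $\Lambda$, for each $n$ there is an $m_n$ with $A_x^{(n)}\subseteq B_x^{(m_n)}$, and $m_n\to\infty$. Monotonicity then gives
\begin{equation}
R^{(1)}(\rho_{A_x^{(n)}};O_x)\ge R^{(1)}(\rho_{B_x^{(m_n)}};O_x).
\end{equation}
Taking $n\to\infty$, the left side tends to the $A$-limit and the right side, being a subsequence of a convergent sequence, tends to the $B$-limit. So the $A$-limit is at least the $B$-limit. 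Exchanging the roles of the two coverings gives the reverse inequality, hence equality.

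\emph{Sandwich bound.} Apply part 2 of Lemma~\ref{lem:R1_local_general}, Eq.~\eqref{eq:R1_fidelity_sandwich_finite}, to each $A_x^{(n)}$ with $n\ge n_0$. Both $F(\rho_{A_x^{(n)}};O_x)$ and $R^{(1)}(\rho_{A_x^{(n)}};O_x)$ converge, the former by Theorem~\ref{thm:existence_and_uniqueness}. The maps $t\mapsto t^2$ and $t\mapsto\norm{O_x}_\infty t$ are continuous, and non-strict inequalities survive limits, so Eq.~\eqref{eq:R1_fidelity_sandwich_infinite} follows.

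\emph{Equivalence.} If $F(\omega;O_x)>0$, the left inequality gives $R^{(1)}(\omega;O_x)\ge F(\omega;O_x)^2>0$. If $R^{(1)}(\omega;O_x)>0$, the right inequality gives $F(\omega;O_x)\ge R^{(1)}(\omega;O_x)/\norm{O_x}_\infty>0$. We may assume $O_x\neq0$, since otherwise $R^{(1)}(\omega;O_x)=0$ and there is nothing to prove.

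Granting the lemma, there is no real obstacle. The only care needed is in the existence step: monotonicity applies only once the regions contain $\supp(O_x)$, which is why the tail $n\ge n_0$ is used throughout.
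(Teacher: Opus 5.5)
Your proof is correct and follows essentially the same route as the paper: monotonicity from Lemma~\ref{lem:R1_local_general} together with nonnegativity gives existence and covering independence exactly as in Theorem~\ref{thm:existence_and_uniqueness}, and the sandwich bound follows by passing to the limit in Eq.~\eqref{eq:R1_fidelity_sandwich_finite}. You also make explicit two details the paper leaves implicit: the nonnegativity identity via $\norm{\rho^{1/4}O_x\rho^{1/4}}_2^2$, and the restriction to regions containing $\supp(O_x)$.
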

\begin{proof}
Since $R^{(1)}(\rho_A, O_x) \geq 0$, existence and covering-independence follow from the monotonicity exactly as in the one-point fidelity case, Theorem~\ref{thm:existence_and_uniqueness}. Taking the covering limit of Eq.~\eqref{eq:R1_fidelity_sandwich_finite} gives Eq.~\eqref{eq:R1_fidelity_sandwich_infinite}.
\end{proof}

\section{Local two-point fidelity correlator} \label{sec:equivalence_1P_2P}

The one-point local fidelity Eq.~\eqref{eq:LFC} appears to be quite different from the standard fidelity correlator Eq.~\eqref{eq:conventional_fidelity}, which is a two-point object. Physically, in the one-point LFC the boundary of the subregion $\partial{A}$ plays a role somewhat similar to $O_y$ in the standard two-point fidelity correlator. More precisely, $\partial{A}$ can be viewed as a defect that can break the strong symmetry, and the one-point LFC can be viewed as a correlation function between the defect $\partial A$ and a local operator $O_x$ --- this viewpoint will become apparent in our later discussion on several examples in Sec.~\ref{sec:examples}. 

One may wonder if we can define a local fidelity order parameter that more closely resembles the standard two-point fidelity correlator. To this end, we define the two-point local fidelity correlator (two-point LFC):
\begin{equation}
    F(\rho_A;O_x O_y^{\dagger}),
\end{equation}
where $O_{x,y}$ are charged local operators supported near $x,y\in A$. If $A$ becomes the entire system $\Lambda$, we recover the standard two-point fidelity correlator. However, in the spirit of this work, we keep $\ell_A\ll L$. In other words, $L\to\infty$ is taken before the $\ell_A\to\infty$ limit. We can then define SW-SSB, in the local two-point sense, if the state satisfies
\begin{equation}
    \lim_{|x-y|\to\infty}\lim_{|A|\to\infty}F(\rho_A;O_x O_y^{\dagger})\equiv \lim_{|x-y|\to\infty}F(\omega;O_xO_y^\dagger )>0.
\end{equation}
Here, on an infinite system the $|A|\to\infty$ limit is taken with a sequence of regions $A_n$ that eventually covers the whole system, as discussed in Sec.~\ref{sec:infinite}.

\subsection{Equivalence to local one-point fidelity correlator}

We now show that the above notion of SW-SSB in terms of two-point LFC is actually equivalent to the one-point local SW-SSB discussed in earlier Sections. 

First, applying Theorem \ref{thm:inequality_of_measures} (essentially data processing inequality), we have
\begin{equation}\label{eq:first_inequality}
  \norm{O_y}_\infty F(\rho_A;O_x) \geq F(\rho_{AB}; O_x O_y^\dagger),
\end{equation}
where $B$ is another region and $y\in B$. This shows that local SW-SSB in the two-point sense implies local SW-SSB in the one-point sense.

We then show that local SW-SSB in the one-point fidelity sense on average implies SW-SSB in the two-point fidelity sense on average.

\begin{theorem}\label{theorem:average_one-point_two-point}
    Let $A$ be a finite region, and $O \equiv \{ O_x \}_{x \in \Omega}$ a collection of charged operators in $A$. Then,
    \begin{equation}
        \frac{1}{|\Omega|^2} \sum_{x,y \in \Omega} F(\rho_A;O_x O_y^\dagger) \geq \left( \frac{1}{|\Omega|} \sum_{x \in \Omega} F(\rho_A;O_x) \right)^2
    \end{equation}
\end{theorem}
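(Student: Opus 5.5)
The plan is to reduce both sides to matrix elements in a \emph{single} fixed purification of $\rho_A$ via Uhlmann's theorem, as in the proof of Theorem~\ref{thm:pure_state_non_local_charge}, and then finish with a Cauchy--Schwarz argument. The two-point fidelity already contains products $O_xO_y^\dagger$, and it will come out as an overlap of two vectors built from one-point data.

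\emph{Step 1 (one-point fidelities as overlaps).} Fix a purification $\ket{\Psi}\in\Hilb_A\otimes\Hilb_{A'}$ of $\rho_A$, for instance the canonical purification with $\Hilb_{A'}\cong\Hilb_A$. Note that $O_x\otimes\one\ket{\Psi}$ purifies the unnormalized $O_x\rho_AO_x^\dagger$. By Uhlmann's theorem, applied with the fidelity extended by homogeneity to unnormalized positive operators, all purifications of either operator are related by unitaries on $A'$. Hence
\begin{equation}
F(\rho_A;O_x)=\max_{W\in\mathcal U(\Hilb_{A'})}\bigl|\bra{\Psi}O_x\otimes W\ket{\Psi}\bigr| .
\end{equation}
For each $x\in\Omega$ I pick a maximizer and absorb a phase into it. This gives a unitary $W_x$ such that, with $M_x\defeq O_x\otimes W_x$,
\begin{equation}
\bra{\Psi}M_x\ket{\Psi}=F(\rho_A;O_x)\geq 0 .
\end{equation}

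\emph{Step 2 (lower bound on two-point fidelities).} The same Uhlmann characterization applies to $O_xO_y^\dagger$. Since $W_xW_y^\dagger$ is one admissible unitary on $A'$,
\begin{equation}
F(\rho_A;O_xO_y^\dagger)\geq\bigl|\bra{\Psi}O_xO_y^\dagger\otimes W_xW_y^\dagger\ket{\Psi}\bigr|=\bigl|\bra{\Psi}M_xM_y^\dagger\ket{\Psi}\bigr| .
\end{equation}

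\emph{Step 3 (averaging and Cauchy--Schwarz).} Define $\ket{v}\defeq\frac{1}{|\Omega|}\sum_{y}M_y^\dagger\ket{\Psi}$. Summing Step~2 and using the triangle inequality gives
\begin{equation}
\frac{1}{|\Omega|^2}\sum_{x,y}F(\rho_A;O_xO_y^\dagger)\geq\Bigl|\frac{1}{|\Omega|^2}\sum_{x,y}\bra{\Psi}M_xM_y^\dagger\ket{\Psi}\Bigr|=\braket{v|v}.
\end{equation}
Since $\ket{\Psi}$ is normalized, Cauchy--Schwarz gives $\braket{v|v}\geq|\braket{\Psi|v}|^2$. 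Moreover,
\begin{equation}
\braket{\Psi|v}=\frac{1}{|\Omega|}\sum_y\overline{\bra{\Psi}M_y\ket{\Psi}}=\frac{1}{|\Omega|}\sum_yF(\rho_A;O_y),
\end{equation}
which is exactly the claimed bound.

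\emph{Main obstacle.} The delicate point is Step~1. It needs a single purification $\ket{\Psi}$ on which every one-point fidelity is attained by a unitary acting only on the purifying system, with the phase chosen so that the overlap is real and nonnegative. This alignment of phases is what allows the triangle inequality in Step~3 to lose nothing essential. The canonical purification works because $\dim\Hilb_{A'}=\dim\Hilb_A$ is large enough for Uhlmann's theorem. Non-unitary $O_x$ cause no difficulty once the fidelity is understood homogeneously for unnormalized states, as in Eq.~\eqref{eq:LFC}. Everything after Step~1 is elementary.
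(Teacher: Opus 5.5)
Your proof is correct and is essentially the paper's argument: your $M_y^\dagger\ket{\Psi}$ is the paper's $\ket{\Phi_y}$ and your $\ket{v}$ is its $\ket{\Phi}$, and both proofs bound each two-point fidelity from below using the unitary $W_xW_y^\dagger$ and then apply Cauchy--Schwarz against $\ket{\Psi}$. Your handling of phases, via absolute values and the triangle inequality, is if anything slightly cleaner than the paper's remark about choosing $U\propto U_xU_y^\dagger$ so that each overlap is nonnegative.
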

\begin{proof}
    Let $\ket{\Psi} \in \Hilb_A \otimes \Hilb'_A$ be any purification of the reduced density matrix of the state in $A$, $\rho_A$, with $\dim(\Hilb_A') \geq \dim(\Hilb_A)$. By Uhlmann's theorem applied to $F(\rho_A, O_x)$, for each site $x \in \Omega$ there exists a unitary $U_x \in \mathcal{U}(\Hilb_A')$ acting on the ancilla space of the purification satisfying
    \begin{equation}\label{eq:one-point_fid_Psi_Phi}
        F(\rho_A, O_x) = \braket{\Psi | O_x \otimes U_x | \Psi} = \langle \Phi_x | \Psi \rangle,
    \end{equation}
    where we have defined the (unnormalized) vector $\ket{\Phi_x} \defeq O_x^\dagger \otimes U_x^\dagger \ket{\Psi}$. Then, by Uhlmann's theorem applied to $F(\rho_A, O_x O_y^\dagger)$, we have
    \begin{equation}
    \begin{aligned}
        F(\rho_A, O_x O_y^\dagger) & = \max_{U \in \mathcal{U}(\Hilb'_A)} |\langle\Psi | O_x O_y^\dagger \otimes U | \Psi \rangle| \\
        & \geq \langle \Psi | O_x O_y^\dagger \otimes U_x U_y^\dagger | \Psi \rangle \label{eq:max_U} \\
        & = \langle \Phi_x | \Phi_y \rangle,
    \end{aligned}
    \end{equation}
    where, in the second line, we chose $U \propto U_x U_y^\dagger$ so that $\langle \Phi_x | \Phi_y\rangle  \geq 0$. Averaging over $x, y \in \Omega$ in the final equation above, we have
    \begin{align}
        \frac{1}{|\Omega|^2} \sum_{x,y \in \Omega} F(\rho_A, O_x O_y^\dagger) & \geq \frac{1}{|\Omega|^2} \sum_{x,y \in \Omega} \langle \Phi_x | \Phi_y \rangle = \langle \Phi |  \Phi \rangle,
    \end{align}
    where we have defined the vector $\ket{\Phi} \defeq \frac{1}{|\Omega|} \sum_{x \in \Omega} \ket{\Phi_x}$. Its norm squared satisfies
    \begin{equation}
    \begin{aligned}
        \langle \Phi | \Phi \rangle & \geq \langle \Phi | \Psi \rangle \langle \Psi | \Phi \rangle  = \left| \frac{1}{|\Omega|} \sum_{x \in \Omega} \langle \Phi_x | \Psi \rangle \right|^2 \\
        & \stackrel{\eqref{eq:one-point_fid_Psi_Phi}}{=} \left( \frac{1}{|\Omega|} \sum_{x \in \Omega} F(\rho_A; O_x) \right)^2 
    \end{aligned}
    \end{equation}
\end{proof}

Theorem ~\ref{theorem:average_one-point_two-point} can be viewed as an ``averaged'' relation between one-point and two-point SW-SSB. The following result can be viewed as another version, on an infinite system, that does not require spacial averaging:

\begin{theorem}\label{thm:two-point_one-point_equiv}
    If $O$ is unit-bounded ($\norm{O}_\infty\leq1$) and $F(\omega, O_x) \geq F$ for infinitely many $x \in \Lambda$, then
    \begin{equation}
        \limsup_{|x-y| \to \infty} F(\omega, O_x O_y^\dagger) \defeq \lim_{R \to \infty} \sup_{|x-y| > R}F(\omega, O_x O_y^\dagger) \geq F^2.
    \end{equation}

    In particular, for all $x \in \Lambda$ such that $F(\omega, O_x) \geq F$, there exists a sequence $(y_n)_{n=1}^\infty$ satisfying $\lim_{n \to \infty} \dist(x, y_n) = \infty$ and
    \begin{align}
        \lim_{n\to \infty} F(\omega, O_x O_{y_n}^\dagger) \geq F^2.
    \end{align}
\end{theorem}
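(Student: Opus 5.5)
The plan is to reduce everything to finite regions, where Uhlmann's theorem gives concrete vectors as in the proof of Theorem~\ref{theorem:average_one-point_two-point}, and then pass to the covering limit using monotonicity (Theorem~\ref{thm:existence_and_uniqueness}). Fix a finite region $A$ and a purification $\ket{\Psi}$ of $\rho_A$. By monotonicity, $F(\rho_A;O_y)\ge F(\omega;O_y)\ge F$ for every $y\in A$ in the given infinite set. Uhlmann then gives unitaries $U_y$ on the ancilla such that the vectors $\ket{\Phi_y}\defeq O_y^\dagger\otimes U_y^\dagger\ket{\Psi}$ satisfy
\begin{equation}
\langle\Phi_y|\Psi\rangle=a_y\ge F,\qquad \norm{\Phi_y}\le 1 .
\end{equation}
The second bound uses $\norm{O}_\infty\le1$. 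As in Eq.~\eqref{eq:max_U}, we also have $F(\rho_A;O_yO_{y'}^\dagger)\ge|\langle\Phi_y|\Phi_{y'}\rangle|\ge \mathrm{Re}\,\langle\Phi_y|\Phi_{y'}\rangle$.

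Next, write $\ket{\Phi_y}=a_y\ket{\Psi}+\ket{\chi_y}$ with $\chi_y\perp\Psi$ and $\norm{\chi_y}^2\le 1-a_y^2\le1$. Then
\begin{equation}
\mathrm{Re}\,\langle\Phi_y|\Phi_{y'}\rangle=a_ya_{y'}+\mathrm{Re}\,\langle\chi_y|\chi_{y'}\rangle\ge F^2+\mathrm{Re}\,\langle\chi_y|\chi_{y'}\rangle .
\end{equation}
Choose $N$ points $y_1,\dots,y_N$ from the infinite set, pairwise separated by more than $R$. From $0\le\norm{\sum_i\chi_{y_i}}^2$ we get
\begin{equation}
\sum_{i\ne j}\mathrm{Re}\,\langle\chi_{y_i}|\chi_{y_j}\rangle\ge-\sum_i\norm{\chi_{y_i}}^2\ge -N .
\end{equation}
Hence some pair satisfies $F(\rho_A;O_{y_i}O_{y_j}^\dagger)\ge F^2-1/(N-1)$.

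The chosen pair may depend on $A$, so we pass to the limit by a pigeonhole argument. Run over a covering $A^{(n)}\supseteq\{y_1,\dots,y_N\}$. Since there are only finitely many pairs, one pair $(i,j)$ is selected for infinitely many $n$. Monotonicity of the two-point LFC in $A$ (data processing) then gives $F(\omega;O_{y_i}O_{y_j}^\dagger)\ge F^2-1/(N-1)$, with $|y_i-y_j|>R$. Letting $N\to\infty$ and then $R\to\infty$ yields the $\limsup$ statement.

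The main obstacle is the refinement with $x$ fixed. The averaging above only produces some good pair, not necessarily one containing $x$. For fixed $x$ I need $\mathrm{Re}\,\langle\chi_x|\chi_{y}\rangle\ge-\epsilon$ for some far $y$, and a fixed vector can be anti-aligned with every $\chi_y$ in a finite-dimensional picture. I would first try to exploit the remaining freedom in the Uhlmann unitaries, and locality: for $y$ far from $x$, the operators $O_y$ and $O_x$ act on disjoint sites. One route is to show that, along a subsequence of $y$'s, the components $\chi_y$ become asymptotically uncorrelated with the fixed $\chi_x$, in the spirit of a weak-limit or compactness argument. That would force $\liminf\mathrm{Re}\,\langle\chi_x|\chi_{y_n}\rangle\ge0$ and give $\lim_n F(\omega;O_xO_{y_n}^\dagger)\ge a_x F\ge F^2$. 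If that fails, a fallback is to adjoin $\Phi_x$ to the Gram argument with a large weight on the $x$-row and check whether a good pair containing $x$ can be forced.
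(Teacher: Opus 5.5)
Your proof of the $\limsup$ statement is correct and is essentially the paper's. The paper feeds $N$ points of $K$, pairwise farther apart than $R$, into Theorem~\ref{theorem:average_one-point_two-point}, whose proof is the same projection bound that your splitting $\ket{\Phi_y}=a_y\ket{\Psi}+\ket{\chi_y}$ encodes. It then passes the finite sum directly to the covering limit and bounds diagonal terms by $1$. This makes your pigeonhole step unnecessary, though not wrong.

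The gap is the fixed-$x$ refinement, which you leave open. It cannot be closed, because it is false as stated. Take qubits with symmetry $\prod_i X_i$ and $O_x=Z_x$. Fix a site $x_0$ and let $\omega=\tfrac12\omega_0+\tfrac12\omega_1$, where $\omega_0$ is $\one/2$ at $x_0$ and $\ket{+}$ on every other site, and $\omega_1$ is $\ket{+}$ at $x_0$ and $\one/2$ on every other site. For finite $A\ni x_0$ with $n=|A|-1$, $\rho_A$ is diagonal in the $X$ basis with weights $p(s_{x_0},r)=\tfrac14\delta_{r,0}+2^{-n-1}\delta_{s_{x_0},0}$, where $r$ is the configuration on $A\setminus\{x_0\}$. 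So every fidelity is a Bhattacharyya coefficient, and one finds:
\begin{itemize}
\item $F(\rho_A;Z_{x_0})=\sqrt{\tfrac14+2^{-n-1}}\to\tfrac12$;
\item $F(\rho_A;Z_y)\to\tfrac12$ and $F(\rho_A;Z_yZ_{y'})\to\tfrac12$ for distinct $y,y'\neq x_0$;
\item $F(\rho_A;Z_{x_0}Z_y)=2^{-(n+1)/2}\to0$ for every $y\neq x_0$.
\end{itemize}
With $F=\tfrac12$ the hypothesis holds at every site and the $\limsup$ conclusion holds. Yet $F(\omega;Z_{x_0}Z_y)=0$ for all $y\neq x_0$, so the clause fails at $x=x_0$. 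In your notation, $\mathrm{Re}\langle\chi_{x_0}|\chi_y\rangle\le-\tfrac14+2^{-(n+1)/2}$ for every $y\in A\setminus\{x_0\}$ and every choice of Uhlmann unitaries. This is exactly the anti-alignment you feared. No weak-limit or locality argument can remove it, because $\omega$ is a mixture of two macroscopically distinct states and does not cluster.

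The paper's own proof of this clause has the same hole. It picks $y_1$ with $F(\omega,O_xO_{y_1}^\dagger)\ge F^2-\epsilon_1$ by invoking $\sup_{|x-y|>R}F(\omega,O_xO_y^\dagger)\ge F^2$, but that supremum ranges over both endpoints, not over $y$ at the given $x$.

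What your Gram bound does give is that the refinement fails for at most finitely many $x\in K$. Otherwise there are $\delta>0$ and infinitely many $x\in K$ with $F(\omega;O_xO_y^\dagger)\le F^2-\delta$ whenever $|y-x|>R_x$. Pick $x_1,\dots,x_N$ among them with $|x_j-x_i|>R_{x_i}$ for $i<j$. Since $|\langle\Phi_i|\Phi_j\rangle|$ is bounded by the two-point LFC in either operator order, your inequality $\sum_{i\neq j}\mathrm{Re}\langle\Phi_i|\Phi_j\rangle\ge N(N-1)F^2-N$ passes to the covering limit. It then forces $\delta\le 1/(N-1)$, a contradiction for large $N$.

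So the correct refinement is ``for all but finitely many $x$ with $F(\omega;O_x)\ge F$''. It holds for every such $x$ when, for example, $\omega$ is weakly translation invariant and all sites are translates of one another, since then the pair supremum coincides with the supremum over $y$ at fixed $x$.
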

\begin{proof}
    Let $K = \{ x\in \Lambda \mid F(\rho, O_x) \geq F\}$. Since $K$ is infinite by assumption, then for all $R \in \mathbb{N}$, there exists a sequence $(x_n)_{n=1}^\infty$ of points in $K$ that are more than a distance $R$ apart from each other: $\forall n \neq m, |x_n-x_m| > R$.
    
    We now apply Theorem \ref{theorem:average_one-point_two-point} to the set $\Omega_N \defeq \{x_1, x_2, \ldots, x_N\}$. Since the set $A$ in the statement of Thm.~\ref{theorem:average_one-point_two-point} is arbitrary, as long as it contains $\Omega_N$, we will take the limit $A \to \Lambda$ through an arbitrary finite covering, thus replacing $F(\rho_A, O) \to F(\omega, O)$:
    \begin{align}
        F^2 & \leq \left( \frac{1}{N} \sum_{n=1}^N F(\omega, O_{x_n}) \right)^2 \\
        & \leq \frac{1}{N^2} \sum_{n,m=1}^N F(\omega, O_{x_n} O_{x_m}^\dagger) \\
        & = \frac{1}{N^2} \left( \sum_{n=1}^N F(\omega, O_{x_n} O_{x_n}^\dagger) + \sum_{n \neq m}^N F(\omega, O_{x_n} O_{x_m}^\dagger) \right) \\
        &\leq \frac{1}{N^2} \left( N + (N^2 - N) \sup_{|x-y| > R} F(\omega, O_x O_y^\dagger) \right),
    \end{align}
    where in the last line we have used that $F(\rho, O_x O_y^\dagger) \leq \norm{O_x}_\infty \norm{O_y}_\infty \leq 1$ and that $x_n$ are more than a distance $R$ apart from each other. By taking the limit $N \to \infty$, we have
    \begin{equation}
        F^2 \leq \sup_{|x-y| > R} F(\omega, O_x O_y^\dagger).
    \end{equation}
    Since $R$ is arbitrary, we can also take the limit $R \to \infty$, thus arriving at the desired conclusion.

    For the sequence part of the statement, given an arbitrary $x \in K$, we construct the sequence $(y_n)_{n=1}^\infty$ as follows: First, choose a vanishing sequence of real numbers $\epsilon_n \to 0$ where $\forall n, \epsilon_n < 1/2$. Then, choose $y_1 \neq x$ so that $F(\omega, O_x O_{y_1}^\dagger) \geq F^2 - \epsilon_1$. It always exists because of $\sup_{|x-y| > R} F(\omega, O_x O_y^\dagger) \geq F^2$ applies to $R = 1$. Then, let $R_1 = \dist(x, y_1)$, and choose $y_2$ so that $F(\omega, O_x O_{y_2}^\dagger) \geq F^2 - \epsilon_2$ and $|x-y_2| > R_1 = |x-y_1|$. Continue building the sequence $y_n$ in this way. Since $|x-y_n| \geq n$, then $|x-y_n| \to \infty$, and also
    \begin{align}
        \lim_{n \to \infty} F(\omega, O_x O_{y_n}^\dagger) \geq F^2 - \lim_{n \to \infty} \epsilon _n = F^2.
    \end{align}
\end{proof}

\subsection{Connected correlators}\label{sec:connected_correlator}

In many-body systems, the connected two-point function is a central observable because it isolates the genuine correlation between two widely separated operators. It is a natural probe of clustering: in states with a finite correlation length, the two-point function factorizes at large separation, and the connected part decays to zero, whereas a non-vanishing connected correlator signals long-range correlations.

It is then natural to wonder what physical information is carried by the connected LFC. Thus, we define
\begin{definition}
    Let $O \equiv \{ O_x \}_{x \in \Omega}$ be a collection of operators supported on a finite region $A$. Then, the connected fidelity matrix is defined as
    \begin{equation}
        [F_c]_{x,y} = F(\rho_A; O_x O_y^\dagger) - F(\rho_A; O_x) F(\rho_A; O_y^\dagger).
    \end{equation}
    Similarly, the connected R\'enyi-$\alpha$ matrix is defined as
    \begin{equation}
        [R^{(\alpha)}_c]_{x,y} = R^{(\alpha)}(\rho_A; O_x O_y^\dagger) - R^{(\alpha)}(\rho_A; O_x) R^{(\alpha)}(\rho_A; O_y^\dagger).
    \end{equation}
\end{definition}

One way to reinterpret Theorem \ref{theorem:average_one-point_two-point} is that for any vector of real positive entries, we have $v^T F_c v \geq 0$. Though this suggests that $F_c$ is a positive-semidefinite matrix, this is not true. In Appendix~\ref{app:counter_Example_PSD_Fc} we show a counterexample. However, as the following result establishes,  $R^{(\alpha)}_c$ is positive-semidefinite for all $\alpha$.

\begin{lemma}\label{lemma:Renyi_PSD}
    Let $O \equiv \{ O_x \}_{x \in \Omega}$ be a collection of operators supported on a finite region $A$. Then, the connected R\'enyi-$\alpha$ matrix $R^{(\alpha)}_c$ is positive-semidefinite.
\end{lemma}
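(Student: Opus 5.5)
Write $\sigma \defeq \rho_A^{\alpha/2}$ (positive semidefinite) and $Z \defeq \Tr(\rho_A^\alpha) = \Tr(\sigma^2)$. The plan is to express $R^{(\alpha)}_c$ as a Gram matrix of vectors in the Hilbert--Schmidt space, after projecting out a common direction. Then positive semidefiniteness is immediate.

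First, set up the inner product. With $\langle X, Y\rangle \defeq \Tr(X^\dagger Y)$, the two-point entry reads
\begin{equation}
R^{(\alpha)}(\rho_A; O_x O_y^\dagger) = \frac{\Tr(\sigma O_x O_y^\dagger \sigma O_y O_x^\dagger)}{Z}.
\end{equation}
Before committing to this form, I would check which ordering the paper's two-point R\'enyi definition actually uses, and pick the one that gives a Gram structure. The natural candidate is $\Tr(\sigma O_x^\dagger \sigma O_y)/Z$, i.e.\ the inner product $\langle O_x \sigma^{1/2}\cdot, \ldots\rangle$ in the doubled (vectorized) space $\kett{\rho^{\alpha/2}}$.

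In the vectorized picture the correlator becomes an expectation value in the normalized doubled state $\kett{\Psi} = \kett{\sigma}/\sqrt{Z}$. Define the doubled operator $\mathcal O_x \defeq O_x \otimes \overline{O_x}$, acting as $X \mapsto O_x X O_x^\dagger$. Then
\begin{equation}
R^{(\alpha)}(\rho_A;O_x) = \langle\!\langle \Psi | \mathcal O_x | \Psi \rangle\!\rangle ,
\end{equation}
and the two-point entry, with $O_xO_y^\dagger$ in place of $O_x$, becomes $\langle\!\langle \Psi | \mathcal O_x \mathcal O_y^\dagger | \Psi \rangle\!\rangle$. Here I use $\mathcal O_y^\dagger = O_y^\dagger \otimes \overline{O_y^\dagger}$ and the multiplicativity of $O\mapsto O\otimes\overline O$. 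The one-point factor $R^{(\alpha)}(\rho_A;O_y^\dagger)$ equals $\langle\!\langle\Psi|\mathcal O_y^\dagger|\Psi\rangle\!\rangle$.

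Next, rewrite the connected matrix using the projector $Q = 1 - \kett{\Psi}\langle\!\langle\Psi|$:
\begin{equation}
[R_c^{(\alpha)}]_{x,y} = \langle\!\langle \Psi|\mathcal O_x \,Q\, \mathcal O_y^\dagger|\Psi\rangle\!\rangle = \langle\!\langle v_x | v_y\rangle\!\rangle, \qquad \kett{v_x} \defeq Q\,\mathcal O_x^\dagger \kett{\Psi}.
\end{equation}
This is a Gram matrix, so for any $c\in\C^{|\Omega|}$ we get $\sum_{x,y} \bar c_x c_y [R_c]_{x,y} = \norm{\sum_y c_y \kett{v_y}}^2 \ge 0$.

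The main obstacle is matching conventions: I need to confirm that $\langle\!\langle\Psi|\mathcal O_x|\Psi\rangle\!\rangle$ is real and nonnegative, so that the definition's product $R(O_x)R(O_y^\dagger)$ really equals $\langle\!\langle\Psi|\mathcal O_x|\Psi\rangle\!\rangle\langle\!\langle\Psi|\mathcal O_y^\dagger|\Psi\rangle\!\rangle$ with the correct conjugation. Realness follows from $\Tr(\sigma O\sigma O^\dagger) = \norm{\sigma^{1/2} O \sigma^{1/2}}_2^2 \ge 0$. Likewise the ordering of $\mathcal O_x\mathcal O_y^\dagger$ versus $\mathcal O_y^\dagger\mathcal O_x$ inside the trace should be fixed using cyclicity. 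If the paper's convention instead places the operators as $O_y^\dagger O_x$, the same argument goes through with the roles of $x$ and $y$ swapped, which yields the transpose of the matrix, and that is still PSD.
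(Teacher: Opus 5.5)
Your proof is correct and is the paper's argument written in the vectorized (doubled) Hilbert space. Your $\kett{v_x}=Q\,\mathcal O_x^\dagger\kett{\Psi}$ is exactly $C_x/\sqrt{\Tr(\rho_A^{\alpha})}$, where the paper's $C_x$ is $O_x^\dagger\rho_A^{\alpha/2}O_x$ with its Hilbert--Schmidt component along $\rho_A^{\alpha/2}$ removed, so both proofs exhibit $R^{(\alpha)}_c$ as a Gram matrix. The convention checks you flag (realness of the one-point values, operator ordering) are unnecessary, because $[R^{(\alpha)}_c]_{x,y}=\langle\!\langle\Psi|\mathcal O_x Q\,\mathcal O_y^\dagger|\Psi\rangle\!\rangle$ follows purely algebraically from the definitions. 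Also, your aside proposing $\Tr(\sigma O_x^\dagger\sigma O_y)/Z$ as the two-point entry is not the right quantity, though you never actually use it.
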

\begin{proof}
    Denote by $\langle A, B\rangle_{HS} \defeq \Tr(A^\dagger B)$ the Hilbert-Schmidt inner product. Then, define $C_x$ to be the component of $O_x^\dagger \rho_A^{\alpha/2} O_x$ orthogonal to $\rho_A^{\alpha/2}$:
    \begin{equation}
        C_x \defeq O_x^\dagger \rho_A^{\alpha/2} O_x - \frac{\langle O_x^\dagger \rho_A^{\alpha/2} O_x, \rho_A^{\alpha/2}\rangle_{HS}}{\langle \rho_A^{\alpha/2}, \rho_A^{\alpha/2} \rangle_{HS}} \rho_A^{\alpha/2}.
    \end{equation}
    We have 
    \begin{equation}
        [R_c^{(\alpha)}]_{x,y} = \frac{\langle C_x, C_y \rangle_{HS}}{\Tr(\rho_A^\alpha)}.
    \end{equation}
    Since $[\langle C_x, C_y \rangle]_{x,y}$ is a Gram matrix, $R_c^{(\alpha)}$ is positive-semidefinite.
\end{proof}
\begin{corollary}
    Let $O \equiv \{ O_x \}_{x \in \Omega}$ be a collection of operators supported on a finite region $A$, then
    \begin{equation}
        \frac{1}{|\Omega|^2} \sum_{x,y \in \Omega} R^{(\alpha)}(\rho_A, O_x O_y^\dagger) \geq \left( \frac{1}{|\Omega|} \sum_{x \in \Omega} R^{(\alpha)}(\rho_A, O_x) \right)^2.
    \end{equation}
\end{corollary}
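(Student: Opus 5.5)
The plan is to read the corollary off Lemma~\ref{lemma:Renyi_PSD}. Since $R^{(\alpha)}_c$ is positive-semidefinite, its quadratic form is nonnegative on every vector. I will evaluate it on the uniform vector $v = \frac{1}{|\Omega|}(1,\dots,1)^T$, which gives
\begin{equation}
0 \le v^\dagger R^{(\alpha)}_c v = \frac{1}{|\Omega|^2}\sum_{x,y\in\Omega} R^{(\alpha)}(\rho_A; O_x O_y^\dagger) - \frac{1}{|\Omega|^2}\sum_{x,y\in\Omega} R^{(\alpha)}(\rho_A;O_x)\, R^{(\alpha)}(\rho_A;O_y^\dagger).
\end{equation}
The subtracted double sum factorizes as $\bigl(\sum_x R^{(\alpha)}(\rho_A;O_x)\bigr)\bigl(\sum_y R^{(\alpha)}(\rho_A;O_y^\dagger)\bigr)/|\Omega|^2$.

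The one remaining step is to identify $R^{(\alpha)}(\rho_A;O_y^\dagger)$ with $R^{(\alpha)}(\rho_A;O_y)$, so that this product becomes the square in the statement. From the definition, the numerator of $R^{(\alpha)}(\rho_A;O_y^\dagger)$ is $\Tr(\rho_A^{\alpha/2} O_y^\dagger \rho_A^{\alpha/2} O_y)$. Cyclically moving $O_y$ to the front turns this into $\Tr(O_y \rho_A^{\alpha/2} O_y^\dagger \rho_A^{\alpha/2})$, and one more cyclic move gives $\Tr(\rho_A^{\alpha/2} O_y \rho_A^{\alpha/2} O_y^\dagger)$, which is the numerator of $R^{(\alpha)}(\rho_A;O_y)$. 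This quantity is also real and nonnegative, because it equals the Hilbert--Schmidt norm squared $\norm{\rho_A^{\alpha/4} O_y \rho_A^{\alpha/4}}_2^2$. The two denominators are both $\Tr(\rho_A^\alpha)$.

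Substituting this identity turns the product of the two averages into $\bigl(\frac{1}{|\Omega|}\sum_x R^{(\alpha)}(\rho_A;O_x)\bigr)^2$, and rearranging the displayed inequality gives the claim. I do not expect a real obstacle. The only points needing care are that the one-point entries appearing in $R^{(\alpha)}_c$ are real, so that the quadratic form is the plain double sum written above, and the cyclicity identity just described.
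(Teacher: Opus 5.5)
Your proposal is correct and is essentially the paper's proof: apply the positive-semidefiniteness of $R^{(\alpha)}_c$ from Lemma~\ref{lemma:Renyi_PSD} to the constant vector $w_x = 1/|\Omega|$. Your cyclicity check that $R^{(\alpha)}(\rho_A;O_y^\dagger)=R^{(\alpha)}(\rho_A;O_y)$, with both real and nonnegative, supplies a step the paper leaves implicit; it is what turns the product of the two averages into the square in the statement.
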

\begin{proof}
    From positive-semidefiniteness of $R_c^{(\alpha)}$, for any vector $w \in \C^{|\Omega|}$ we have $w^T R^{(\alpha)}_c w \geq 0 $. Then, the results follow from considering the constant $w_x \defeq 1/|\Omega|$. 
\end{proof}

Lastly, we briefly comment on the behavior of the connected $R^{(1)}_c$ as $|x-y| \to \infty$. First, if the canonical purification satisfies cluster decomposition (e.g., if it has a finite correlation length), then $R^{(1)}_c \to 0$. By the arguments of \cite{li_unified_2026}, this is also implied if the original mixed state has both short-range correlations and short-range CMI. As a consequence, thermal states (both for gapped and for gapless Hamiltonians) have finite R\'enyi-1 correlators, but $R_c \to 0$.

\section{Examples}
\label{sec:examples}

In this section, we discuss a few concrete examples with either long-range or critical local SW-SSB order. For critical states the calculation of the local fidelity correlator itself becomes an interesting defect problem. We demonstrate this in the decohered Ising model, which is known to be described by the classical random-bond Ising model (RBIM) along the Nishimori line. We also discuss the local R\'enyi-1 correlator in pure states, including ground states of conformal field theories (CFT) and free fermion metals (ballistic and diffusive), and systematically obtain universal scaling behaviors of the local R\'enyi-1 correlator in each case. Our results are summarized in Table~\ref{tab:local_scalings}.

Throughout the section, we denote by
\begin{equation}
    \ell = \dist(x,\partial A)
\end{equation}
the distance from the insertion point to the boundary of $A$.

\begin{table}[h]
\centering
\renewcommand{\arraystretch}{1.2}
\setlength{\tabcolsep}{8pt}
\begin{tabular}{lll}
\toprule
System/state & Diagnostic & Scaling \\
\midrule
Thermal state 
& $R^{(1)}(O_x)$ 
& SW-SSB \\

CFT  
& $R^{(1)}(\mathcal O_x)$ 
& $\ell^{-2\Delta_{\mathcal O}}$ \\

Clean Fermi metal 
& $R^{(1)}(c_x)$ 
& $\ell^{-1}$ \\

Dirac semimetal 
& $R^{(1)}(c_x)$ 
& $\ell^{-d}$ \\

Diffusive metal 
& $\overline{R^{(1)}(c_x)}$ 
& $\ell^{-2}$ \\

Random Gaussian state 
& $\overline{R^{(1)}(c_x)}$ 
& local SW-SSB \\
\bottomrule
\end{tabular}
\caption{
Summary of local one-point R\'enyi-$1$ scalings. Here $\ell=\dist(x,\partial A)$, $O_x$ is an arbitrary local charged operator, and $\mathcal{O}_x$, a primary operator. The thermal state has both local and global SW-SSB, while the random Gaussian state has only local SW-SSB. For Fermi metals not described by CFT, the $R^{(1)}$ correlator scales differently from ordinary two-point correlation functions.
}
\label{tab:local_scalings}
\end{table}

\subsection{Thermal states}
\label{sec:thermal}
Thermal states are expected to exhibit the standard global SW-SSB, based on the arguments in Refs.~\cite{lessa2025strong,liu2025diagnosing}. Therefore by Theorem \ref{thm:inequality_of_measures} a thermal state should also have local SW-SSB order.

More explicitly, for a thermal state $\rho$, we expect on the reduced density matrix a finite region $A$ to be also thermal: $\rho_A =\frac{1}{Z_A} e^{-\beta H_A}$. For local systems, $H_A$ is roughly speaking the restriction of $H$ to $A$ plus boundary contributions, which need not be symmetric. We then expect the one-point fidelity/R\'eny correlator on $\rho_A$ to be finite. This is simplest to see for the R\'enyi-$1$ correlator~\cite{liu2025diagnosing}: for a thermal state $\sqrt{\rho_A}\sim e^{-\beta H_A/2}$ is another  thermal state, so the R\'enyi-$1$ correlator takes the form
\begin{eqnarray}
    R^{(1)}(\rho_A;O_x)&=&\frac{1}{Z}\Tr(e^{-\frac{\beta}{2}H_A}O_xe^{-\frac{\beta}{2}H_A} O_x^{\dagger}) \nonumber \\ 
    &=&\langle O_x(\tau=\beta/2)O_x^\dagger(\tau=0)\rangle,
\end{eqnarray}
where the expectation value is taken over the imaginary-time path integral over $0\leq\tau\leq\beta$. For a generic local operator $O_x$, we expect this to give a finite value as long as $\beta$ is finite. If the Hamiltonian has a symmetric ground state with a finite gap $\Delta$, we expect the correlator to decay as $e^{-\beta\Delta/2}$ at low temperature.

As a concrete illustration, in Appendix~\ref{app:paramagnet} we explicitly calculate the one-point LFC of a finite-temperature Ising paramagnet in the canonical ensemble, and show that local SW-SSB indeed arises.

\subsection{\texorpdfstring{$ZZ$}{ZZ}-decohered Ising paramagnet}\label{sec:ZZ_decoherence}

Here, we consider one of the simplest decoherence models with SW-SSB transition~\cite{LeeJianXu,lessa2025strong}. The symmetry group is $\Z_2$ generated by $\prod_i X_i$ and the initial state is $\rho_0 = \ketbra{+\cdots +}{+ \cdots +}$, a pure state in the trivial paramagnetic phase. The system is subject to nearest-neighbor $ZZ$ decoherence with probability $p$:
\begin{equation}
    \rho_p = \prod_{e=\langle i, j \rangle} \E_{e,p}^{ZZ} [\rho_0], 
\end{equation}
where $\E_{(i,j), p}^{ZZ}[ \cdot] = (1-p) [ \cdot] + p Z_i Z_j [\cdot] Z_i Z_j$. In two dimensions, it is well known that this model exhibits a transition from a symmetric phase to an SW-SSB phase at $p_c\approx0.109$, and the critical properties are described by the random-bond Ising model (RBIM) along the Nishimori line. It is not difficult to see that the phase diagram for local SW-SSB order will be the same: Thm.~\ref{thm:inequality_of_measures} guarantees that states at $p>p_c$ (with global SW-SSB) will also exhibit local SW-SSB; states at $p<p_c$ is known to have exponentially decaying CMI, therefore by Thm.~\ref{thm:LRCMI} will not exhibit local SW-SSB.

Now let us consider the LFC more explicitly. After reducing the state to a region $A$, we have
\begin{equation}\label{eq:reduced_state_ZZ}
    \rho_{p,A} = \Tr_{\bar A}[\rho_p] = \prod_{\substack{\langle i, j\rangle \\ i \in A \\ j \notin A}} \E_{i,p}^{Z} \circ \prod_{\substack{\langle i, j\rangle \\ i,j \in A}} \E_{\langle i,j\rangle,p}^{ZZ} [\rho_{0,A}], 
\end{equation}
where $\E_{i,p}^{Z}[\cdot] = (1-p) [\cdot] + p Z_i [\cdot] Z_i$.
That is, $ZZ$ decoherence at the boundary becomes single $Z$ decoherence, which breaks the strong symmetry explicitly. To treat bond and boundary errors uniformly, we enlarge the edge set of the region $A$ to $\tilde{E}_A$, including edges straddling the boundary of $A$. For such edges $e = \langle i, j \rangle \in \tilde{E}_A$, we define $\E_{e, p}^{ZZ}[\cdot] = (1-p) [\cdot] + \tilde{Z}_i \tilde{Z}_j [\cdot] \tilde{Z}_i \tilde{Z}_j$, where $\tilde{Z}_i = Z_i$ if $i \in A$, otherwise $\tilde{Z}_i = \one_i$. In this notation, $\rho_{p,A}$ is simply
\begin{equation}
    \rho_{p,A} = \prod_{e \in \tilde{E}_A} \E_{e,p}^{ZZ}[\rho_{0,A}].
\end{equation}
Since the $ZZ$ decoherence maps Pauli-$X$ basis states to other basis states, we have
\begin{equation}
    \rho_{p,A} = \sum_{s \in \{0,1\}^A} p_s \ketbra{s}{s}_X,
\end{equation}
where $\ket{s}_X$ is the product basis state in the $X$-basis so that $s=0$ corresponds to $\ket{+}$ and $s=1$, to $\ket{-}$. We introduce a new variable $\tau_e$ to indicate when the $\tilde{Z}\tilde{Z}$ ($ZZ$ or $Z$) decoherence is acted on edge $e \in \tilde{E}_A$, in which case $\tau_e = -1$. In similar spirit, we will call $\sigma_i \defeq (-1)^{s_i}$. With that, we can express the probability distribution $p$ as 
\begin{align}
    p_s & = (2 \cosh(\beta))^{-|\tilde{E}_A|}\sum_\tau e^{\beta\sum_{e \in \tilde{E}_A} \tau_e} \prod_{i \in A} \delta \left(\sigma_i,\prod_{e \ni i} \tau_e\right),
\end{align}
where 
\begin{equation}\label{eq:temperature_p_relation}
e^{-2 \beta} = \frac{p}{1-p}.    
\end{equation}
We can reexpress $p_s$ in two different ways. One is a ``high-temperature expansion'', and comes from substituting $\delta_{\sigma_i,\prod_{e \ni i} \tau_e} = \frac{1}{2}(1 + \sigma_i \prod_{e \in i} \tau_e)$ and expanding the product in $i \in A$:
\begin{align}
    p_s & \propto \sum_{\eta_{i} \in \{\pm1\}} \sum_{\tau} e^{\beta \sum_{e \in \tilde{E}_A} \tau_e} \prod_{\eta_i = -1} \sigma_i \prod_{\eta_i \eta_j = -1} \tau_{\langle i,j \rangle} \\
    & \propto \sum_{\eta} \left(\prod_{\eta_i = -1} \sigma_i\right) \tanh(\beta)^{\sum_{\langle i, j\rangle \in \tilde{E}_A}(1-\eta_i \eta_j)/2} \\
    & \propto \sum_{\eta} e^{K \sum_{\langle i, j\rangle \in \tilde{E}_A} \eta_i \eta_j} \prod_{i \in A} \eta_i^{s_i} \\
    & = \sum_{\eta} e^{K \sum_{\langle i, j \rangle \in A} \eta_i \eta_j + K \sum_{i \in \partial A} \eta_i} \prod_{i \in A} \eta_i^{s_i} \label{eq:ZZ_decoherence_ising_model}
\end{align}
where $K$ is Krammers-Wannier dual to $\beta$ via $e^{-2 K} = \tanh(\beta)$. In the last line, we made explicit the fact that the sums over edges $\langle i, j \rangle \in \tilde{E}_A$ also range over the edges connecting sites inside $A$ to those outside. For these sites $j \notin A$, we should set $\eta_j =+1$, which effectively adds a boundary magnetic field term $H_{\partial A} = \sum_{i \in \partial A} \eta_i$. In this form, $p_s$ is the expectation value of $\prod_i \eta_i^{s_i}$ in the classical Ising model at inverse temperature $K$. This will be particularly useful for the R\'enyi-$k$ order parameter in Sec. \ref{sec:ZZ_decoherence}. 

Another way is by gauge fixing a particular configuration $\tau_e^*(s, b)$ for each bulk spin configuration $s$ and boundary condition $b \equiv \{b_{i}\}_{i \in \partial A^c}$, with $\partial A^c = \{ i \notin A \mid \exists j \in A, \langle i, j \rangle \in \tilde{E}_A\}$, that satisfies the same flux conditions $(-1)^{s_i} = \sigma_i = \prod_{e \ni i} \tau_e^*(s,b)$ for $i \in A$ and $(-1)^{b_i} = \prod_{e \ni i} \tau_e^*(s,b)$ for $i \in \partial A^c$. By fixing the boundary condition $b$, and assuming for simplicity that $A$ is simply connected, we have that any other configuration $\tau_e$ satisfying the same constraint $(s, b)$ is related to $\tau_e^*$ by a gauge transformation $\tau_{e} \to \tau_{e} \prod_{p \ni e} \nu_p$, where the product is over plaquettes (or faces) $p$ whose boundary edges are all in $\tilde{E}_A$, and $\nu_p \in \{\pm 1\}$~\footnote{This statement can be formalized by noting that the relative homology $H_1(A, \partial A^c) = \Z_2^{|\partial A^c|}$.}. Hence,
\begin{align}\label{eq:ZZ_decoherence_RBIM}
    p_s & \propto \sum_{b_{i \in \partial A^c}} \sum_{\nu_p \in \{\pm1\}} e^{\beta\sum_{e \in \tilde{E}_A} \tau_{e}^*(s,b) \prod_{p \ni e} \nu_p}
\end{align}
In words, $p_s$ is a realization of the random bond Ising model (RBIM) partition function with spins $\nu_p$ on plaquettes and interaction coefficients $\tau_e^{*}(s,b)$ that connect the points with $s_i = 1$ or $b_i = 1$. 
\subsubsection{Local fidelity correlator}

From Eq. \eqref{eq:ZZ_decoherence_RBIM}, we can relate the SW-SSB properties of $\rho_p$ to the RBIM at the Nishimori line. Namely, the local fidelity correlator assumes the following form:
\begin{align}
    F(\rho_A; Z_x) & = \sum_s \sqrt{p_s p_{s + \hat{e}_x}} \\
    & \propto \sum_{s,b} \mathcal{Z}_{s,b} \sqrt{\frac{\sum_b \mathcal{Z}_{s + \hat{e}_x,b}}{\sum_{b} \mathcal{Z}_{s,b}}} \\
    & = \sum_{s,b} \mathcal{Z}_{s,b} e^{-\frac{1}{2}\beta \Delta F^{(s)}_{A,x}}, \label{eq:free_energy_diff_fidelity}
\end{align}
where $\mathcal{Z}_{s,b} = \sum_{\nu} e^{\beta\sum_{e \in \tilde{E}_A} \tau_{e}^*(s,b) \prod_{p \ni e} \nu_p}$ is the RBIM partition function with bond fluxes set by the bulk spin configuration $s$ and boundary conditions $b$, and $\Delta F^{(s)}_{A,x}$ is the difference in free energy after inserting a bond flux at site $x \in A$, maintaining the other bulk fluxes $s$ fixed, but summing over the free boundary conditions $b$. This means that the bond configurations $\tau^*_{e}(s,b)$ flip along a line that connects $x$ to all points in the boundary $\partial A^c$.

The RBIM has a ferromagnetic phase for low temperatures $T < T_c$ ($p < p_c$ via Eq.~\eqref{eq:temperature_p_relation}), and a paramagnetic phase for high temperatures $T > T_c$ ($p > p_c$). In the ferromagnetic regime, inserting a bond flux at the bulk site $x$ incurs a free energy penalty roughly proportional to the distance $\ell$ from $x$ to the boundary due to the alignment of the spins. From Eq.~\eqref{eq:free_energy_diff_fidelity}, this causes the local fidelity correlator to decay exponentially in $\ell$. In the paramagnetic regime, the free energy penalty is not extensive in $\ell$, resulting in $F(\rho; Z_x) > 0$.

\subsubsection{Replica calculation and boundary magnetization}\label{sec:ZZ_replica_renyi}
By employing the expression of $p_s$ from Eq. \eqref{eq:ZZ_decoherence_ising_model}, we can relate the one-point R\'enyi-$2k$ measure $R^{(2k)} (\rho_{p,A}; Z_x)$, $k \in \Z_{> 0}$, to a $2k$-replica calculation:
\begin{align}
    & R^{(2k)} (\rho_{p,A}; Z_x) = \sum_{s} p_s^k p_{s + \hat{e}_x}^k \\
    & \propto \sum_{s} \sum_{\eta^{(\alpha)}} e^{K \sum_{\alpha=1}^{2k}H_A[\eta^{(\alpha)}]} \left(\prod_{\alpha=k+1}^{2k} \eta_x^{(\alpha)}\right) \prod_{i \in A} \prod_{\alpha=1}^{2k} (\eta_i^{(\alpha)})^{s_i} \\
    & = \sum_{\eta^{(\alpha)}} e^{K \sum_{\alpha=1}^{2k} H_A[\eta^{(\alpha)}]} \left(\prod_{\alpha=k+1}^{2k} \eta_x^{(\alpha)}\right) \prod_{i \in A} \delta\left( \prod_{\alpha=1}^{2k} \eta_i^{(\alpha)}, 1\right), \\
\end{align}
where $H_A[\eta] = \sum_{\langle i, j \rangle \in A} \eta_i \eta_j + \sum_{i \in \partial A} \eta_i$ is the classical Ising model Hamiltonian with boundary magnetic field. For $2k = 2$, this simplifies significantly, as the delta function sets $\eta^{(1)} = \eta^{(2)}$:
\begin{align}
    R^{(2)}(\rho_{p,A}; Z_x) \propto \sum_{\eta} e^{2K H_A[\eta]} \eta_x = \langle \eta_x\rangle_{2K},
\end{align}
where $\langle \cdot \rangle_{2K}$ denotes the thermal expectation value of $H_A$ at inverse temperature $2K$.

Hence, the one-point R\'enyi-2 measure has an appealing interpretation: it is the local magnetization of the Ising model on a finite region $A$ with a magnetized boundary that breaks the symmetry. This choice of boundary condition will magnetize the bulk spins whenever the system is in the ferromagnetic phase.

\subsection{Pure states satisfying cluster decomposition}

Here, we argue that the one-point fidelity equals the absolute value of the linear one-point function for infinite-volume states that are pure. These immediately exclude cat states of ordinary SSB such as the GHZ state, whose infinite volume limit is the (mixed) classical SSB ensemble. The key feature of these pure states that we rely on is their \textit{clustering}: for all local operators $O_A$ and $O_B$ respectively supported on regions $A$ and $B$, $|\langle O_A O_B \rangle - \langle O_A \rangle \langle O_B \rangle| \xrightarrow{\dist(A,B) \to \infty} 0$~\cite{bratteli_operator_2012, tasaki_lieb_2022}. Note that no assumption is made about the decay rate of the connected correlation functions. In particular, it can be superexponential.

We argue for the statement above now with slight disregard for the technicalities of infinite-volume states. In particular, let us treat the infinite-volume pure state as a ket $\ket{\psi}$ that is clustering. Then, having fixed a local operator $O_x$, we have
\begin{align}
    F(\ket{\psi}; O_x) & = \lim_{|A| \to \infty} F(\rho_A, O_x \rho_A O_x^\dagger) \\
    & = \lim_{|A| \to \infty} \max_{U_{A^c}} |\langle\psi | O_x \otimes U_{A^c} | \psi \rangle| \\
    & = \lim_{|A| \to \infty} \max_{U_{A^c}} |\langle\psi | O_x | \psi \rangle| \cdot |\langle \psi | U_{A^c} | \psi \rangle| \\
    & = |\langle O_x \rangle_\psi|,
\end{align}
where we used Uhlmann's theorem in the second line, the clustering property in the third line, and, in the last line, that $U_{A^c} = \one_{A^c}$ maximizes $|\langle \psi | U_{A^c} |\psi \rangle|$ at 1. A rigorous proof of the relation above can be obtained using properties of the general C$^*$-algebra Uhlmann fidelity~\cite{UHLMANN1976273, alberti_note_1983}. 

In the specific case of ground states of symmetric Hamiltonians, we expect the clustering property to hold in two cases: when the Hamiltonian is not in the SSB phase, and when the ground state is a physical state of an SSB Hamiltonian, which breaks the symmetry explicitly. For the latter case, the symmetry breaking pattern can be detected through a charged operator, which would also give rise to a nonzero one-point fidelity measure with the exact same magnitude as the linear one-point function.

In the illustrative sections below, we study the behavior of the LFC for (pure) ground states that do not break the symmetry either spontaneously or explicitly. Due to the preceding arguments, the limiting value of the local fidelity measure is trivially zero, so our focus will be on describing the finite-size scaling as the LFC approaches zero.

\subsection{Conformal Field Theories}
\label{sec:CFT}

An important class of critical states is conformal field theories (CFTs). We will now discuss the behavior of one-point local R\'enyi-1 correlators in CFTs, for a primary operator $\OO$ of scaling dimension $\Delta_\OO$ in $d$ spatial dimensions. 

Let us first summarize our results below. We find that for a thermal state at inverse temperature $\beta$ and $\ell\to\infty$, we have
\begin{equation}\label{eq:R1_thermal_CFT_SWSSB}
    R^{(1)}(\omega; \OO(x)) =  \frac{a_\OO}{\beta^{2\Delta_\OO}}. 
\end{equation}
In $2d$, the prefactor can be fixed solely by the normalization of the two-point function $\langle \OO \OO\rangle$, whereas in higher dimensions it is generally determined by other CFT data. 

On the other hand, for the vacuum state, we have $ R^{(1)}(\omega, \OO)  \to 0$, and for finite $\ell$ the local R\'enyi-$1$ correlator exhibits the universal scaling
\begin{equation}\label{eq:R1_groundstate_CFT_scaling}
    R^{(1)}(\rho_A; \OO(x)) \sim  \frac{c_\OO}{\ell^{2\Delta_\OO}}. 
\end{equation}
Again, in $2D$ the prefactor and subleading corrections can be determined explicitly, whereas in higher dimensions they are fixed by theory-dependent data.

\subsubsection{Two dimensional CFTs}
Let us consider two-dimensional CFTs ($D = 1+1$). Crucially, we will show that the one-point R\'enyi-$1$ correlator for a primary operator $\OO$, on a finite region $A$ can be mapped to a conventional two-point function in the complex plane via the uniformization map. Therefore, the R\'enyi-$1$ correlator is completely fixed by the normalization of the two-point function. 

Consider the spatial region $A= [u,v]$ on a thermal cylinder with inverse temperature $\beta$. We consider the one-point R\'enyi-$n$ correlator
\begin{equation}\label{eq:replicated_Renyi1}
R^{(2n)}(\rho_A; \OO(x))=\frac{\Tr_A\left( \rho_A^{n}\OO(x)\rho_A^{n}\OO^\dagger(x)\right)}{\Tr_A(\rho_A^{2n})}
\end{equation}
and analytically continue $n\to \tfrac12$ to obtain the one-point R\'enyi-1 correlator.

As in standard replica constructions, $\Tr_A(\rho_A^{2n})$ is represented by an Euclidean path integral on a $2n$-sheeted Riemann surface $\mathcal R_{2n}$ branched over $A$, with twist operators inserted at its endpoints~\cite{calabrese2004entanglement, calabrese2009entanglement}. The two operator insertions $\OO$, $\OO^\dagger$ are separated by a Euclidean time shift $n\beta$, placing them on sheets whose indices differ by $n$ in the $2n$-sheet geometry.

We first map the thermal cylinder to the complex plane
\begin{equation}\label{eq:uniformization_map_first_step}
    z = e^{\frac{2\pi}{\beta}(x+i\tau)},
\end{equation}
The endpoints of $A$ at $\tau=0$ map to $u \to z_u=e^{\frac{2\pi}{\beta}u}$, and $v \to z_v=e^{\frac{2\pi}{\beta}v}$, while the operator insertion maps to $x\to z_x=e^{\frac{2\pi}{\beta}x}$.

Under the map, Eq.~\eqref{eq:uniformization_map_first_step}, the primary vertex operators pick up a Jacobian factor, but the scaling factors associated with the twist operators cancel between numerator and denominator in Eq.~\eqref{eq:replicated_Renyi1}.

Next, we uniformize $\mathcal R_{2n}$ through
\begin{equation}\label{eq:uniformization_map_second_step}
    w^{2n}=\frac{z-z_u}{z-z_v}.
\end{equation}
Encircling a branch point sends $w\to e^{2\pi i/(2n)}w$, reproducing the replica monodromy, so that the branch points are encoded in the multivaluedness of the coordinate $w$. The $2n$ preimages of $z_x$ satisfy
\begin{equation}
w_k=w_x e^{2\pi i \frac{k}{2n}},  \qquad w_x^{2n}=\frac{z_x-z_u}{z_x-z_v}.
\end{equation}where $ k=0,\dots,2n-1$. Therefore, the operator insertions $\OO$ are located at $w_0=w_x$ and $w_n=-w_x$.

After uniformization, the surface becomes the complex $w$-plane, and the correlator reduces to a standard two-point function, $\langle \OO(w, \bar w)\OO^\dagger(w', \bar w')\rangle_{\mathbb C}
=|w-w'|^{-2\Delta_{\OO}}$. Including the Jacobian factors, we get
\begin{equation}
R^{(2n)}(\rho_A;\OO(x))
= \left[ \frac{\pi}{4n\beta} \frac{\sinh\!\frac{\pi(v-u)}{\beta}} {\sinh\frac{\pi(x-u)}{\beta} \sinh\frac{\pi(v-x)}{\beta}}\right]^{2\Delta_{\OO}}.
\end{equation}
The result depends only on the cross-ratio determined by the interval endpoints and the insertion point on the thermal cylinder. We can compare this expression with the two-point in the thermal cylinder
\begin{equation}
    \langle \OO(x,0) \OO^\dagger(x,\tau)\rangle_\beta = \left[\frac{\pi /\beta}{\sin \pi \tau/ \beta}\right]^{2\Delta_{\OO}}
\end{equation}

Analytically continuing, we obtain the one-point R\'enyi-1 for two cases of interest. For finite $\beta$, we take the covering limit and arrive at
\begin{equation}
    R^{(1)}(\omega;\OO(x))= \left( \frac{\pi}{\beta}\right)^{2\Delta_{\OO}} > 0.
\end{equation}
Therefore, two-dimensional thermal CFTs are expected to always exhibit SW-SSB, in agreement with our general discussions in Sec.~\ref{sec:thermal}.

In contrast, at zero temperature ($\beta \to \infty$), the one-point R\'enyi-1 always vanishes in the covering limit. For a symmetric covering $A = [x-\ell, x+\ell]$, we find
the universal scaling 
\begin{equation}\label{eq:R1_CFT_zero_temperature}
    R^{(1)}(\rho_A;\OO(x)) =  \frac{1}{\ell^{2\Delta_{\OO}}}. 
\end{equation}
Note that the distance to the boundary controls the decay of the R\'enyi-1 correlator. For example, for a non-isotropic covering $A =[ x- \ell_L,  x+ \ell_R]$, the correlator decays like 
\begin{equation}
    R^{(1)}(\rho_A;\OO(x)) =  \left( \frac{\ell_L^{-1} + \ell_R^{-1}}{2} \right)^{2\Delta_{\OO}}, 
\end{equation}
and the leading contribution is $1 / \ell^{2\Delta_{\OO}} $ with $\ell= \min(\ell_L , \ell_R)$ the distance to the boundary. 

\subsubsection{Higher-dimensional CFTs}
The uniformization map method does not generalize directly to higher dimensions, because $\partial A$ is no longer a set of isolated points, and the branch cut cannot be removed by a conformal transformation. Thus, for $D \geq 3$ the one-point R\'enyi correlator is instead mapped to a nontrivial codimension-two twist defect problem. However, as we argue next, the same universal conclusions, Eqs.~\eqref{eq:R1_thermal_CFT_SWSSB} and \eqref{eq:R1_groundstate_CFT_scaling} are expected to hold for both vacuum and thermal states of higher-dimensional CFTs.

First, consider a thermal state $\rho_\beta = e^{-\beta H}/Z_\beta$ of a CFT on flat space. Since the covering limit exists, we have
\begin{equation}\begin{aligned}
    R^{(1)}(\rho_\beta; \OO(x)) &= \frac{1}{Z_\beta}\Tr\left(e^{-\beta H/2} \OO(x) e^{-\beta H/2} \OO^\dagger(x)\right)\\
     &= \langle \OO(x, \beta /2) \OO^\dagger(x,0)\rangle_\beta.
\end{aligned}
\end{equation}
where $\mathcal O(x,\tau) = e^{+\tau H}\mathcal O(x)e^{-\tau H}$, and $\langle \cdots \rangle_\beta = \frac{1}{Z_\beta} \Tr( \cdots ~e^{-\beta H})$.
This correlation function is constrained by conformal invariance. Translational invariance removes the dependence on $x$, and thus $\beta$ is the only scale. Then,  the scaling of the primary $\OO$ fixes the form
\begin{equation}
    \langle \OO(x, \tau) \OO^\dagger(x,0)\rangle_\beta. = \frac{f(\tau/\beta)}{\beta^{2\Delta_{\OO}}},
\end{equation}
with $f$ a dimensionless function. Thus, provided that $f(1/2) \neq 0$, we obtain
\begin{equation}\label{eq:R1_CFT_finiteT_higher_d}
    R^{(1)}(\rho_\beta; \OO(x)) = \frac{a_{\mathcal O}}{\beta^{2\Delta_{\OO}}} > 0,
\end{equation}
where $a_{\mathcal O}$ is theory-dependent data. In particular, $R^{(1)}>0$ for any finite inverse temperature $\beta<\infty$, so thermal CFTs in any spacetime dimension are expected to exhibit SW-SSB, as expected from Sec.~\ref{sec:thermal}.

We next consider the vacuum state. Based on the previous discussion, the one-point R\'enyi-1 correlator is expected to vanish in the covering limit. We are instead interested in determining the universal scaling of $R_A^{(1)}$ with $\ell = {\rm dist}(x, \partial A)$. 

To do so, let $A$ be a ball of radius $\ell$ centered at $x$, which by translational invariance we place at the origin. By the Casini-Huerta-Myers formula~\cite{casini_derivation_2011}, the reduced density matrix on the ball is
\begin{equation}
\begin{aligned}
\rho_A &= \frac{e^{-K_A}}{\Tr_A e^{-K_A}},\\ 
K_A &= 2\pi \int_{|x|<\ell}\hspace{-1.2 em} d^{d}x \left(\frac{\ell^2-|x|^2}{2\ell} T_{00}(x) \right).
\end{aligned} 
\end{equation}
In other words, $\rho_A$ looks like a thermal state with respect to energy density $T_{00}$, with an effective position-dependent inverse temperature $\beta_{\rm eff}(x) = \pi\frac{\ell^2-|x|^2}{\ell}$. Crucially, near the operator insertion, this effective inverse temperature is finite,  $\beta_{\rm eff}\sim \ell$, and the thermal result, Eq.~\eqref{eq:R1_CFT_finiteT_higher_d}, can anticipate the scaling. More precisely, we can perform a Weyl transformation that maps the ball to a unit-radius ball, yielding
\begin{equation}\label{eq:R1_CFT_scaling}
    R^{(1)}(\rho_A; \mathcal O(x)) = \frac{C_{\mathcal O}}{ \ell^{2\Delta_{\mathcal O}}}, 
\end{equation}
where $C_\OO$ is an $\ell$-independent constant determined by the CFT data. Therefore, for a general covering region, the expected scaling of the R\'enyi-$1$ one-point correlator is precisely as the decay of the two-point function in flat space.

\subsection{Metals and semimetals}
\label{sec:metals}
In this section, we consider ground states of translationally invariant free fermion systems with a U(1) particle-number symmetry. Among these states, the most interesting gapless examples are Fermi metals and semimetals. As we will show next, neither of these systems exhibits SW-SSB, but the local one-point R\'enyi-1 does exhibit universal critical behavior fixed by the low-energy structure.

For a Fermi metal with a smooth codimension-one Fermi surface, in any spatial dimension, we find that 
\begin{equation}\label{eq:Fermi_metal_scaling_intro}
    R^{(1)}(\rho_A;c_x) \sim \frac{a_{\rm FS}}{\ell}, 
\end{equation}
with a non-universal coefficient $a_{\rm FS}$ which depends on the geometry of the Fermi surface and the shape of the region $A$.  

We can understand this result as follows. In $1d$, the low-energy physics is determined by two Fermi points and is thus described by a free Dirac CFT, yielding the $\frac{1}{\ell}$ scaling according to Eq.~\eqref{eq:R1_CFT_scaling}. In higher dimensions, however, the low-energy modes live on a codimension-one Fermi surface, and each point defines a chiral one-dimensional mode, propagating in the direction of the Fermi velocity. Crucially, the density of these modes is finite and, upon integrating over the entire surface, yields the same $1d$ scaling. 

By contrast, in a Dirac semimetal with isolated linear nodes, the low-energy modes are concentrated near a finite number of Dirac points. The low-energy excitations are thus governed by a Dirac CFT. Hence
\begin{equation}\label{eq:semimetal_scaling_intro}
    R^{(1)}(\rho_A; c_x) \sim \frac{a_{\rm SM}}{\ell^{2\Delta_\psi}},
\end{equation}
where $\Delta_\psi=\frac{d}{2}$ is the scaling dimension of a free Dirac fermion in $D = d+1$ dimensions.

More broadly, we will argue that the scaling of the Rényi-1 correlator is controlled by the formula
\begin{equation}\label{eq:free_fermion_general_formula}
      R^{(1)}(\rho_A; c_x) \sim \int_{\bar A} \! d^dy~|C_{xy}|^2 .
\end{equation}
In other words, it is controlled by processes in which the particle inserted at $x$ propagates to an arbitrary point in the complement of $A$ and returns. This reproduces both Eqs.~\eqref{eq:Fermi_metal_scaling_intro} and Eq.~\eqref{eq:semimetal_scaling_intro}, and allows us to argue for the behavior of other states, including diffusive metals.

The rest of this section is organized as follows. First, we derive the free-fermion formula that expresses $R_A^{(1)}$ in terms of the restricted single-particle correlation matrix. We then analyze the one-dimensional Fermi gas by diagonalizing the sine kernel, reproducing the CFT result. Next, we generalize the argument to higher-dimensional Fermi metals and derive the Fermi-surface formula for arbitrary smooth Fermi surfaces. Finally, we present the general argument, leading to Eq.~\eqref{eq:free_fermion_general_formula} and apply it to weakly disordered diffusive metals.

\subsubsection{R\'enyi-1 correlators in free fermion systems}
A useful property of free fermions is that Wick's theorem holds for correlation functions restricted to any region $A$. Consequently, the reduced density matrix $\rho_A$ of a Gaussian state on region $A$ is also a Gaussian state, and specifying $\rho_A$ is equivalent to specifying the two-point correlation matrix 
\begin{equation}
    C_A(x,y) = \langle c_x^\dagger c_y\rangle, \qquad  x,y\in A.
\end{equation}
We can diagonalize $C_A(x,y)$ in terms of the effective single-particle eigenstates as
\begin{equation}\label{eq:free_fermions_diagonalize}
    C_A(x,y)=\sum_\alpha\lambda_\alpha \phi_{\alpha}^*(x)\phi_\alpha(y),
\end{equation}
where $\lambda_\alpha\in[0,1]$ is the effective average occupation number of the eigenstate $\alpha$, with wavefunction $\phi_\alpha(x)$. For each $\alpha$, we denote the un-occupied and occupied states as $|0_\alpha\rangle$ and $|1_\alpha\rangle$, respectively. The many-body reduced density matrix factorizes as 
\begin{equation}
    \rho_A =  \bigotimes_\alpha \left[(1-\lambda_\alpha)\ket{0_\alpha}\bra{0_\alpha}
    +\lambda_\alpha\ket{1_\alpha}\bra{1_\alpha}\right].
\end{equation}
Then, the one-point R\'enyi-1 correlator takes the form
\begin{equation}\label{eq:R1_free_fermions_efficient}
\begin{aligned}
    R^{(1)}(\rho_A; c_x) =& \left[ \sqrt{C_A(1-C_A)} \right]_{(x,x)}\\
    =& \sum_\alpha |\phi_\alpha(x)|^2\sqrt{\lambda_\alpha(1-\lambda_\alpha)}.
\end{aligned}
\end{equation}
The same expression also holds for $R_A^{(1)}(c_x^\dagger)$. Therefore, we can determine R\'enyi-1 correlators by solely working in a single-particle basis, leading to great simplifications for both analytical and numerical approaches.

\subsubsection{Fermi gas in \texorpdfstring{$1d$}{1d}}

Consider a one-dimensional Fermi gas with Fermi momentum $k_F$. The ground state is 
\begin{equation}
    |{\rm GS}_{k_F}\rangle = \prod_{|k|\leq k_F} \hspace{-0.2em }c_k^\dagger ~ |{\rm vac}\rangle,
\end{equation}
where $ |{\rm vac}\rangle$ denotes the empty state. The low-energy physics is described by two-Fermi points, and thus by a free Dirac CFT. Therefore, after accounting for the normalization of the microscopic two-point function,
\begin{equation}\label{eq:two_point_function_1d}
   \langle {\rm GS}_{k_F}|c_x^\dagger c_0|{\rm GS}_{k_F}\rangle = \frac{\sin (k_Fx)}{\pi x},
\end{equation}
the one-point R\'enyi-1 correlator reduces to the CFT result in Eq.~\eqref{eq:R1_CFT_zero_temperature} with scaling dimension $\Delta = \frac{1}{2}$. We now show this explicitly; the same structure will later be used to analyze higher-dimensional Fermi gases.

We make two simplifications. First, we take the region to be the semi-infinite line $ A= [ 0,\infty)$ and insert the operator at $x = \ell$. This is equivalent to considering the finite interval $A= [- \ell,  R]$, inserting the operator at $x =0$, and then taking $R \to \infty $, with $\ell$ fixed. Second, we linearize the dispersion near the Fermi points $k = \pm k_F$, which captures the dominant contribution when $k_F \ell \gg 1$. Explicitly, we write 
\begin{equation}
    c_x \simeq e^{ i k_Fx}\psi_R(x) +e^{- i  k_F x} \psi_L(x), 
\end{equation}
where $\psi_{R}$ and $\psi_{L}$ annihilate right and left-moving low-energy chiral fermions near momenta $k_F$ and $-k_F$, respectively.

Consider first the right-moving chiral fermion $\psi_R$, with filled momenta $p<0$. The correlation kernel is \cite{casini2009reduced}
\begin{equation}\label{eq:chiral-projector}
    C_R(x,x') = \frac{1}{2\pi}\int_{p < 0} \hspace{-0.5em}e^{ i  p(x-x')}= \frac{1}{2}\delta(x-x') - \frac{ i }{2\pi}\frac{1}{x-x'}, 
\end{equation}
where the term $\frac{1}{x-x'}$ is taken in the principal-value regularization. The eigenfunctions are Mellin waves, with eigenvalues \cite{casini2009reduced} 
\begin{equation}
    \phi_\alpha(x)= \frac{1}{\sqrt{2\pi x}}x^{i \alpha}, \qquad \qquad \lambda_\alpha = \frac{1}{1+e^{2\pi\alpha}}.
\end{equation}
Then, by Eq.~\eqref{eq:R1_free_fermions_efficient} we have
\begin{equation}\label{eq:single_chiral_1d}
\begin{aligned}
    R^{(1)}(\rho_{[0,\infty)}; \psi_R(\ell)) &= \int_\R \hspace{-0.2em} d\alpha ~|\phi_\alpha(\ell)|^2\sqrt{\lambda_\alpha(1-\lambda_\alpha)}\\
    &= \int_\R\hspace{-0.2em} d\alpha ~\frac{1}{2\pi \ell}\frac{1}{2 \cosh (\pi \alpha)}\\
    &= \frac{1}{4\pi \ell}
\end{aligned}
\end{equation}
Note that this result does not depend on $k_F$. Since the left-moving branch contributes the same amount, for the $1d$ Fermi gas on the half-line, we get the leading contribution ($k_F \ell \gg 1$)
\begin{equation}\label{eq:1d_half_line_two_chiral_fermions}
    R^{(1)}(\rho_{[0,\infty)}; c_\ell) =  \frac{1}{2\pi\ell}. 
\end{equation}
For a finite interval, with the operator at distance $\ell_L$ and $\ell_R$ from the left and right boundaries, respectively, we get
\begin{equation}
    R^{(1)}(\rho_{[-\ell_L,\ell_R]};c_0) =  \frac{1}{2\pi} \left( \frac{1}{\ell_L} + \frac{1}{\ell_R}\right).
\end{equation}
In the case where the operator is inserted at the midpoint, we get
\begin{equation}\label{eq:R1_Femi_metal_1d}
    R^{(1)}(\rho_{[-\ell,\ell]}; c_0)=  \frac{1}{\pi\ell}.
\end{equation}
This agrees with the CFT prediction once the normalization in Eq.~\eqref{eq:two_point_function_1d} is taken into account. In Fig.~\ref{fig:R1_fermi_sea_1d}, we compare this analytic result with a direct numerical computation of $R^{(1)}(\rho_A;c_\ell)$, finding excellent agreement in the regime $\frac{1}{k_F} \ll \ell \ll  L$. 

\begin{figure}[t]
    \centering
    \includegraphics[width=1\linewidth]{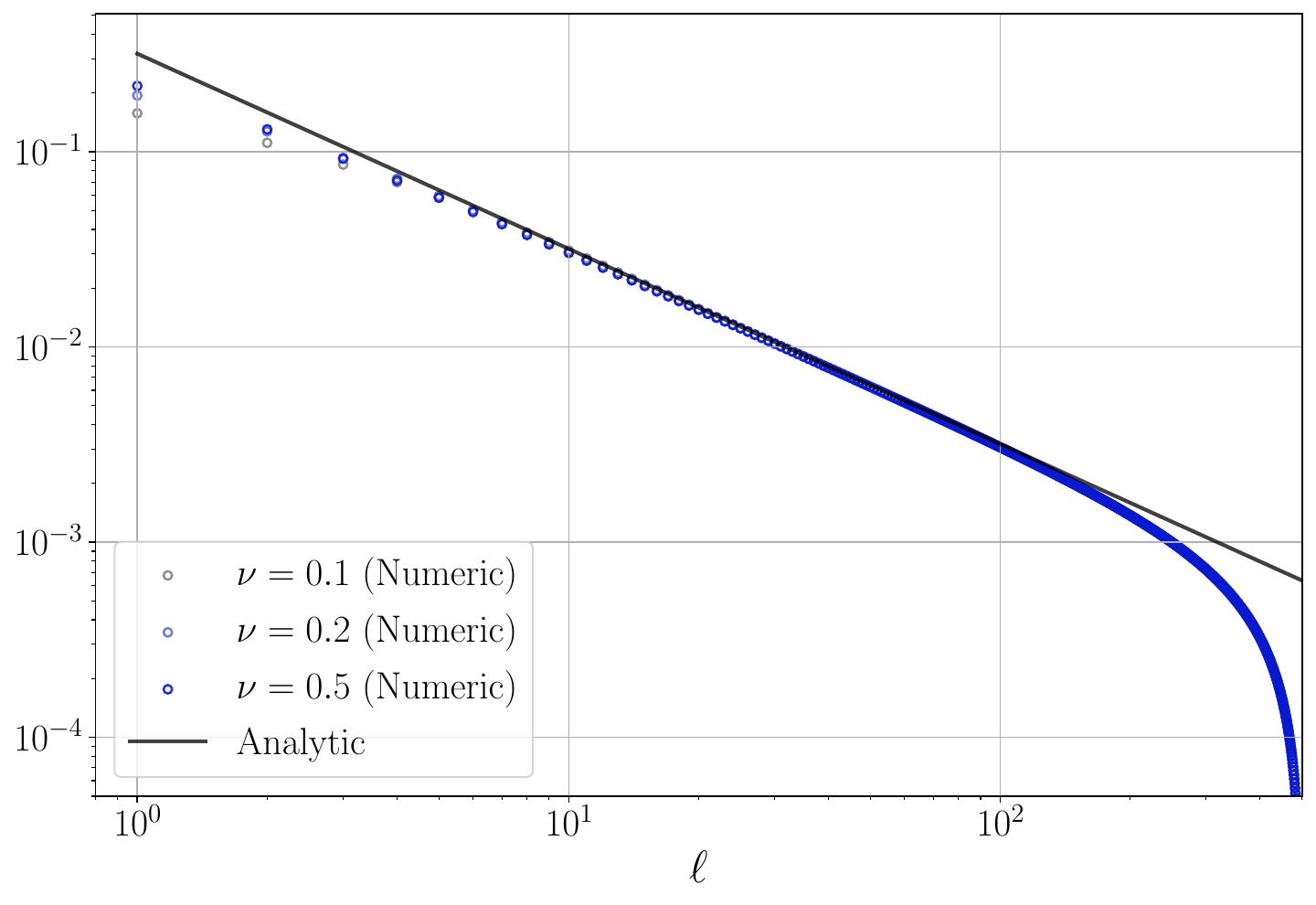}
    \caption{One-point R\'enyi-$1$ correlator, $R^{(1)}(\rho_A;c_x)=\Tr_A(\sqrt{\rho_A}\,c_x^\dagger\sqrt{\rho_A}\,c_x)$, computed for the ground state of a $1d$ free Fermi gas at various fillings $\nu$, for the interval $A= [x-\ell, x+\ell]$. For region sizes $1 \ll |A| \ll L$, the numerical data points show excellent agreement with the analytical calculation (solid line), given by Eq.~\eqref{eq:R1_Femi_metal_1d}.}  
    \label{fig:R1_fermi_sea_1d}
\end{figure}

\subsubsection{Fermi Metals in \texorpdfstring{$d\geq 2$}{}}
We now turn to Fermi metals in $d \geq 2$. In this case, the ground state is specified not only by the volume enclosed by the Fermi surface, but also by its shape. Alternatively, one may specify the single-particle dispersion $\epsilon(k)$. 

For illustration, we begin with the two-dimensional Fermi gas with a circular Fermi surface, and then generalize to arbitrary smooth Fermi surfaces in any dimension.

For a circular Fermi surface in $2d$, the two-point function is
\begin{equation}
\begin{aligned}
    \langle {\rm GS}_\nu|c_x^\dagger c_0|{\rm GS}_\nu\rangle &= \frac{k_F^2}{2\pi } \frac{J_1(k_F|x|)}{k_F |x|}\\
    &\sim \sqrt{ \frac{k_F}{2\pi^3} }\frac{\cos\left(k_F |x| - \frac{3\pi}{4}\right)}{|x|^{3/2}} 
\end{aligned}
\end{equation}
Based on this scaling, one might naively expect $R^{(1)}(\rho_A;c_x) \sim \ell^{-3/2}$. However, as we show next, this is incorrect. The reason is that the R\'enyi-$1$ correlator is determined by a continuum of effectively one-dimensional independent chiral modes. 

To see this explicitly, consider the half-plane
\begin{equation}
    A=\{ (x,y) : x \geq 0 \},
\end{equation}
and insert the operator at $(x,y)=(\ell,0)$. The half-plane is translation invariant in the $y$ direction, so the problem decomposes into independent momentum sectors parallel to $\partial A$. Fourier transforming along this direction,
\begin{equation}
    \varphi_q(x)=\int_\R\frac{d  y}{\sqrt{2\pi}}e^{- i qy}\varphi(x,y), 
\end{equation}
decomposes the correlation matrix as
\begin{equation}
    C_A =  \int_{-k_F}^{+k_F}\frac{dq}{2\pi}~C_q, 
\end{equation}
where the momentum-dependent one-dimensional kernel is
\begin{equation}
    C_q(x,x') = \frac{\sin \left(K_q(x-x')\right)}{\pi(x-x')}, \qquad 
    K_q=\sqrt{k_F^2-q^2}. 
\end{equation}

Therefore,
\begin{equation}
    R_A^{(1)}(c_\ell) = \int_{-k_F}^{k_F}\frac{dq}{2\pi}  \left[\sqrt{C_q(1-C_q)}\right]_{(\ell,\ell)},
\end{equation}
For modes satisfying $K_q \ell \gg 1$, we can use the one-dimensional result, Eq.~\eqref{eq:1d_half_line_two_chiral_fermions}. Thus, we get 
\begin{equation}\label{eq:Fermi_sea_halfplane_2d_result}
    R^{(1)}(\rho_A; c_\ell) \simeq \frac{k_F}{2\pi^2}\frac{1}{\ell}, \qquad A = \{x_\perp\geq0\} \subset\R^2
\end{equation}

We can now extend the half-plane calculation to an arbitrary smooth Fermi surface.  Consider the half-space
\begin{equation}
    A=\{ x_\perp\geq 0 \},
\end{equation}
and denote by $\hat x_\perp$ the unit normal to $\partial A$.  Decomposing momentum as $ k = k_\perp \hat x_\perp + q$, we note that $q$ is conserved by translation invariance along the boundary of $\partial A$. Thus, for each fixed $q$, the problem reduces to a one-dimensional fermion problem in the $x_\perp$ direction. Let $N(q)$ be the number of intersections of the line $K_q=\{k_\perp \hat x_\perp+q:\ k_\perp\in \mathbb R\}$ with the Fermi surface. Then, each chiral mode contributes according to Eq.~\eqref{eq:single_chiral_1d}, yielding
\begin{equation}
    R^{(1)}(\rho_A;c_\ell) \simeq \frac{1}{4\pi\ell} \int \frac{d^{d-1}q}{(2\pi)^{d-1}} N(q).
\end{equation}
We can now express this as an integral over the Fermi surface, by noting that $d^{d-1}q = \left|\hat v_F(k)\cdot \hat x_\perp\right| dS_k$, with $dS_k$ the area element on the Fermi surface and $\hat v_F(k)=v_F(k)/|v_F(k)|$ its normal. Therefore, we have 
\begin{equation}\label{eq:general_FS_formula_halfplane}
    R^{(1)}(\rho_A;c_\ell) \simeq \frac{1}{4\pi\ell} \int_{\rm FS}\hspace{-0.1em}\frac{dS_k}{(2\pi)^{d-1}}  \left|\hat v_F(k)\cdot \hat  x_\perp \right| .
\end{equation}
This is the general expression for half-space regions. In particular, for a circular Fermi surface in $2d$, we have $ \int dS_k \left|\hat v_F\cdot \hat  x_\perp \right| = k_F \int d\theta~ |\cos \theta| = 4k_F $, which reproduces Eq.~\eqref{eq:Fermi_sea_halfplane_2d_result}.

\begin{figure}[b]
\centering
\begin{tikzpicture}[scale=0.75, line cap=round, line join=round, >=stealth]

\tikzset{
    region/.style={draw=blue!85!green!60!, thick},
    fs/.style={draw=blue!70!black, fill=blue!8, thick},
    traj/.style={draw=purple!80!black, very thick, dashed},
    vel/.style={draw=red!75!black, very thick, ->},
    bnormal/.style={draw=black!50, thick, ->},
    tangent/.style={draw=black!40, thick},
    dist/.style={draw=black!65, thick, <->},
    dot/.style={circle, fill=black, inner sep=1.2pt}
}

\coordinate (cx) at (0,0);
\coordinate (k)  at (1.20,0.74);
\coordinate (bp) at ($(cx)!2.90!(k)$);
\coordinate (bm) at ($(cx)!-2.75!(k)$);
\coordinate (m)  at ($(cx)!-3.55!(k)$);
\coordinate (p)  at ($(cx)!4.55!(k)$);

\coordinate (cxl) at ($(cx)!0.018!(bm)$);
\coordinate (cxr) at ($(cx)!0.018!(bp)$);
\coordinate (off) at (0.11,-0.18);

\draw[region]

plot[smooth cycle, tension=0.5] coordinates {
    (-4.25,-0.30) (-3.70,2.10) (-2.10,3.30) (0.60,3.65)
    (2.42,3.35) (bp) (3.75,-1.80) (2.35,-3.20)
    (-0.70,-3.45) (bm)
};

\draw[fs]
plot[smooth cycle, tension=0.6] coordinates {
    (-1.55,-0.20) (-1.30,1.00) (-0.40,1.65) (0.45,1.60)
    (k) (1.4,-0.2) (0.70,-1.45) (-0.35,-1.55) (-1.20,-1.00)
};

\draw[traj] (m) -- (p);
\draw[vel] (k) -- ($(k)+(0.60,0.37)$);
\node[above left=1pt] at ($(k)+(0.42,0.26)$) {$\hat v_F$};

\draw[tangent] ($(bp)+(-0.16,0.52)$) -- ($(bp)+(0.16,-0.52)$);
\draw[bnormal] (bp) -- ($(bp)+(1.05,0.32)$);
\node[right=2pt] at ($(bp)+(1.05,0.32)$) {$\hat x_\perp$};
\draw[dist] ($(bm)+(off)$) -- ($(cxl)+(off)$) node[midway, below] {$\ell_-$};

\draw[dist] ($(cxr)+(off)$) -- ($(bp)+(off)$) node[midway, below] {$\ell_+$};

\node[dot] at (cx) {};
\node[dot] at (k) {};
\node[dot] at (bp) {};
\node[dot] at (bm) {};

\node at (-3.5,2.05) {$A$};
\node at (0.7,-1.85) {$\mathrm{Fermi~Surface}$};
\node[above left=1pt] at (cx) {$c_x$};

\end{tikzpicture}
\caption{Geometric representation of the R\'enyi-$1$ one-point formula for a Fermi metal, Eq.~\eqref{eq:general_smooth_region_FS}. For each point $k$ on the Fermi surface, the direction of the Fermi velocity $\hat v_F$ yields an effective chiral one-dimensional problem (dashed line) in real space, with the operator insertion $c_x$ at distances $\ell_\pm(k)$ from the boundary $\partial A$.}
\label{fig:Fermi_metal_general_region}
\end{figure}
Lastly, we extend the result to arbitrary smooth regions $A$. The idea is to apply the half-space result, locally along each chiral trajectory, as depicted in Fig.~\ref{fig:Fermi_metal_general_region}. More concretely, for each point $ k$ on the Fermi surface, the unit Fermi velocity $\hat v_F(k)$ defines a unique line passing through $x$ that intersects the boundary $\partial A$ at finitely many points. For simplicity, let us assume $A$ is convex and that $x$ lies in its interior so that there are only two intersections, and let $\ell_+(k)$ and $\ell_-(k)$ be the distances from $x$ to $\partial A$ along the directions $+\hat v_F$ and $-\hat v_F$, respectively. This defines a chiral $1d$ problem with endpoints at distances $\ell_\pm$. Thus, we get
\begin{equation}\label{eq:general_smooth_region_FS}
    R^{(1)}(\rho_A; c_x) \simeq \frac{1}{4\pi} \int_{\rm FS}
    \frac{dS_k}{(2\pi)^{d-1}} \left( \frac{1}{\ell_+(k)}+\frac{1}{\ell_-(k)}
    \right).
\end{equation}
This is the general smooth-region formula, which yields the leading universal behavior $ R_A^{(1)}(c_x) \sim 1/\ell$. 

For example, if $A$ is a half-space, one of the two distances is infinite, while the other is $\ell(k)=\ell/|\hat v_F(k)\cdot \hat x_\perp|$, which gives Eq.~\eqref{eq:general_FS_formula_halfplane}. 

Another particularly relevant example is a ball region of radius $\ell$ with the operator insertion at its center. In that case, regardless of the direction $\hat v_F$, we have $\ell_\pm = \ell$, and therefore
\begin{equation}\label{eq:disk_region_FS}
    R^{(1)}(\rho_A; c_0) \simeq \frac{{\rm Area(FS)}}{(2\pi)^d \ell} , \quad A = \{ |x| \leq \ell\} \subset \R^d. 
\end{equation}
The coefficient is simply given by the area of the Fermi surface, and is therefore non-universal data. In Fig.~\ref{fig:R1_fermi_disk_2d}, this result is compared with its numerical counterpart for the $2d$ square lattice at various fillings. There is a remarkable agreement, once $\ell \gg k_f^{-1}$. For small filling $\nu \lesssim 0.2 $, the Fermi surface can be effectively approximated by a circle, but for higher fillings the specific shape of its surface has to be taken into account via its area, according to Eq.~\eqref{eq:disk_region_FS}.

\begin{figure}[ht]
    \centering
    \includegraphics[width=1\linewidth]{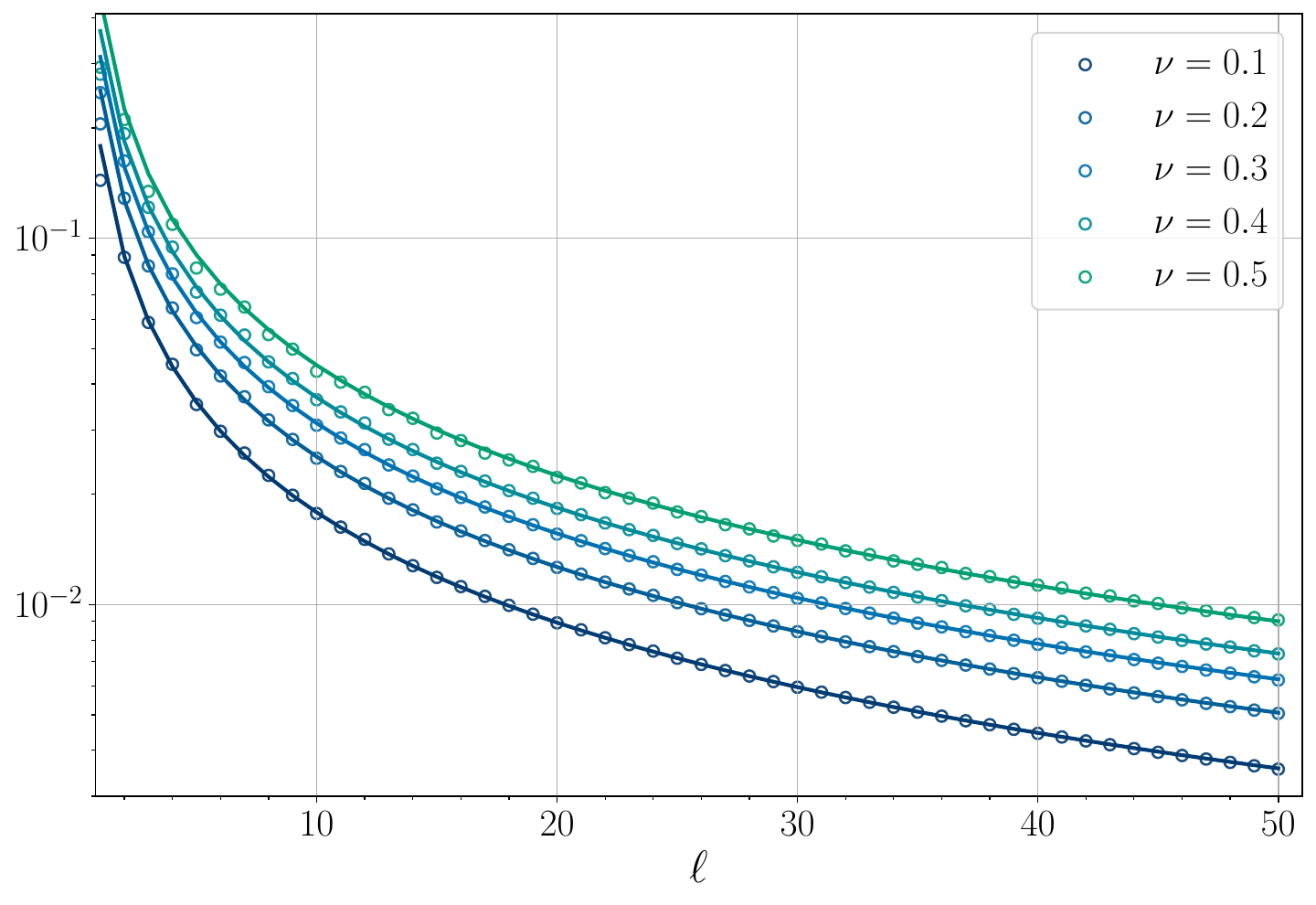}
    \caption{One-point R\'enyi-$1$ correlator, $R^{(1)}(\rho_A;c_x)=\Tr_A(\sqrt{\rho_A}\,c_x^\dagger\sqrt{\rho_A}\,c_x)$, computed for the ground state of $2d$ free Fermi gas on square lattice at various fillings $\nu$. The region $A$ is a disk of radius $\ell$ centered at the insertion point $x$. The numerical data points agree with the analytical calculation, Eq.~\eqref{eq:disk_region_FS} (with no adjustable parameters), shown as solid lines.}  
    \label{fig:R1_fermi_disk_2d}
\end{figure}

Lastly, we briefly comment on Dirac semimetals. Consider the $\pi$-flux state on the square lattice. Away from half-filling, the Fermi level intersects the bands in small Fermi pockets and, and we expect the universal behavior $R_A^{(1)}(\rho_A;c_x)\sim 1/ \ell$. At half-filling, however, the Fermi pockets shrink to isolated Dirac nodes, and the low-energy theory is instead described by massless Dirac fermions in $2+1$ dimensions. Since the fermion field has scaling dimension $\Delta_\psi=1$, Eq.~\eqref{eq:R1_CFT_scaling} gives $R_A^{(1)}(\rho_A;c_x)\sim \frac{1}{\ell^2} $. This crossover from Fermi-surface scaling away from half-filling to Dirac-CFT scaling at half-filling is confirmed numerically in Fig.~\ref{fig:R1_semimetal_2d}.

\begin{figure}[b]
    \centering
    \includegraphics[width=1\linewidth]{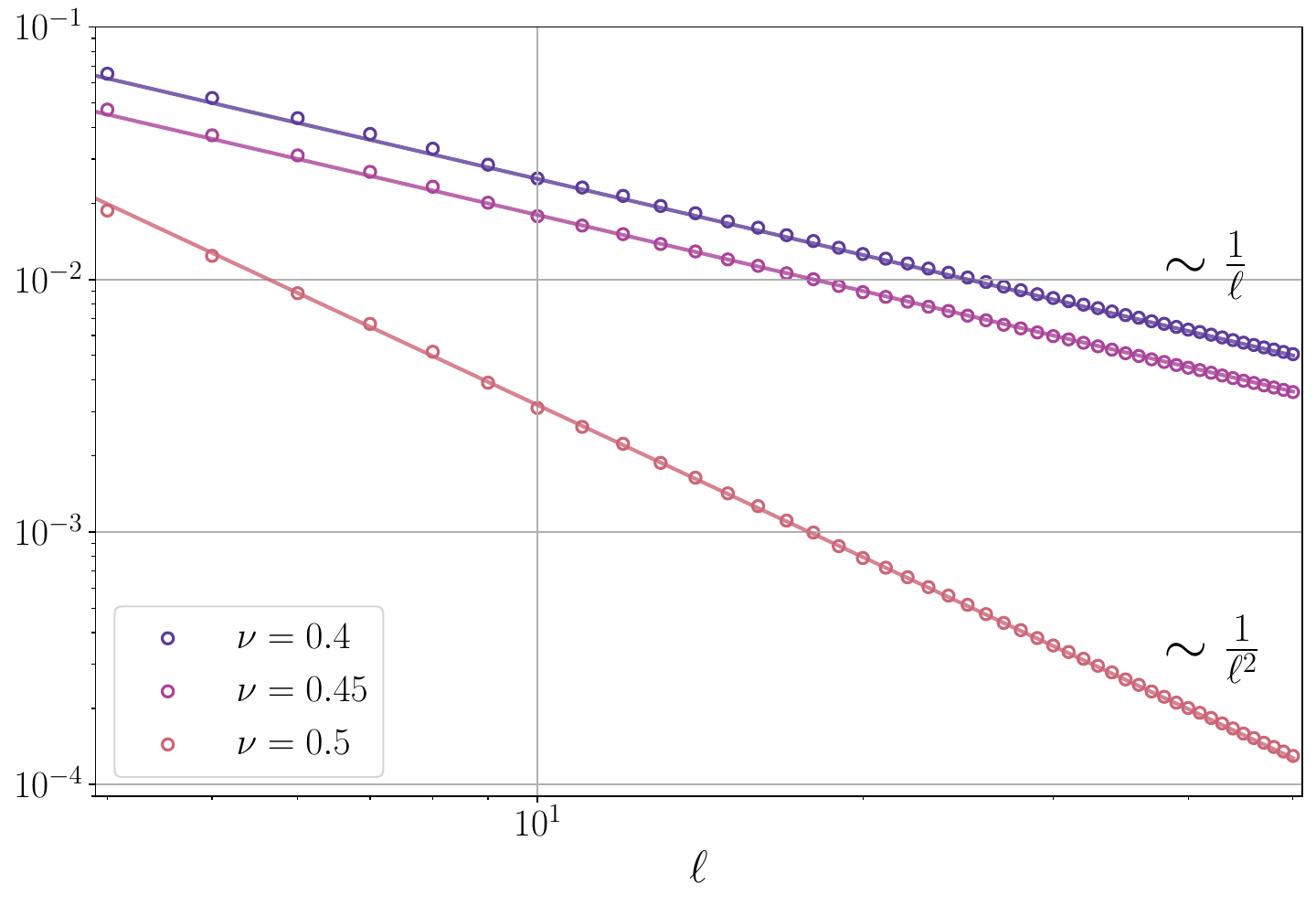}
    \caption{One-point R\'enyi-$1$ correlator, $R^{(1)} (\rho_A;c_x) = \Tr_A(\sqrt{\rho_A}\,c_x^\dagger\sqrt{\rho_A}\,c_x)$, computed for the ground state of the $\pi$-flux state on the $2d$ square lattice at several fillings $\nu$. The region $A$ a disk of radius $\ell$ centered at the insertion point $x$.}  
    \label{fig:R1_semimetal_2d}
\end{figure}

\subsubsection{General argument and diffusive metals} 
\label{sec:general_argument_diffusive_metal}
We now present a more general way of understanding the scaling of the Rényi-1 correlation in free-fermion systems. This argument encompasses metals and semimetals discussed above, and also allows us to extend the discussion to weakly disordered diffusive metals. 

We start by considering a replicated version of Eq.~\eqref{eq:R1_free_fermions_efficient},
\begin{equation}\label{eq:replicated_version_renyi_1}
    \mathcal R_{2q}(c_x) =  \left[C_A^q(1-C_A)^q\right]_{(x,x)}, 
\end{equation}
which for $q=1/2$ reduces to the R\'enyi-$1$ correlator. Note that for other values of $q$ it does not correspond to the R\'enyi-$2q$ correlator. Instead, for integer $q$, this quantity admits a clean interpretation, which will allow us to determine its asymptotic scaling.

Let us write
\begin{equation}
    C_A= P_A P_F P_A ,
\end{equation}
where $P_A$ denotes the single-particle projector onto region $A$, and $P_F$ denotes the single-particle projector onto the occupied states of the ground state (e.g. states inside the Fermi surface). Since $P_F^2=P_F$, for $x\in A$ and $q =1$, we have
\begin{equation}\label{eq:free_fermions_R_2_leading_decay}
\begin{aligned}
    \mathcal R_2(c_x)  &= \left[C_A(1-C_A)\right]_{(x,x)}  \\
              &= \left[P_A P_F P_{\bar A} P_F P_A\right]_{(x,x)} \\   
              &= \int_{\bar A} d^dy\, |C(x,y)|^2 .
\end{aligned}
\end{equation}
In other words, this corresponds to a sum of processes where the particle propagates from $x$ to an arbitrary point $y$ outside $A$, and then returns to $x$.
This expression generalizes to integer $q$, 
\begin{equation}
    \mathcal R_{2q}(c_x) =\left[  \left(P_F P_{\bar A}P_F P_A\right)^q \right]_{(x,x)}.
\end{equation}
This corresponds to processes where the particle starts at $x$, goes back and forth between $A$ and $\bar A$ a total of $q$ times, and then returns to $x$. Crucially, when $x$ is far from the boundary, the leading decay is controlled by the initial escape and the final return, and hence we expect that the universal scaling of $\mathcal R_{2q}$ (and in particular of the R\'enyi-$1$ correlator) is captured by $R_2$, Eq.~\eqref{eq:free_fermions_R_2_leading_decay}.

To make this intuition more precise, we write Eq.~\eqref{eq:replicated_version_renyi_1} in the basis that diagonalizes $C_A$, Eq.~\eqref{eq:free_fermions_diagonalize}, and introduce the local
density of states
\begin{equation}
    \rho_x(\epsilon)= \sum_{\alpha}\delta(\epsilon-\epsilon_\alpha) |\phi_\alpha(x)|^2 ,
\end{equation}
where we parametrized the spectrum of $C_A$ in terms of the effective single-particle spectrum, $\lambda_\alpha=\frac{1}{1+e^{\epsilon_\alpha}}$. Then, we find
\begin{equation}
    \mathcal R_{2q}(c_x) = \int_{-\infty}^{\infty} d\epsilon\, \frac{\rho_x(\epsilon)}  {(2\cosh(\epsilon/2))^{2q}} .
\end{equation}

The kernel ${(2\cosh(\epsilon/2))^{-2q}}$ strongly localizes the integral at $\epsilon =0$. Then, under the assumption that the local density of states $\rho_x(\epsilon)$ is smooth on the scale $\epsilon=O(1)$, we may approximate $\rho_x(\epsilon) \approx \rho_x(0)$, yielding the leading contribution
\begin{equation}
    \mathcal R_{2q}(c_x) \simeq \frac{\Gamma(q)^2}{\Gamma(2q)}\,\rho_x(0).
\end{equation}
Higher corrections can be systematically computed in a Sommerfeld-type expansion, by Taylor-expanding $\rho_x(\epsilon)$ around $\epsilon=0$, with only even derivatives of $\rho_x(\epsilon)$ contributing. Then, the leading behavior of $\rho_x(0)$ can then be obtained from Eq.~\eqref{eq:free_fermions_R_2_leading_decay}, and we get
\begin{equation}
    R^{(1)}(\rho_A;c_x) \simeq \pi \int_{\bar A} d^dy\, |C(x,y)|^2 .
\end{equation}
This expression corresponds to the leading scaling contribution to the R\'enyi-$1$ correlator for free-fermion systems.

We can now revisit previous examples. For a clean metal in $d$ spatial dimensions, we have, up to oscillatory factors, $|C(x,y)|^2 \sim \frac{1}{|x-y|^{d+1}}$, which yields the universal scaling $R^{(1)}(\rho_A;c_x)\sim \frac{1}{\ell}$. In contrast, for a Dirac semimetal, $|C(x,y)|^2\sim \frac{1}{|x-y|^{2d}}$, and we find $R^{(1)}(\rho_A;c_x)\sim \frac{1}{\ell^d}$, in accordance with the CFT result.

Lastly, we discuss a diffusive metal. In this case, the disorder-averaged two-point function $\overline{C(x,y)}$ decays exponentially beyond the elastic mean free path, but the second-moment has a power law decay \cite{Altland_Simons_2010}
\begin{equation}
    \overline{|C(x,y)|^2} \sim \frac{1}{|x-y|^{d+2}}
\end{equation}
in the diffusive regime. Then,
\begin{equation}
    \overline{R^{(1)}(\rho_A;c_x)} \sim \frac{1}{\ell^2}.
\end{equation}
Therefore, weak disorder changes the clean Fermi-surface scaling from $\ell^{-1}$ to $\ell^{-2}$ in the diffusive regime.

To numerically verify this, we consider the Anderson model on the square lattice,
\begin{equation}\label{eq:Anderson_model_2d}
    H = -\sum_{\langle ij\rangle} (c_i^\dagger c_j +  c_j^\dagger c_i )
    + \sum_i \epsilon_i c^\dagger_i c_i ,
\end{equation}
where the on-site potentials $\epsilon_i$ are independent random variables drawn uniformly from $[-W/2,W/2]$. In $2d$ dimensions, all single-particle eigenstates are localized in the thermodynamic limit, but at weak disorder, the localization length can be parametrically larger than the mean-free path. Consequently, on intermediate length scales $\ell_e \ll \ell \ll \xi_{\rm loc}$, local observables exhibit diffusive metallic behavior~\cite{Mirlin_Anderson_transitions}.  In Fig.~\ref{fig:Anderson_model}, we verify that the R\'enyi-$1$ correlator exhibits approximate diffusive metallic behavior, $\overline{R^{(1)}(\rho_A;c_x)} \sim 1/\ell^2$.

\begin{figure}
    \centering
    \includegraphics[width=0.99\linewidth]{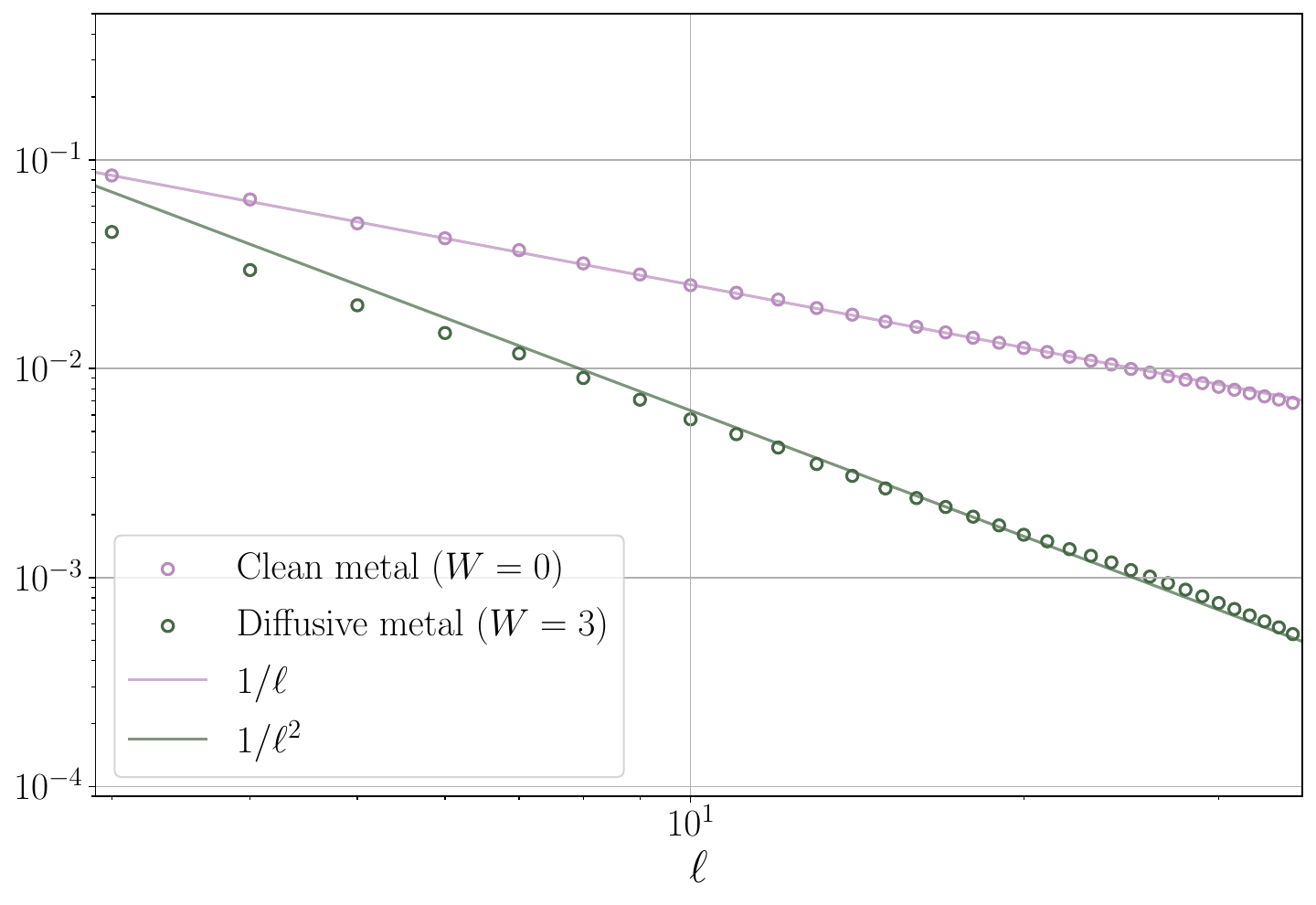}
    \caption{One-point R\'enyi-$1$ correlator, $R^{(1)}(\rho_A;c_x) = \Tr_A(\sqrt{\rho_A}\,c_x^\dagger\sqrt{\rho_A}\,c_x)$, computed for the ground state of the $2d$ Anderson model on the square lattice, Eq.~\eqref{eq:Anderson_model_2d}, at filling $\nu=0.2$. The region $A$ is a disk of radius $\ell$ centered at the insertion point $x$. For disorder bandwidth $W=3$, the R\'enyi-$1$ correlator exhibits the diffusive-metal scaling $\overline{R^{(1)}(\rho_A;c_x)}\sim \ell^{-2}$, in contrast to the clean limit ($W = 0$), which follows the Fermi-metal scaling $R^{(1)}(\rho_A;c_x)\sim \ell^{-1}$.}
    \label{fig:Anderson_model}
\end{figure}

\subsection{Random Gaussian states}
\label{sec:Gaussian}

In this section, we consider a different kind of free fermion system that provides a realization in close spirit of the mechanism discussed in Sec.~\ref{sec:localbutnotglobal}. Consider a random quadratic fermion Hamiltonian, given by
\begin{equation}\label{eq:Hamiltonian_fermions_GOE}
    H=\sum_{ij=1}^{N} J_{ij} c_i^\dagger c_j ,
\end{equation}
where $J=J^T$ is sampled from the Gaussian Orthogonal Ensemble (GOE), with zero mean and variance $\overline{J_{ij}^2}=J^2/N$ for $i\neq j$, up to the conventional factor of two for diagonal entries. This Hamiltonian is free, and its eigenstates are labeled by the occupation of the single-particle states. For simplicity, we will consider only the ground state in the presence of a uniform chemical potential. 

For finite filling $\nu = \frac{N_{\rm occ}}{N}$, Ref.~\cite{Magan2016RandomFreeFermions} showed that the correlation function of eigenstates coincides with their corresponding thermal expectation values, with temperature fixed by $\nu =\overline{ n_\beta}$, with $n_\beta$ the Fermi distribution. In this sense, the reduced density matrix is thermal and thus provides an explicit realization of the ETH mechanism described in Sec.~\ref{sec:localbutnotglobal}. Note that the Gaussian state does not satisfy the standard (stronger) sense of ETH as it has extensive amount of conserved quantities. But for our purpose, we only need $\rho_A$ to be thermal for $\ell_A/L\to0$, which is satisfied by the random Gaussian state.

More precisely, for an eigenstate of finite filling $\nu$, the one-point fidelity is
\begin{equation}
    F(\rho_A;c_x) = \sqrt{\nu(1-\nu)},
\end{equation}
since disorder averaging gives $\overline{C_{ij}} = \nu \delta_{ij}$ (assuming self-averaging). In contrast, the global two-point fidelity reduces to an ordinary two-point function,
\begin{equation}
    F(|\Psi\rangle\langle\Psi|;c_x^\dagger c_y) = \left|\langle\Psi| c_x^\dagger c_y |\Psi\rangle \right|, 
\end{equation}
which for $x\neq y$, vanishes after disorder averaging, since $\overline{|C_{xy}|^2}= \frac{\nu}{N}$. 

Therefore, GOE free-fermion eigenstates exhibit local but not global SW-SSB: for $1\ll |A|\ll N$ the reduced state is locally thermal and has finite one-point LFC, but the state is pure and does not have global fidelity order. This behavior is confirmed numerically in Fig.~\ref{fig:R1_fGOE}.

\begin{figure}[t]
    \centering
    \includegraphics[width=0.99\linewidth]{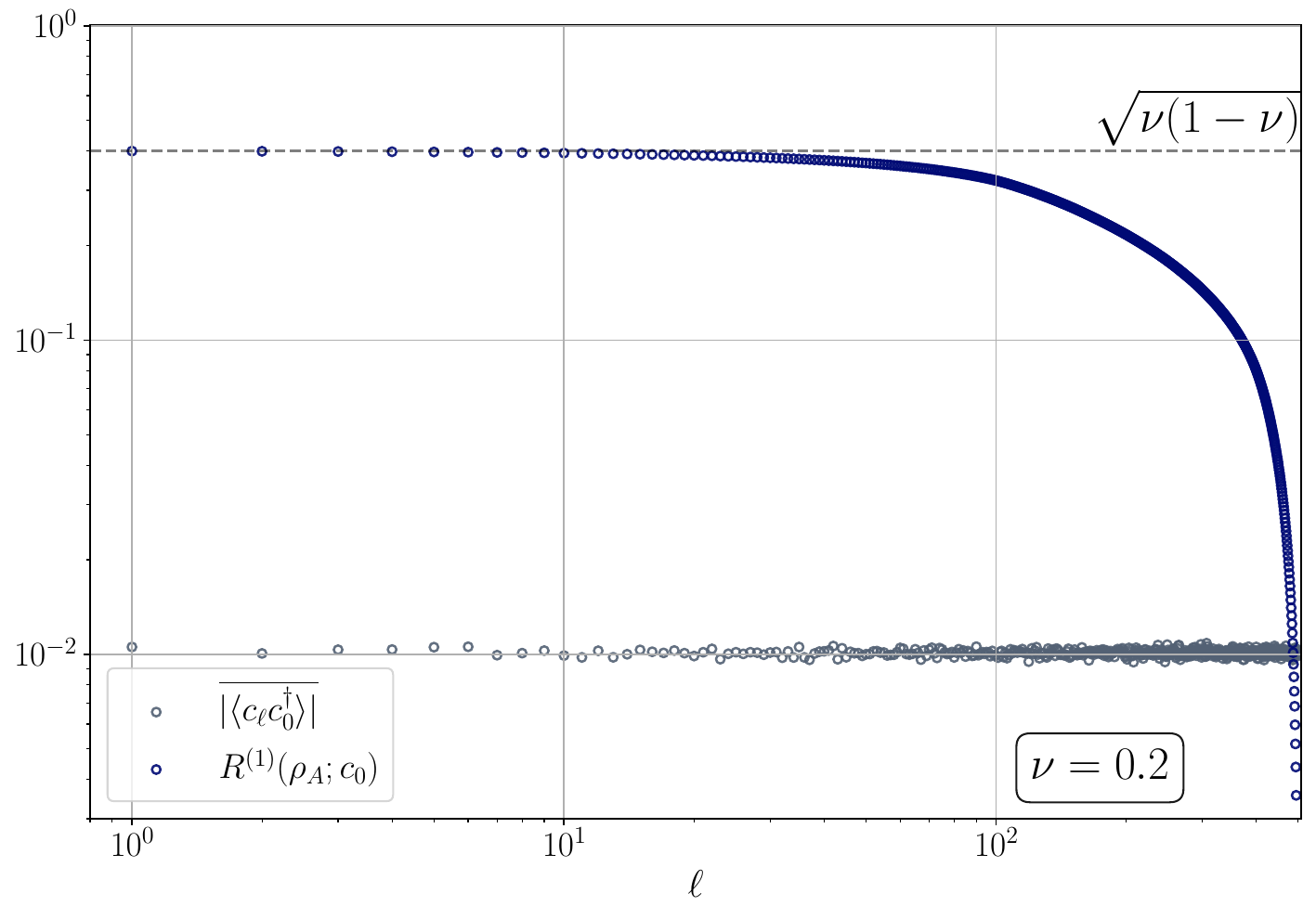}
    \caption{One-point R\'enyi-$1$ correlator, $R^{(1)}(\rho_A;c_0) = \Tr_A(\sqrt{\rho_A}\,c_0^\dagger\sqrt{\rho_A}\,c_0)$, computed numerically for the ground state of the random quadratic fermion Hamiltonian, Eq.~\eqref{eq:Hamiltonian_fermions_GOE}, at filling $\nu = 0.2$, using the interval $A = [-\ell, \ell]$. For $\ell \ll L = 1000$, the R\'enyi-$1$ correlator plateaus at the value $\sqrt{\nu(1-\nu)}$ (dashed line), in contrast to the vanishing disorder-averaged two-point function, $\overline{|\langle c_\ell c_0^\dagger\rangle|}$.}   
    \label{fig:R1_fGOE}
\end{figure}

\section{Non-Abelian Generalization}\label{sec:non_abelian_definition}
In this section, we extend the local formulation of SW-SSB to non-Abelian symmetry groups. We consider a finite (or compact) group $G$ with unitary representation $U_g$, and let $\{O_x^{(\alpha)}\}_{\alpha=1}^n$ be a local operator multiplet transforming in a non-trivial $n$-dimensional irreducible representation (irrep) $D$; that is,
\begin{equation}\label{eq:Non_abelian_irrep_D}
    U_g O_x^{(\alpha)} U_g^\dagger = \sum_{\beta=1}^n D_{\alpha\beta}(g) O_x^{(\beta)}.
\end{equation}

A natural generalization of the local fidelity correlator is $F(\rho_A; O_x^{(\alpha)})$. However, unlike in the Abelian case, a single component of an operator multiplet is basis-dependent, so we must ensure that the characterization of the mixed state is independent of this choice of basis. One way to achieve this is by maximizing over all normalized linear combinations within the irrep multiplet.

Therefore, we define the local fidelity correlator
\begin{equation}\label{def:nonabelian_one_point}
    \max_{|v| = 1} F(\rho_A;v\cdot O_x)
    \defeq \max_{|v| = 1}
     F\left(\rho_A,\sum_\alpha v_\alpha O_x^{(\alpha)}\right).
\end{equation}
where the optimization is with respect to complex unit-norm vectors $v$, $|v| = \sum_\alpha |v_\alpha|^2 = 1$. This correlator is manifestly basis-independent, since any change of basis $O_\alpha \to \sum_\beta U_{\alpha\beta} O_\beta$ can be absorbed into a corresponding rotation of the vector $v$ over which we optimize.

Then, we define local SW-SSB as follows.
\begin{definition}[Non-Abelian SW-SSB]\label{def:local_SW-SSB_non_abelian}
A state $\omega$ has \textbf{strong-to-weak symmetry breaking} (in the local sense) if, for some non-trivial irreducible representation $O_x^{(\alpha)}$, the following two conditions are met:
\begin{enumerate}
    \item The local one-point fidelity measure of SW-SSB is non-vanishing: 
    \begin{equation}
     F(\omega; O_x ) \eqdef \lim_{|A|\to \infty} \max_{|v| = 1} F(\rho_A;v \cdot O_x)>0.   
    \end{equation}
    \item The state does not exhibit ordinary SSB: $ \lim_{|x-y| \to \infty} \langle O_x^{(\alpha)} O_y^{(\beta)}\rangle_\omega = 0$ for all $\alpha, \beta$.
\end{enumerate}
\end{definition}

The following result provides an equivalent expression for the order parameter solely in terms of purifications and, in particular, does not involve any explicit optimization over $v$.
\begin{lemma}\label{lemma:max_v_equivalent_definition}
  Let $O_x^{(\alpha)}$ transform in an $n$-dimensional irreducible representation of some group $G$. Then, for any purification $|\Psi \rangle$ of the mixed state $\rho_A$, we have
    \begin{equation}
    \max_{|v| = 1}~F(\rho_A; v \cdot O) = \max_{U\in \mathcal{U}(\mathcal{H}_A')} \left(\sum_{\alpha = 1}^n \left| \langle \Psi | U \otimes O_x^{(\alpha)}|\Psi \rangle \right|^2\right)^{1/2}.
    \end{equation}
\end{lemma}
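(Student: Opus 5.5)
The plan is to use Uhlmann's theorem, exactly as in the proofs of Theorem~\ref{thm:pure_state_non_local_charge} and Theorem~\ref{theorem:average_one-point_two-point}, to write each fidelity as a maximum over unitaries on the purifying space $\mathcal{H}_A'$. Then the maximum over $v$ is exchanged with the maximum over $U$. For a fixed $v$, Uhlmann's theorem applied to $\rho_A$ and $(v\cdot O_x)\rho_A(v\cdot O_x)^\dagger$ gives
\begin{equation}
F(\rho_A; v\cdot O_x) = \max_{U\in\mathcal{U}(\mathcal{H}_A')} \bigl|\langle\Psi| U\otimes (v\cdot O_x)|\Psi\rangle\bigr| .
\end{equation}
This uses two facts. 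First, $(v\cdot O_x\otimes \one)|\Psi\rangle$ is an (unnormalized) purification of $(v\cdot O_x)\rho_A (v\cdot O_x)^\dagger$. Second, every purification of it on $\mathcal{H}_A\otimes\mathcal{H}_A'$ has the form $(v\cdot O_x\otimes U)|\Psi\rangle$. The same identity was already used for a single operator in Eq.~\eqref{eq:W_def}. One needs $\dim\mathcal{H}_A'\geq\dim\mathcal{H}_A$. If that fails, we first embed $\mathcal{H}_A'$ isometrically into a larger space. This is the one technical point to check: the maximum over unitaries on the smaller space must equal the maximum on the larger one, or at least the maximum over contractions, which is convex and attained at unitaries.

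Next we exchange the two maximizations:
\begin{equation}
\max_{|v|=1} F(\rho_A;v\cdot O_x) = \max_{U}\max_{|v|=1}\Bigl|\sum_\alpha v_\alpha\, a_\alpha(U)\Bigr|, \qquad a_\alpha(U)\defeq\langle\Psi|U\otimes O_x^{(\alpha)}|\Psi\rangle .
\end{equation}
For fixed $U$, the inner maximum is a linear functional of $v$ over the complex unit sphere. By Cauchy--Schwarz it equals $\bigl(\sum_\alpha|a_\alpha(U)|^2\bigr)^{1/2}$, attained at $v_\alpha\propto\overline{a_\alpha(U)}$. This gives the claimed formula.

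The remaining care concerns the conventions. If the paper's $F(\rho_A; v\cdot O)$ is read with $(v\cdot O)^\dagger$ acting on the right, the coefficients may appear conjugated. Since the maximum is over all complex unit $v$, this does not change the value. Also, the irreducibility of the representation is not actually used. It only guarantees basis independence, which the formula makes manifest: under a change of multiplet basis $O^{(\alpha)}\to\sum_\beta W_{\alpha\beta}O^{(\beta)}$ with $W$ unitary, the vector $a(U)$ rotates by $W$ and its norm is preserved.

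The main obstacle I expect is the Uhlmann step for unnormalized, non-unitary $v\cdot O_x$ together with the dimension requirement on $\mathcal{H}_A'$. I would handle it by noting that fidelity is homogeneous, $F(\rho,c\sigma)=\sqrt{c}\,F(\rho,\sigma)$, so normalization factors out. If $\mathcal{H}_A'$ is too small, I would pass to $\mathcal{H}_A'\otimes\mathbb{C}^k$ and argue that the maximum over partial isometries reduces back to unitaries on the original space.
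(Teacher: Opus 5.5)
Your proposal is correct and follows the paper's proof: apply Uhlmann's theorem for each fixed $v$, exchange the maximizations over $v$ and $U$, and evaluate the inner maximum by Cauchy--Schwarz at $v_\alpha \propto \overline{a_\alpha(U)}$. Your extra points (keeping the absolute value, the ancilla-dimension issue handled via contractions, and the remark that irreducibility is never used) are all sound and simply make explicit what the paper takes for granted.
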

\begin{proof}
    By Uhlmann's theorem, we have
     \begin{equation}
     \begin{aligned}
         \max_{|v| = 1}~F(\rho; v \cdot O) &= \max_{|v| = 1}\max_{U\in \mathcal{U}(\mathcal{H}_A')}  \langle \Psi | U \otimes \sum_\alpha v_\alpha O_x^{(\alpha)}|\Psi \rangle \\
         &= \max_{U\in \mathcal{U}(\mathcal{H}_A')} \max_{|v| = 1} \sum_\alpha v_\alpha  \langle \Psi | U \otimes O_x^{(\alpha)}|\Psi \rangle \\
         &= \max_{U\in \mathcal{U}(\mathcal{H}_A')} \left( \sum_\alpha    \left| \langle \Psi | U \otimes O_x^{(\alpha)}|\Psi \rangle \right|^2 \right)^{1/2}
     \end{aligned}
     \end{equation}
     where we took the maximum over $v$ as follows: $\max_v \sum_\alpha v_\alpha u_{\alpha} = \sum_\alpha \frac{u_\alpha^*}{\sqrt{\sum_\alpha |u_\alpha|^2}} u_\alpha = \sqrt{\sum_\alpha |u_\alpha|^2}$, with $u_\alpha = \langle \Psi | U \otimes O_x^{(\alpha)}|\Psi \rangle $. 
\end{proof}

Next, we present several equivalent characterizations of SW-SSB. We show that although the individual one-point correlator $F(\rho; O_x^{(\alpha)})$ depends on the choice of basis, the statement that at least one such correlator is nonzero does not. Therefore, if the goal is simply to diagnose SW-SSB, the existence of a single nonvanishing one-point correlator already provides a sufficient condition. Then, we show that the definition of SW-SSB based on Eq.~\eqref{def:nonabelian_one_point} is equivalent to the diagnostic based on the kinematically natural completely positive map $\tau_O \defeq \frac{1}{n} \sum_{\alpha=1}^n O_x^{(\alpha)} [\cdot] (O_x^{(\alpha)})^{\dagger} $.

\begin{theorem}[Equivalent characterizations]\label{thm:nonabelian_equiv}
Let $O_x^{(\alpha)}$ transform in an $n$-dimensional irreducible representation of a group $G$. Then, the following are equivalent:
\begin{enumerate}
    \item The state has a non-vanishing local SW-SSB measure for the multiplet,  
    \begin{equation}
            \max_{|v| = 1}F(\rho;v \cdot O_x)>0,
    \end{equation}
    \item In any basis $O_x^{(\alpha)}$, there exists at least one component $\alpha$ such that
    \begin{equation}
        \lim_{|A|\to\infty} F(\rho_A;O_x^{(\alpha)})>0.
    \end{equation}
    \item The fidelity
    \begin{equation}
        \lim_{|A| \to \infty} F(\rho_A, \tau_O[\rho_A]) > 0, 
    \end{equation}
    where $\tau_O = \frac{1}{n} \sum_{\alpha=1}^n O_x^{(\alpha)} [\cdot] (O_x^{(\alpha)})^{\dagger} $.
\end{enumerate}
\end{theorem}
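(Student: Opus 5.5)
The plan is to prove finite-region inequalities comparing the three quantities with constants independent of $A$, and then pass to the covering limit. All limits exist because each quantity is monotone non-increasing in $A$ by the data processing inequality; for item 3 this uses that $\tau_O$ acts on $A$ only, so partial traces commute with it.

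\emph{(1)$\Leftrightarrow$(2).} The direction (2)$\Rightarrow$(1) is immediate: take $v=e_\alpha$, so $\max_{|v|=1}F(\rho_A;v\cdot O_x)\ge F(\rho_A;O_x^{(\alpha)})$ for every $A$. For the converse, fix a purification $\ket{\Psi}$ and write $u_\alpha(U)=\langle\Psi|U\otimes O_x^{(\alpha)}|\Psi\rangle$. By Lemma~\ref{lemma:max_v_equivalent_definition},
\begin{equation}
\max_{|v|=1}F(\rho_A;v\cdot O_x)=\max_U \Bigl(\sum_\alpha|u_\alpha(U)|^2\Bigr)^{1/2}\le \sqrt{n}\,\max_\alpha\max_U|u_\alpha(U)|=\sqrt n\,\max_\alpha F(\rho_A;O_x^{(\alpha)}),
\end{equation}
where the last equality is Uhlmann's theorem. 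Since $\max_\alpha$ runs over a finite set, $\lim_A\max_\alpha F=\max_\alpha\lim_A F$. Hence a positive limit of the left-hand side forces some component to have a positive limit, in any basis. Basis independence of (1) was already noted, so (2) holds in every basis.

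\emph{(1)$\Leftrightarrow$(3).} For the upper bound on item 3, use joint concavity of fidelity in its second argument, in the form $F(\rho,\sum_\alpha\sigma_\alpha)\le\sum_\alpha F(\rho,\sigma_\alpha)$ for unnormalized positive $\sigma_\alpha$. This follows from $F(\rho,\sigma)=\max_U|\mathrm{Tr}(\sqrt\rho\,U\sqrt\sigma)|$ together with concavity of $\sqrt{\cdot}$ and the triangle inequality. It gives
\begin{equation}
F(\rho_A,\tau_O[\rho_A])\le \frac{1}{\sqrt n}\cdot\sqrt n\,\max_\alpha F(\rho_A;O_x^{(\alpha)})\cdot(\text{const}),
\end{equation}
using $F(\rho,\sigma/n)=F(\rho,\sigma)/\sqrt n$. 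Therefore (3) implies (2).

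For the lower bound, use Uhlmann's theorem for the channel output. Let $\ket{\Psi}$ purify $\rho_A$ on $AA'$, and introduce an extra register $R$ of dimension $n$. The vector $\ket{\Phi}=\frac{1}{\sqrt n}\sum_\alpha O_x^{(\alpha)}\ket{\Psi}\otimes\ket{\alpha}_R$ purifies $\tau_O[\rho_A]$. Purify $\rho_A$ as $\ket{\Psi}\otimes\ket{w}_R$ with an arbitrary unit $w$, and apply a unitary $U$ on $A'$. This gives
\begin{equation}
F(\rho_A,\tau_O[\rho_A])\ge \frac{1}{\sqrt n}\Bigl|\sum_\alpha \bar w_\alpha\, u_\alpha(U)\Bigr|.
\end{equation}
Maximizing over $w$ and $U$ yields $F(\rho_A,\tau_O[\rho_A])\ge n^{-1/2}\max_{|v|=1}F(\rho_A;v\cdot O_x)$ by Lemma~\ref{lemma:max_v_equivalent_definition}. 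Therefore (1) implies (3).

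\emph{Expected obstacle.} The main care needed is the subadditivity step for unnormalized arguments and verifying that the constants are uniform in $A$. The second point is automatic, since the constants depend only on $n$. Irreducibility is not actually needed beyond making the multiplet finite-dimensional.
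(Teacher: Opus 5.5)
Your proof is correct. For (1)$\Leftrightarrow$(2) it is the paper's argument: Lemma~\ref{lemma:max_v_equivalent_definition}, the bound $\sum_\alpha|u_\alpha|^2\le n\max_\alpha|u_\alpha|^2$, and commuting a finite maximum with the covering limit. You connect item~3 differently, though.

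The paper proves (2)$\Leftrightarrow$(3) from the single expression $F(\rho_A,\tau_O[\rho_A])=\Tr\sqrt{\tfrac1n\sum_\beta X_\beta X_\beta^\dagger}$, with $X_\beta=\sqrt{\rho_A}\,O_x^{(\beta)}\sqrt{\rho_A}$. Keeping one term and using monotonicity of the square root gives the lower bound $n^{-1/2}\max_\alpha F(\rho_A;O_x^{(\alpha)})$. Subadditivity gives the upper bound $\sqrt n\max_\alpha F(\rho_A;O_x^{(\alpha)})$.

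Your upper bound is the same inequality, phrased as subadditivity of the fidelity in its second argument. Your lower bound is genuinely different. You purify $\tau_O[\rho_A]$ with an $n$-dimensional register and apply Uhlmann, which gives directly $F(\rho_A,\tau_O[\rho_A])\ge n^{-1/2}\max_{|v|=1}F(\rho_A;v\cdot O_x)$. This bound is basis-free and slightly stronger as stated; the paper's version recovers it only after noting that $\tau_O$ is invariant under unitary rotations of the multiplet. It lets you close the cycle (3)$\Rightarrow$(2)$\Rightarrow$(1)$\Rightarrow$(3) without fixing a basis. You also make explicit why the limits in items~1 and~3 exist, which the paper leaves implicit.

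Three small repairs:
\begin{enumerate}
\item The inequality $F(\rho,\sum_\alpha\sigma_\alpha)\le\sum_\alpha F(\rho,\sigma_\alpha)$ is subadditivity, not concavity of the fidelity. Concavity plus degree-$\tfrac12$ homogeneity in $\sigma$ bounds it from the other side. The cleanest justification is $\Tr\sqrt{\sum_\beta X_\beta X_\beta^\dagger}=\norm{[X_1,\dots,X_n]}_1\le\sum_\beta\norm{X_\beta}_1$, i.e.\ the triangle inequality for the trace norm of the block row.
\item This trace-level statement is also what the paper's own step actually needs. The operator inequality $\sqrt{P+Q}\le\sqrt P+\sqrt Q$ that the paper invokes fails in general; take $P=\op{0}$, $Q=\op{+}$.
\item Your displayed upper bound should read $F(\rho_A,\tau_O[\rho_A])=n^{-1/2}F\bigl(\rho_A,\sum_\alpha O_x^{(\alpha)}\rho_A O_x^{(\alpha)\dagger}\bigr)\le n^{-1/2}\sum_\alpha F(\rho_A;O_x^{(\alpha)})\le\sqrt n\max_\alpha F(\rho_A;O_x^{(\alpha)})$, with no unspecified constant.
\end{enumerate}
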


\begin{proof}
    First, we show the equivalence between 1 and 2. 
    Assume that $\lim_{|A| \to \infty} ~F(\rho; v \cdot O_x)  = c> 0 $ for some unit-norm $v$. Then, by Lemma~\ref{lemma:max_v_equivalent_definition}, for all $A$, there exists $U\in \mathcal{U}(\mathcal{H}_A')$ such that $\sum_{\alpha = 1}^n \left| \langle \Psi_A | U \otimes O_x^{(\alpha)}|\Psi_A \rangle \right|^2 \geq  c^2$, for any basis $O_x^{(\alpha)}$. This implies that  $n \max_\alpha F(\rho_A; O_x^{(\alpha)})^2 \geq   \sum_{\alpha = 1}^n F(\rho_A; O_x^{(\alpha)})^2  \geq  c^2$. We now take the covering limit, which commutes with the optimization over $\alpha$ because $n$ is finite. Therefore, for any basis, there is at least one element such that $ \lim_{|A| \to \infty }F(\rho_A;  O_x^{(\alpha)})  \geq \frac{c}{\sqrt{n}} > 0$.
    The converse direction is immediate by noting that  $\max_v F(\rho_A; v \cdot O_x) \geq F(\rho_A;  O_x^{(\alpha)}) $ for all bases and all $\alpha$.  

    Now, we show the equivalence between 2 and 3. To do that, we will show that there exists $\alpha$ such that 
    \begin{equation}
    \frac{1}{\sqrt{n}} F(\rho; O_x^{(\alpha)})\leq F(\rho,\tau_O[\rho]) \leq  \sqrt{n} ~F(\rho; O_x^{(\alpha)})    
    \end{equation}
    Write 
    \begin{equation}
         F(\rho_A, \tau_O[\rho_A] ) = \Tr\sqrt{\frac{1}{n}\sum_\beta X_\beta X_\beta^\dagger }
    \end{equation}
    where $X_\beta = \sqrt{\rho_A} O_x^{(\beta)} \sqrt{\rho_A}$ is a positive operator. Thus, we can bound
    \begin{equation}
      \frac{1}{\sqrt{n}} \Tr\sqrt{X_\alpha X_\alpha^\dagger } \leq \Tr\sqrt{\frac{1}{n}\sum_\beta X_\beta X_\beta^\dagger }
    \end{equation}
    for all $\alpha$. Additionally, for positive operators, we have $\sqrt{P+Q} \leq \sqrt{P} + \sqrt{Q}$, and we can bound
     \begin{equation}
     \begin{aligned}
    \Tr\sqrt{\frac{1}{n}\sum_\beta X_\beta X_\beta^\dagger } &\leq \frac{1}{\sqrt n}\sum_\beta  \Tr\sqrt{ X_\beta X_\beta^\dagger }\\
    &\leq \sqrt n \max_\alpha  \Tr\sqrt{ X_\alpha X_\alpha^\dagger }
    \end{aligned}
    \end{equation}
    Therefore, we have 
    \begin{equation}
    \frac{1}{\sqrt{n}} \max_\alpha F(\rho_A; O_x^{(\alpha)})\leq F(\rho_A,\tau_O[\rho_A]) \leq  \sqrt{n} \max_\alpha F(\rho_A; O_x^{(\alpha)}) .
    \end{equation}
    Since there are finitely many $\alpha$, we can commute the covering limit with the optimization over $\alpha$, yielding the desired result.
\end{proof}
As an immediate corollary, we show that the non-Abelian version of local SW-SSB enjoys a stability theorem.
\begin{corollary}[Stability theorem]
If a mixed state $\rho$ has SW-SSB in the one-point fidelity sense and $\E_{\rm SFD}$ is a strongly symmetric finite-depth local quantum channel, then $\E_{\rm SFD}[\rho]$ has SW-SSB in the one-point fidelity sense. 
\end{corollary}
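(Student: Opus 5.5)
The plan is to reduce the non-Abelian statement to the Abelian-style stability argument of Theorem~\ref{thm:stability}, using characterization 2 of Theorem~\ref{thm:nonabelian_equiv}.

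\textbf{Step 1: pick a good component.} Suppose $\rho$ has local SW-SSB for a multiplet $O_x^{(\alpha)}$ in a nontrivial irrep $D$. By Theorem~\ref{thm:nonabelian_equiv}, in a fixed basis there is a component $\alpha$ with $\lim_{|A|\to\infty}F(\rho_A;O_x^{(\alpha)})>0$. By monotonicity, this gives $F(\rho_{A^{+r}};O_x^{(\alpha)})\ge c>0$ for every region $A$.

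\textbf{Step 2: push through the channel.} Run the proof of Theorem~\ref{thm:stability} verbatim for this component. Use the Uhlmann purifications on $A^{+r}$ and conjugate by the Stinespring unitary $U$ to obtain
\begin{equation}
\big|\langle\phi|\tilde O_x|\phi'\rangle\big|\ge c ,
\end{equation}
where $\tilde O_x=UO_x^{(\alpha)}U^\dagger$ and $\phi,\phi'$ purify $\E_{A^{+r}}[\rho_{A^{+r}}]$. Then expand $\tilde O_x$ in a finite operator basis supported in the light cone of $x$.

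The new ingredient is handling the symmetry. Because $U$ commutes with $U_g\otimes\one_a$, the map $O\mapsto UOU^\dagger$ is $G$-equivariant. So $\{UO_x^{(\beta)}U^\dagger\}_\beta$ transforms in $D$, and we can decompose $\tilde O_x=\sum_{k}\sum_\gamma P^{(k)}_{\gamma}\otimes Q^{(k)}_\gamma$. Here each $\{P^{(k)}_\gamma\}_\gamma$ is a physical multiplet in some irrep $D_k$, and the ancilla factors make up the rest.

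The piece to worry about is a term whose physical part is symmetric, with its charge carried entirely by the ancilla. I expect this to be the main obstacle. To rule it out, I would argue as in the Abelian proof: the ancilla starts in $\ket{0}_a$, which is invariant, and only the physical factor sees $U_g$. To make this precise, I would use the equivariance more carefully. Apply $\frac{1}{|G|}\sum_g \overline{D_{\alpha\alpha}(g)}\,U_g(\cdot)U_g^\dagger$ (rescaled by $n$), acting on the physical factor only, to project onto the $D$-isotypic component in the $\alpha$ slot. Since $\tilde O_x$ already lies in that component, we can replace each $P^{(k)}$ by its $D$-projection. The result is $\tilde O_x=\sum_{k}P^{(k)}_x\otimes Q^{(k)}$, where each $P^{(k)}_x$ is a component of some local multiplet in the irrep $D$ and each $Q^{(k)}$ is bounded.

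\textbf{Step 3: extract and conclude.} Since the sum is finite, with uniformly bounded terms fixed by the circuit depth, some term satisfies $|\langle\phi|P^{(k)}_x\otimes Q^{(k)}|\phi'\rangle|\ge c'>0$. After normalizing $Q^{(k)}$ and absorbing it into the purifying space, Uhlmann's theorem (the inequality direction, which tolerates a contraction) gives
\begin{equation}
F\big(\E_{A^{+r}}[\rho_{A^{+r}}];P^{(k)}_x\big)\ge c''>0 .
\end{equation}
Here $P^{(k)}_x$ is a component of a multiplet $\{P^{(k),\beta}_x\}$ transforming in $D$. Tracing out $A^{+r}\setminus A$ via the data processing inequality yields the same bound for $\Tr_{\bar A}\E_{\rm SFD}[\rho]$.

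One subtlety remains: the index $k$ could depend on $A$. Since there are finitely many terms, some $k$ recurs along a subsequence, and monotonicity then gives the limit for every covering. Theorem~\ref{thm:nonabelian_equiv} (direction $2\Rightarrow1$) then gives a nonvanishing multiplet measure. Finally, the absence of ordinary SSB is preserved, because finite-depth channels map decaying linear correlators to decaying ones (Heisenberg picture, light-cone locality).
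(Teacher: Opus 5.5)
Your proposal is correct and follows the paper's route: you select a good component via characterization~2 of Theorem~\ref{thm:nonabelian_equiv}, rerun the proof of Theorem~\ref{thm:stability}, and use the fact that conjugation by the symmetric Stinespring unitary keeps the operator in the irrep $D$. Your isotypic projection on the physical factor, the contraction form of Uhlmann's theorem, and the subsequence argument for the $A$-dependence of the selected term spell out what the paper's two-line proof leaves implicit. Your final remark on preserving the absence of ordinary SSB goes beyond what the paper proves. As stated, it requires decaying correlators in $\rho$ for all local $D$-charged operators, not only for the original multiplet, because $\E^\dagger$ maps $P_x^{(k)}$ to a different local operator.
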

\begin{proof}
Let $O_x^{(\alpha)}$ be a charged operator in some basis such that $F(\rho; O_x^{(\alpha)})  > 0$. Then, the proof of Theorem \ref{thm:stability} proceeds exactly as before. In particular, we can consider $\tilde O_x^{(\alpha)} =  U O_x^{(\alpha)}U^{-1} $, which transforms in the same irrep as $O_x^{(\alpha)}$, since $[U_g, U] = 0$. Then, we find $\tilde O_{x'}^{(\alpha)}$ such that $F(\Tr_{\bar A}\E_{\rm SFD}[\rho]; \tilde O_{x'}^{(\alpha)}) = O(1)$. 
\end{proof}

\subsection{Two-point fidelity}
In this section, we compare the one-point and two-point formulations in the non-Abelian setting. 

Let us first note that a covariant formulation is given by the singlet
\begin{equation}
    O_{xy} \defeq \frac{1}{n}\sum_{\alpha=1}^n O_x^{(\alpha)} {O_y^{(\alpha)}}^\dagger .
\end{equation}
Not only is $O_{xy}$ basis-independent, but if $\rho$ is strongly symmetric, we have
\begin{equation}
    F(\rho; O_x^{(\alpha)} O_y^{(\beta)\dagger}) =  \delta_{\alpha\beta} F (\rho; O_{xy}),  
\end{equation}
as can be seen from the orthogonality relations for the irrep $D$, defined in Eq.~\eqref{eq:Non_abelian_irrep_D}. Therefore, the two-point fidelity measure of (global) SW-SSB is simply given by $F(\rho; O_{xy})$. 

First, we show that global SW-SSB implies local SW-SSB.
\begin{theorem}\label{thm:non_abelian_first_inequality_of_measures}
For any irreducible representation $\{O_x^{(\alpha)}\}_\alpha$, we have
\begin{equation}
\max_\alpha \norm{O_y^{(\alpha)}}_\infty ~ \max_{|v|=1} F(\rho_A; v \cdot O) \ge F(\rho;O_{xy}).
\end{equation}
\end{theorem}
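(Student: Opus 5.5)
The plan is to mirror the proof of Theorem~\ref{thm:inequality_of_measures}, combining data processing with Uhlmann's theorem so that the multiplet sum can be absorbed into a single optimized vector $v$. Here $x\in A$ and $y\notin A$, and $\rho$ denotes the global state. First, by Uhlmann's theorem, for a purification $\ket{\Psi}$ of $\rho$ on $\Hilb\otimes\Hilb'$ there is a unitary $W$ on $\Hilb'$ with
\begin{equation}
F(\rho;O_{xy}) = \Bigl|\frac{1}{n}\sum_\alpha \langle\Psi| O_x^{(\alpha)} {O_y^{(\alpha)}}^\dagger\otimes W|\Psi\rangle\Bigr| .
\end{equation}
Since $\ket{\Psi}$ is also a purification of $\rho_A$, with purifying system $\bar A\,\Hilb'$, each term involves $O_x^{(\alpha)}$ acting on $A$ and the operator $Y_\alpha \defeq {O_y^{(\alpha)}}^\dagger\otimes W$ acting on the purifying system, with $\norm{Y_\alpha}_\infty\le \max_\alpha\norm{O_y^{(\alpha)}}_\infty$.

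The key step is a bound for a non-unitary operator on the purifying system. For a fixed $\alpha$, I would write $Y_\alpha$ as a combination of unitaries, or better, use the variational form of Uhlmann's theorem, $F(\rho_A,\sigma)=\max\{|\langle\Psi|\Phi\rangle|\}$, together with the fact that, for an operator $Y$ on the purifying system with $\norm{Y}_\infty\le c$,
\begin{equation}
|\langle\Psi|O\otimes Y|\Psi\rangle| \le c\,F(\rho_A;O).
\end{equation}
This follows because $\langle\Psi|O\otimes Y|\Psi\rangle = \Tr[\sqrt{\rho_A}\,O\sqrt{\rho_A}\,\tilde Y]$ for some $\tilde Y$ with $\norm{\tilde Y}_\infty\le c$, by the standard polar decomposition argument. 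Its right-hand side is bounded by $c\,\norm{\sqrt{\rho_A}O\sqrt{\rho_A}}_1 = c\,F(\rho_A;O)$. This is exactly the content used in Theorem~\ref{thm:inequality_of_measures} and the proof of Theorem~\ref{thm:pure_state_non_local_charge}.

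To handle the sum over $\alpha$ without losing a factor of $n$, I would apply the bound to a single effective operator rather than term by term. Write $u_\alpha \defeq \langle\Psi|O_x^{(\alpha)}\otimes Y_\alpha|\Psi\rangle$. The cleanest route I expect uses Lemma~\ref{lemma:max_v_equivalent_definition}, which gives
\begin{equation}
\max_{|v|=1}F(\rho_A;v\cdot O) \ge \Bigl(\sum_\alpha |\langle\Psi|U\otimes O_x^{(\alpha)}|\Psi\rangle|^2\Bigr)^{1/2}
\end{equation}
for every unitary $U$. I would then bound $\frac1n\bigl|\sum_\alpha u_\alpha\bigr|$ by Cauchy--Schwarz in the doubled space. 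Regard $\sum_\alpha O_x^{(\alpha)}\otimes Y_\alpha$ as $\sum_\alpha$ of $(O_x^{(\alpha)}\otimes\one)(\one\otimes Y_\alpha)$, and introduce an auxiliary $n$-dimensional register carrying $\alpha$. With $\ket{\Psi_\alpha}=(\one\otimes Y_\alpha^\dagger)\ket\Psi/\sqrt n$, one has $\sum_\alpha\norm{\Psi_\alpha}^2\le \max_\alpha\norm{O_y^{(\alpha)}}^2_\infty$. This reduces the claim to
\begin{equation}
\sum_\alpha\langle\Psi_\alpha|O_x^{(\alpha)}|\Psi\rangle/\sqrt n \le \max_\alpha\norm{O_y^{(\alpha)}}_\infty\,\max_{U}\Bigl(\sum_\alpha|\langle\Psi|U\otimes O_x^{(\alpha)}|\Psi\rangle|^2\Bigr)^{1/2}.
\end{equation}

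The main obstacle is precisely this last step: replacing the $\alpha$-dependent contractions $Y_\alpha$ by a single unitary $U$. A fallback, which is weaker but still sufficient for the qualitative implication ``global $\Rightarrow$ local'', is the termwise bound
\begin{equation}
F(\rho;O_{xy})\le \frac1n\sum_\alpha \norm{O_y^{(\alpha)}}_\infty F(\rho_A;O_x^{(\alpha)}) \le \max_\alpha\norm{O_y^{(\alpha)}}_\infty\max_{|v|=1}F(\rho_A;v\cdot O).
\end{equation}
It follows from the contraction bound applied to each $\alpha$, the triangle inequality, and $F(\rho_A;O_x^{(\alpha)})\le\max_v F(\rho_A;v\cdot O)$. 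I now notice that this termwise argument already yields the stated inequality exactly, since the average of $n$ terms each bounded by the maximum is bounded by the maximum. So I would present the termwise argument as the proof, and reserve the Cauchy--Schwarz refinement only as a remark.
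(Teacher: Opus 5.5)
Your termwise argument is correct and is essentially the paper's proof. The paper also splits $O_{xy}$ by the triangle inequality, though it applies it to $\norm{\sqrt{\rho}\,\cdot\,\sqrt{\rho}}_1$ rather than to the Uhlmann overlap with a common $W$; it then invokes the global-to-local bound of Theorem~\ref{thm:inequality_of_measures} for each $\alpha$ (which your purification/H\"older contraction bound re-derives), and finally bounds each $F(\rho_A;O_x^{(\alpha)})$ by $\max_{|v|=1}F(\rho_A;v\cdot O)$. The Cauchy--Schwarz refinement can be dropped, since the $1/n$ normalization in $O_{xy}$ already prevents the termwise bound from losing a factor of $n$.
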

\begin{proof}
Note that
\begin{equation}
\begin{aligned}
     F(\rho;O_{xy}) &= \norm{\sqrt{\rho} \frac{1}{n}\sum_{\alpha} O^{(\alpha)}_x {O_y^{(\alpha)}}^\dagger \sqrt{\rho}}_1 \\
     &\leq \frac{1}{n}\sum_{\alpha} \norm{\sqrt{\rho}  O^{(\alpha)}_x {O_y^{(\alpha)}}^\dagger \sqrt{\rho}}_1   \\
     &= \frac{1}{n}\sum_{\alpha} F(\rho; O^{(\alpha)}_x {O_y^{(\alpha)}}^\dagger )
\end{aligned}
\end{equation}
Then, we apply Eq.~\eqref{eq:first_inequality}, and we find
\begin{equation}
\begin{aligned}
     F(\rho;O_{xy}) &\leq \max_\alpha \norm{{O_y^{(\alpha)}}^\dagger}_\infty ~\frac{1}{n}\sum_{\alpha}  F(\rho; O^{(\alpha)}_x )\\
\end{aligned}
\end{equation}
Finally, we bound $F(\rho; O^{(\alpha)}_x) \leq \max_v  F(\rho; \sum_\alpha v_\alpha O^{(\alpha)}_x)$ to arrive at the desired result.
\end{proof}

As discussed in Sec.~\ref{sec:localbutnotglobal}, the converse does not hold in general: local does not imply global SW-SSB. However, as we show next, the local characterization in terms of local one- and two-point fidelities is equivalent; that is the non-Abelian analogue of Theorem \ref{theorem:average_one-point_two-point} holds. 
\begin{theorem}\label{thm:non_abelian_second_inequality_of_measures}
Let $A$ be a finite region, and let $\{O_x^{(\alpha)}\}_{x \in \Omega }$ be a collection of charged multiplets transforming in the same $n$-dimensional irreducible representation of some group $G$. Then, for any unit-norm vector $v$, we have
\begin{equation}
    \sum_{xy \in \Omega} F(\rho; O_{xy}) \geq \frac{1}{n}\left( \sum_{x\in \Omega}  F(\rho; v\cdot O_{x})\right)^2
\end{equation}
\end{theorem}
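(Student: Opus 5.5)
The plan is to run the Uhlmann/Gram-vector argument from the proof of Theorem~\ref{theorem:average_one-point_two-point}, now carrying the multiplet index $\alpha$ along. I read $F(\rho;\cdot)$ as evaluated on $\rho_A$ for the finite region $A$. Fix a purification $\ket{\Psi}\in\Hilb_A\otimes\Hilb_A'$ of $\rho_A$ with $\dim\Hilb_A'\geq\dim\Hilb_A$.

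For every operator $X$ supported on $A$, Uhlmann's theorem gives
\begin{equation}
F(\rho_A;X)=\max_{U\in\mathcal U(\Hilb_A')}|\langle\Psi|X\otimes U|\Psi\rangle|,
\end{equation}
and the phase of the overlap can be absorbed into $U$. So for each $x\in\Omega$ there is a unitary $U_x$ with
\begin{equation}
F(\rho_A;v\cdot O_x)=\langle\Psi|(v\cdot O_x)\otimes U_x|\Psi\rangle\geq 0 .
\end{equation}
Introduce the unnormalized vectors $\ket{\Phi_x^{(\alpha)}}\defeq (O_x^{(\alpha)})^\dagger\otimes U_x^\dagger\ket{\Psi}$. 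Then
\begin{equation}
F(\rho_A;v\cdot O_x)=\sum_\alpha v_\alpha\langle\Phi_x^{(\alpha)}|\Psi\rangle .
\end{equation}

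For the two-point side, apply Uhlmann to the singlet $O_{xy}$ and choose the trial unitary $U_xU_y^\dagger$. This gives
\begin{equation}
F(\rho_A;O_{xy})\geq\Bigl|\tfrac1n\sum_\alpha\langle\Psi|O_x^{(\alpha)}O_y^{(\alpha)\dagger}\otimes U_xU_y^\dagger|\Psi\rangle\Bigr|
=\Bigl|\tfrac1n\sum_\alpha\langle\Phi_x^{(\alpha)}|\Phi_y^{(\alpha)}\rangle\Bigr| .
\end{equation}
This is at least the real part of the same quantity. Summing over $x,y\in\Omega$, the imaginary parts cancel pairwise. Defining $\ket{\Phi^{(\alpha)}}\defeq\sum_{x\in\Omega}\ket{\Phi_x^{(\alpha)}}$, we obtain
\begin{equation}
\sum_{x,y}F(\rho_A;O_{xy})\geq\tfrac1n\sum_\alpha\langle\Phi^{(\alpha)}|\Phi^{(\alpha)}\rangle .
\end{equation}
Using the $|\cdot|$ here, rather than choosing phases per pair, is what makes the pairwise argument work. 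The sign fixing of Theorem~\ref{theorem:average_one-point_two-point} is no longer available, since each unitary can only fix one phase, not $n$ of them.

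Next, project onto $\ket{\Psi}$: $\langle\Phi^{(\alpha)}|\Phi^{(\alpha)}\rangle\geq|\langle\Phi^{(\alpha)}|\Psi\rangle|^2$. Since $|v|=1$, Cauchy--Schwarz over $\alpha$ gives
\begin{equation}
\sum_\alpha|\langle\Phi^{(\alpha)}|\Psi\rangle|^2\geq\Bigl|\sum_\alpha v_\alpha\langle\Phi^{(\alpha)}|\Psi\rangle\Bigr|^2=\Bigl(\sum_{x}F(\rho_A;v\cdot O_x)\Bigr)^2 .
\end{equation}
Combining with the factor $\tfrac1n$ yields the claimed inequality.

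The main point of care is the phase and conjugation bookkeeping. The vectors $\Phi_x^{(\alpha)}$ must be defined so that the one-point overlap is a $v$-weighted sum over $\alpha$ of $\langle\Phi_x^{(\alpha)}|\Psi\rangle$. If the conjugation comes out reversed, I will replace $v$ by $\bar v$, which still has unit norm. The other check is that the real-part step remains a lower bound after the $\alpha$-sum. Both steps are routine once the vectors are set up this way.
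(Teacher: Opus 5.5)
Your proposal is correct and follows the paper's proof essentially step for step: Uhlmann vectors $\ket{\Phi_x^{(\alpha)}}$ built from the one-point optimizers $U_x$, the trial unitary $U_xU_y^\dagger$ for the singlet $O_{xy}$, summation over $x,y$ into $\frac{1}{n}\sum_\alpha\langle\Phi^{(\alpha)}|\Phi^{(\alpha)}\rangle$, and then projection onto $\ket{\Psi}$ followed by Cauchy--Schwarz over $\alpha$ with $|v|=1$. Your explicit choice $\ket{\Phi_x^{(\alpha)}}=(O_x^{(\alpha)})^\dagger\otimes U_x^\dagger\ket{\Psi}$ and the real-part step are slightly more careful than the paper's bookkeeping, which leaves $\ket{\Phi_x^{(\alpha)}}$ implicit and drops the absolute value without comment.
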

\begin{proof}   
    Let $\ket{\Psi} \in \Hilb_A \otimes \Hilb'_A$ be any purification of $\rho_A$, with $\dim(\Hilb_A') \geq \dim(\Hilb_A)$. By Uhlmann's theorem applied to $F(\rho_A, v \cdot O_x)$, we have
    \begin{equation}
        F(\rho_A, v\cdot O_x) = \sum_\alpha v_\alpha \braket{\Psi | O_x^{(\alpha)} \otimes U_x | \Psi} = \sum_\alpha v_\alpha \langle \Psi |  \Phi_x^{(\alpha)} \rangle.
    \end{equation}
    Then, by Uhlmann's theorem applied to $F(\rho_A; O_{xy})$, we have
    \begin{equation}
    \begin{aligned}
       F(\rho_A; O_{xy}) & = \max_{U \in \mathcal{U}(\Hilb'_A)} \left|\frac{1}{n} \sum_\alpha \langle\Psi | O_x^{(\alpha)} O_y^{(\alpha)^\dagger} \otimes U | \Psi \rangle \right| \\
        & \geq 
        \frac{1}{n} \sum_\alpha \langle\Psi | O_x^{(\alpha)} O_y^{(\alpha)^\dagger} \otimes U_x U_y^\dagger | \Psi \rangle  \\
        & =   \frac{1}{n} \sum_\alpha \langle \Phi_y^{(\alpha)} | \Phi_x^{(\alpha)} \rangle,
    \end{aligned}
    \end{equation}
    Averaging over $x, y \in \Omega$ in the final equation above, we have
    \begin{align}
        \sum_{x,y \in \Omega} F(\rho_A; O_{xy}) & \geq \frac{1}{n} \sum_{\alpha} \langle \Phi^{(\alpha)} |  \Phi^{(\alpha)} \rangle,
    \end{align}
    where we have defined the vector $|\Phi^{(\alpha)} \rangle  = \sum_{x \in \Omega} | \Phi_x^{(\alpha)} \rangle $. Lastly, using the Cauchy-Schwarz inequality,
    \begin{equation}
    \begin{aligned}
        \sum_\alpha \langle \Phi^{(\alpha)} | \Phi^{(\alpha)} \rangle & \geq  \sum_\alpha \left|\langle \Psi | \Phi^{(\alpha)} \rangle \right|^2\\
         & \geq  \left| \sum_\alpha v_\alpha \langle \Psi | \Phi^{(\alpha)} \rangle \right|^2\\
        & \geq  \left|\sum_x   F(\rho_A, v \cdot O_x)  \right|^2,
    \end{aligned}
    \end{equation}
    which establishes the theorem.
\end{proof}

\section{Discussion}
\label{sec:discussion}

In this work, we introduced the notion of local strong-to-weak spontaneous symmetry breaking (SW-SSB) Eq.~\eqref{eq:FOP}, based on the local fidelity (or R\'enyi) correlator, Eq.~\eqref{eq:LFC}. The main advantage, compared to the previous framework based on global fidelity or R\'enyi correlators, is that only reduced density matrices on local regions are required. Conceptually, this local feature allowed the notion of SW-SSB to be defined directly in the thermodynamic limit, putting the notion on a more solid theoretical ground. More practically, the local notion of SW-SSB resolved previous challenge on scalability: in the absence of any prior knowledge of the state, previous global measures require exponential amount of measurement resource. In contrast, for a $d$ dimensional system with linear size $L$, with ${\rm Poly}(L)$ amount of resource, our local measure can detect SW-SSB up to length scale $O[(\log L)^{1/d}]$.  

In practice, the state in a physical system may have certain physical structures that allow one to measure the SW-SSB order more efficiently. This is the case in the recent experimental detection of SW-SSB order in a dephased cold Fermi gas~\cite{SWSSBExp}: the initial state was close to a free-fermion Gaussian state, which allowed an efficient variational estimation of the R\'enyi correlators. Even in such systems, we expect our local fidelity correlator to allow for more efficient detection of the SW-SSB order, especially as the system size scales up. 

We end by discussing some future directions:

\begin{itemize}
    \item \textit{Connection to thermalization}: in Sec.~\ref{sec:localbutnotglobal} we discussed eigenstate-thermalized states as an example with local, but not global, SW-SSB order. Earlier results also suggested strong connections between SW-SSB and thermalization, including SW-SSB in Gibbs thermal states~\cite{lessa2025strong} and emergent hydrodynamics from $U(1)$ SW-SSB in steady states~\cite{huang_hydrodynamics_2025}. This emerging picture suggests a deeper relation between SW-SSB and thermalization. For example, it may be possible to understand thermalization itself as (local) SW-SSB associated with the continuous time-translation symmetry -- the discrete time-translation version of this idea has already appeared in Ref.~\cite{GarrattChalker2021}. A natural order parameter in this spirit can be constructed as follows: for a given state $\rho$ (or the infinite-system version $\omega$), take a generic local operator $O_x$ and subtract off the expectation value $\hat{O}_x:=O_x-\langle O_x\rangle$, and normalize it so that $\norm{\hat{O}_i}_\infty=1$. If the state has strong time-translation symmetry, i.e. a fixed energy, and the spectrum is non-degenerate, then $\hat{O}_x\rho \hat{O}_x^{\dagger}$ lives in orthogonal energy sectors. This motivates the following local fidelity correlator detecting time-translation SW-SSB:
    \begin{equation}
        \lim_{|A|\to\infty}F(\rho_A, O_x-\langle O_x\rangle).
    \end{equation}
    Interestingly, as long as we choose sufficiently generic local operators, the order parameter is not sensitive to the exact form of Hamiltonian. This is hinting towards a definition of ``local thermalization''. It will be interesting to ask whether this form of local thermalization is related to other notions of thermalization in the literature.

    \item \textit{Connection to charge sharpening and scrambling:} It has been previously pointed that SW-SSB is related to the phenomenon of charge fuzziness found in monitored dynamics under symmetry constraints~\cite{agrawal_entanglement_2022, barratt_transitions_2022, singh_mixedstate_2026, vijay_holographically_2025}, and more recently to charge scrambling~\cite{lee_charge_2026}. The common ground between these concepts is the nonlocal spread of local charge degrees of freedom. Since our work defines SW-SSB via local density matrices, it appears to be a natural framework to discuss these relations. For instance, the order-disorder inequalities of Sec.~\ref{sec:order-disorder} imply a quantitative tradeoff between the LFC and how close $\rho_A$ can be to a strongly symmetric state, which is the extreme limit of a state with localized charges within $A$. These, however, are only bounds coming from measures of strong asymmetry, which is not the only information carried by the (local) SW-SSB order. For these reasons, we leave the investigation of more precise connections to future studies.

    \item \textit{Local fidelity as defect problem}: as we have demonstrated in Sec.~\ref{sec:examples}, in critical states, the calculation of the local fidelity or R\'enyi correlators becomes an interesting defect problem on its own. For CFT ground states, we showed (Sec.~\ref{sec:CFT}) that the R\'enyi correlators are given by the ordinary two-point function, up to a universal overall constant undetermined in dimension $d>1$. For more general types of critical states, the Fermi metal examples (Sec.~\ref{sec:metals}) clearly showed that the R\'enyi correlator scales differently from ordinary two-point function. It is an interesting problem to better understand local fidelity/R\'enyi correlators as defects at other types of criticality, such as Liftshitz-type of criticality (with dynamical exponent $z\neq1$), or other interacting critical states without translation symmetry. Another related question concerns the purification picture of the local fidelity correlator (Theorem~\ref{thm:pure_state_non_local_charge}): can we find some examples for which we know the optimized, possibly non-local, operator $O_{A^c}$ that gives the LFC?  Can we understand the difference in scaling between the fidelity and the two-point functions for the critical examples, which also imply a difference between MI and CMI scaling?
  
    \item \textit{Simplified numerical scaling}: In defining SW-SSB solely through the local density matrix, we allow the global state to be infinitely large. In cases where the local density matrix $\rho_A$ of an infinitely large system is accessible --- for example, in decoherence problems $\rho_A = \Tr_{A^c} \E[\rho_0]$ where the reduced noise channel $\Tr_{A^c} \circ \E$ has a simple description --- the local fidelity correlator $F(\rho_A; O_x)$ has the advantage of having only one relevant length scale, namely the distance $\ell$ between $x$ and the boundary $|\partial A|$ (assuming an isotropic region $A$). This makes the scaling ansatz of critical (mixed) states only depend on $\ell$ through some power law, whereas normally there would be an additional dependence on an universal function $f(\ell/L)$, where $L$ is the total system size. We expect this simplification to also make the estimation of critical data (e.g. transition points and critical exponents) more accurate given the same computational resources, but leave its practical verification through examples for the future.

 \end{itemize}

\textit{Note added}: for complementary parallel works on the local formulation of strong-to-weak symmetry breaking, see the works by Liu, Yi and Else in Ref.~\cite{LiuYiElse2026}, and by Zhang in Ref.~\cite{Zhang2026}. 

For complementary perspective on order-disorder relation in the context of ground states with quenched disorder, developed in parallel to ours by Yi and Wang, see Ref.~\cite{YiWang2026}.

\begin{acknowledgments}
    We acknowledge illuminating discussions with Dominic Else, Tarun Grover, Ruizhi Liu, Zack Weinstein, Cenke Xu, Jinmin Yi, Yizhi You and Carolyn Zhang. This research was supported in part by grant NSF PHY-2309135 to the Kavli Institute for Theoretical Physics (KITP). We acknowledge support from the Natural Sciences and Engineering Research Council of Canada (NSERC) under Discovery Grant No. RGPIN-2020-04688 (F.D. and L.A.L.), No. RGPIN-2018-04380 (L.A.L.) and No. RGPIN-2023-03467 (F.D.).  This work was also supported by an Ontario Early Researcher Award. Research at Perimeter Institute is supported in part by the Government of Canada through the Department of Innovation, Science and Industry Canada and by the Province of Ontario through the Ministry of Colleges and Universities.
\end{acknowledgments}

\appendix

\section{Unitary order parameters suffice}
\label{app:unitary_OP}

Here, we prove that it suffices to assume that the order parameter in the LFC is unitary whenever one such case exists, for the purpose of diagnosing local SW-SSB order. For that, we will need the following lemma:

\begin{lemma}\label{lemma:russo_dye_finite}
    For any subnormalized matrix $M$, $\norm{M}_\infty \leq 1$, there exist unitaries $U_1$ and $U_2$ such that
    \begin{equation}
        M = \frac{U_1 + U_2}{2}
    \end{equation}
\end{lemma}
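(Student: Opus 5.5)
The plan is to prove the lemma with the singular value decomposition, reducing it to the scalar statement that every complex number in the closed unit disk is the average of two unimodular numbers. Since $M$ acts on a finite-dimensional space, we may write $M = V \Sigma W^\dagger$ with $V, W$ unitary and $\Sigma = \mathrm{diag}(s_1,\dots,s_d)$. The subnormalization $\norm{M}_\infty \le 1$ says exactly that every singular value satisfies $0 \le s_k \le 1$.

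The key step handles each singular value separately. For $s \in [0,1]$, write
\begin{equation}
    s = \frac{e^{i\theta} + e^{-i\theta}}{2} = \cos\theta, \qquad \theta = \arccos s \in [0,\pi/2].
\end{equation}
With $\theta_k \defeq \arccos s_k$, this lets us define diagonal unitaries $D_\pm \defeq \mathrm{diag}(e^{\pm i\theta_1},\dots,e^{\pm i\theta_d})$, so that $\Sigma = (D_+ + D_-)/2$. Equivalently, $D_\pm = \Sigma \pm i\sqrt{\one - \Sigma^2}$, which is well defined precisely because $\Sigma^2 \le \one$.

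We then set $U_1 \defeq V D_+ W^\dagger$ and $U_2 \defeq V D_- W^\dagger$. Each is a product of unitaries and hence unitary, and linearity gives $(U_1+U_2)/2 = V\Sigma W^\dagger = M$. The only substantive point is the scalar decomposition $s = \cos\theta$, which is where the hypothesis $\norm{M}_\infty\le 1$ enters.

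If one prefers to avoid the SVD, there is a polar-decomposition variant. Write $M = U|M|$ with $U$ unitary, which is possible in finite dimensions. Then take $U_{1,2} = U\bigl(|M| \pm i\sqrt{\one - |M|^2}\bigr)$. The operators $|M| \pm i\sqrt{\one-|M|^2}$ are unitary by functional calculus, because $|M|$ and $\sqrt{\one-|M|^2}$ commute.
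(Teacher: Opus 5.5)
Your proof is correct and takes essentially the same approach as the paper's. Both take the singular value decomposition $M = V\Sigma W^\dagger$, write each singular value as $\sigma_j = \tfrac{1}{2}\bigl(e^{i\arccos\sigma_j} + e^{-i\arccos\sigma_j}\bigr)$, and sandwich the resulting diagonal unitaries between the singular-vector unitaries; your polar-decomposition variant is the same construction written basis-free, and it is also valid.
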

\begin{proof}
    Let $M = U \Sigma V^\dagger$ be the singular value decomposition of $M$, where $U, V$ are unitary and $\Sigma = \text{diag}(\sigma_1, \dots, \sigma_n)$. Since $\|M\|_\infty \leq 1$, we have $0 \leq \sigma_j \leq 1$ for all $j$.

    Define diagonal matrices $D_1$ and $D_2$ with entries:
    \begin{equation}
        (D_{1,2})_{jj} = \sigma_j \pm i\sqrt{1 - \sigma_j^2} = e^{\pm i \arccos(\sigma_j)}
    \end{equation}
    Note that $|(D_{1,2})_{jj}| = 1$, so $D_1$ and $D_2$ are unitary. Their average satisfies:
    \begin{equation}
        \frac{D_1 + D_2}{2} = \Sigma
    \end{equation}
    Define $U_1 = U D_1 V^\dagger$ and $U_2 = U D_2 V^\dagger$. Being products of unitaries, $U_1$ and $U_2$ are unitary. Finally,
    \begin{equation}
        \frac{U_1 + U_2}{2} = U \left( \frac{D_1 + D_2}{2} \right) V^\dagger = U \Sigma V^\dagger = M \qedhere
    \end{equation}
\end{proof}

\begin{theorem}
    For any subnormalized charged operator $O$, $\norm{O}_\infty \leq 1$, there exists a unitary $V$ of equal support satisfying
    \begin{equation}
        F(\rho; V) \geq F(\rho; O).
    \end{equation}
    Moreover, if there exists a charged unitary with the same charge and smaller than or equal support than $O$, then $V$ can also be assumed to have the same charge and support as $O$.
\end{theorem}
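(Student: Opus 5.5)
The plan is to write the one-point fidelity as a trace norm and then use convexity together with Lemma~\ref{lemma:russo_dye_finite}. Indeed,
\begin{equation}
F(\rho;O)=\Tr\sqrt{\sqrt{\rho}\,O\rho O^\dagger\sqrt{\rho}}=\norm{\sqrt{\rho}\,O\sqrt{\rho}}_1 ,
\end{equation}
since $\sqrt{\rho}\,O\rho O^\dagger\sqrt{\rho}=XX^\dagger$ with $X=\sqrt{\rho}\,O\sqrt{\rho}$. Thus $O\mapsto F(\rho;O)$ is a seminorm in $O$, and in particular it is convex.

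For the first statement, I would apply Lemma~\ref{lemma:russo_dye_finite} to $O$ viewed as a matrix on the Hilbert space of its support $S$. This gives unitaries $U_1,U_2$ on $\Hilb_S$ with $O=\tfrac12(U_1+U_2)$. Tensoring with the identity on the rest of the system keeps them unitary and keeps their support inside $S$. The triangle inequality then gives
\begin{equation}
F(\rho;O)\le \tfrac12\bigl(F(\rho;U_1)+F(\rho;U_2)\bigr)\le \max_i F(\rho;U_i),
\end{equation}
so choosing $V$ to be the maximizing $U_i$ finishes this part.

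For the charged refinement, the issue is that $U_1,U_2$ need not carry the charge $\mu$ of $O$. Projecting them onto the charge-$\mu$ sector with $\Pi_\mu$ would fix the charge but destroy unitarity, so I would avoid that route. Instead, let $W$ be the assumed charged unitary with charge $\mu$ and $\supp W\subseteq S$. Then $M\defeq OW^\dagger$ is neutral, supported in $S$, and satisfies $\norm{M}_\infty\le1$. I will use that $M$ commutes with the on-site symmetry $U_{S,g}$ (the part acting outside $S$ commutes trivially). Because $M$ commutes with every $U_{S,g}$, it preserves each charge sector $\Hilb_S^{(\lambda)}$ and is block diagonal. I would then perform the singular value decomposition of Lemma~\ref{lemma:russo_dye_finite} separately inside each block, obtaining block-diagonal unitaries $U,V$ and $D_{1,2}$. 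Setting $U_i=UD_iV^\dagger$, each $U_i$ is unitary, commutes with the symmetry, and satisfies $M=\tfrac12(U_1+U_2)$. Equivalently, one can take $U_i=P\bigl(|M|\pm i\sqrt{1-|M|^2}\bigr)$ with $P$ a symmetric unitary polar factor, again constructed blockwise.

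Finally, define $V_i\defeq U_iW$. Each $V_i$ is a unitary with charge $\mu$ and support in $S$, and $O=MW=\tfrac12(V_1+V_2)$. The same convexity argument then yields a charged unitary $V$ with $F(\rho;V)\ge F(\rho;O)$. The main obstacle is exactly this symmetric decomposition step. The point to check carefully is that the unitary completion of the polar factor, which is needed when $M$ is singular, can be chosen inside each charge block; this works because each block is a finite-dimensional space of its own. The assumption that a charged unitary exists is used only to shift the problem to the neutral operator $M$.
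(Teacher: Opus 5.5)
Your proposal is correct and follows the paper's proof: write $O=\tfrac12(U_1+U_2)$ via Lemma~\ref{lemma:russo_dye_finite} and use convexity in $O$. For the charged case, both arguments pass to a neutral operator using the assumed charged unitary and decompose it blockwise in the charge sectors. The differences are cosmetic: you obtain convexity from $F(\rho;O)=\norm{\sqrt{\rho}\,O\sqrt{\rho}}_1$ rather than from Uhlmann's theorem with a maximization over the ancilla unitary, and you form the neutral operator as $OW^\dagger$ rather than $V'^\dagger O$.
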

\begin{proof}
    By Uhlmann's theorem, given a purification $\ket{\xi_\rho} \in \Hilb \otimes \Hilb'$, $\Hilb' \simeq \Hilb$, of $\rho \in \mathcal{L}(\Hilb)$, there exists an unitary $W \in \mathcal{U}(\Hilb')$ acting on the ancillary space $\Hilb'$ such that
    \begin{equation}
        F(\rho, O \rho O^\dagger) = |\langle \xi_\rho | O \otimes W | \xi_\rho \rangle|.
    \end{equation}
    By Lemma \ref{lemma:russo_dye_finite}, we can decompose $O$ as the average of unitaries $U_1$ and $U_2$ supported on the same region as $O$, so that
    \begin{align}
        & F(\rho, O \rho O^\dagger) \\
        & = \left| \frac{1}{2}\braket{\xi_\rho | U_1 \otimes W | \xi_\rho} + \frac{1}{2}\braket{\xi_\rho | U_2 \otimes W | \xi_\rho} \right| \\
        & \leq \max (|\langle \xi_\rho | U_1 \otimes W | \xi_\rho \rangle|, |\langle\xi_\rho | U_2 \otimes W | \xi_\rho \rangle|) \\
        & \leq \max_{W'\in \mathcal{U}(\Hilb')} \max (|\langle \xi_\rho | U_1 \otimes W' | \xi_\rho \rangle|, |\langle\xi_\rho | U_2 \otimes W' | \xi_\rho \rangle|) \\
        & = F(\rho, V \rho V^\dagger),
    \end{align}
    where in the last equation we chose $V$ to be the unitary that maximizes the right-hand side of the second line.

    For the second statement, we can assume the existence of a unitary $V'$ with the same charge and support as $O$. Then, the operator $\tilde{O} = V'^\dagger  O$ commutes with the symmetry action. By Schur's lemma, $\tilde{O}$ has a block-diagonal decomposition $\tilde{O} = \sum_{\mu} \tilde{O}_\mu$, where the sum is over all irreps $\mu$, and $\tilde{O}_\mu$ is supported on the isotypic space of $\mu$. We then apply Lemma \ref{lemma:russo_dye_finite} to each of the blocks, $\tilde{O}_\mu = (\tilde{U}_{1,\mu} + \tilde{U}_{2,\mu})/2$, thus arriving at $O = (U_1 + U_2)/2$, with $U_i = V'\sum_\mu \tilde{U}_{i,\mu}$. By construction, both $U_1$ and $U_2$ are charged operators with charge $\lambda$ and same support as $O$.
\end{proof}

This existence condition above depends solely on the irrep structure of the symmetry action. If all of the irrep sectors have the same dimension, then there will always be unitary charged operators with any charge. This is the case of the usual $\Z_n$ representation via $\prod_i X_i$. Namely, $Z_i^k$ is a single-site unitary that has all possible charges by varying $k \in \{0, \ldots, n-1\}$.

We note that this proof can be readily extended to the two-point fidelity, in the form of $F(\rho, V_x V_y^\dagger \rho V_y V_x^\dagger) \geq F(\rho, O_x O_y^\dagger \rho O_y O_x^\dagger)$, for $V_x$ and $V_y$ localized around sites $x$ and $y$, respectively. One just needs to apply Lemma \ref{lemma:russo_dye_finite} for each $O_x$ and $O_y$ separately.

\section{Connected fidelity is not positive semidefinite}\label{app:counter_Example_PSD_Fc}

The positive-semidefinite property in Sec.~\ref{sec:connected_correlator} is \textit{not} true for the fidelity measure. Indeed, a counterexample is $\rho = \frac{1}{2} \one$ and $O = \{\ketbra{0}{0}, \ketbra{+}{+}, \ketbra{1}{1}\}$. In this case, the connected fidelity matrix is
\begin{equation}
    F_c = \frac{1}{4}
    \begin{pmatrix}
        1 & \sqrt{2} -1 & -1 \\
        \sqrt{2} -1 & 1 & \sqrt{2} -1 \\
        -1 & \sqrt{2} -1 & 1
    \end{pmatrix},
\end{equation}
which is not a positive-semidefinite matrix. For example, for $v = (1, -\sqrt{2}, 1)^T$, we have $v^T F_c v = -(\sqrt{2}-1)^2/2 < 0$. For comparison, the R\'enyi-1 and R\'enyi-2 connected matrices for the same example are
\begin{equation}
    R^{(1)}_c = \frac{1}{4}
    \begin{pmatrix}
        1 & 0 & -1 \\
        0 & 1 & 0 \\
        -1 & 0 & 1
    \end{pmatrix}, \qquad
    R^{(2)}_c = \frac{1}{8}
    \begin{pmatrix}
        3 & 1 & -1 \\
        1 & 3 & 1 \\
        -1 & 1 & 3
    \end{pmatrix},
\end{equation}
which are both positive-semidefinite. 

Even if we ask for the operators $O_x$ to pairwise commute, there are essentially classical counterexamples: By taking a pure state $\rho = \ketbra{\psi}{\psi}$ and diagonal operators $[O_x]_{ij}= (\Omega_x)_i \delta_{ij}$ for which $\braket{\psi | O_x |\psi} = 0$, then $[F_c]_{x,y} = |\text{Cov}(\Omega_x, \Omega_y)|$ with respect to the probability distribution defined by $|\psi_i|^ 2$. Even though the covariance matrix of the random variables $\Omega_x$ is positive semidefinite, taking their entry-wise absolute value may not be. Indeed, it is easy to check that this happens for $\Omega = \{(1,1-1,-1),(\sqrt{2},0,0,-\sqrt{2}),(1,-1,1,-1),(0, -\sqrt{2}, \sqrt{2}, 0)\}$.

\section{Local SW-SSB of a finite-temperature paramagnet}
\label{app:paramagnet}

As an illustrative example, we consider the strongly symmetric density matrix corresponding to a thermal paramagnet, with $\Z_2$ symmetry generated by $U = \prod_{i} X_i$, and the paramagnetic fixed-point Hamiltonian $H=-\sum_i X_i$:
\begin{equation}
\rho_+ = \frac{1}{Z_\beta}\frac{1+U}{2} e^{-\beta H}.
\end{equation}
The partition function is $Z_\beta = \sum_{b_i =\pm1}' e^{-\sum_j b_j}$, where the sum is restricted to configurations satisfying $\prod_i b_i = 1$. It was shown in Ref.~\cite{lessa2025strong} that this state exhibits SW-SSB, with two-point fidelity given in the thermodynamic limit by
\begin{equation}\label{eq:two_point_fidelity_paramagnet}
\lim_{|i-j|\to \infty}F(\rho; Z_iZ_j) = \frac{1}{\cosh^2\beta}.
\end{equation}
As a result, the one-point fidelity is finite at any finite temperature, $\beta < \infty$. It is instructive to see how this arises. 

Although $\rho_+$ is strongly symmetric, the reduced density matrix on $A$ is not, allowing for a non-vanishing one-point fidelity. In fact,
\begin{equation}\label{eq:paramagnet_reduced_density_matrix}
\begin{aligned}
\rho_A &= \frac{1+\epsilon}{2}\rho_+ + \frac{1-\epsilon}{2}\rho_-,
\qquad
\rho_{\pm} = \frac{1}{Z_{\pm}(A)} P_{\pm} e^{-\beta H_A} P_{\pm},
\end{aligned}
\end{equation}
where $H_A = \sum_{i \in A} B_i$ and $U_A = \prod_{i \in A} X_i$ are the restrictions of the onsite Hamiltonian and symmetry to $A$, $P_\pm = \frac{1 \pm U_A}{2}$, and we introduced the parameter $\epsilon = \frac{ \tanh^{|\bar A|}\beta + \tanh^{|A|}\beta}{\tanh^N\beta + 1 }$.

The one-point fidelity is then
\begin{equation}\label{eq:one_point_fidelity_paramagnet}
\begin{aligned}
F(\rho_A; Z_i) &= \frac{1 - \epsilon^2}{2} F(\rho_+, Z_i \rho_- Z_i)\\
&= \frac{\sqrt{1 - t^{2|\bar A|}}}{1 + t^{N}} \frac{1}{\cosh\beta}\\
\end{aligned}
\end{equation}
This shows that the system exhibits SW-SSB for any finite temperature, $\beta \neq \infty$. Note that the thermodynamic and zero-temperature limits do not commute. At finite volume, there is a smooth crossover at temperature $\beta^{-1} \sim 1/N$ from the phase with unbroken strong symmetry to the SW-SSB phase. Once the thermodynamic limit $N \to \infty$ is taken, however, the zero-temperature limit ($\beta \to \infty$) and the large-region limit ($|A| \to \infty$) do commute.

\bibliography{References}
\end{document}